\documentclass[a4paper,USenglish]{lipics-v2018}
%This is a template for producing LIPIcs articles. 
%See lipics-manual.pdf for further information.
%for A4 paper format use option "a4paper", for US-letter use option "letterpaper"
%for british hyphenation rules use option "UKenglish", for american hyphenation rules use option "USenglish"
% for section-numbered lemmas etc., use "numberwithinsect"

\usepackage{microtype}%if unwanted, comment out or use option "draft"
\usepackage{enumitem,amsmath,amssymb}
%\graphicspath{{./graphics/}}%helpful if your graphic files are in another directory
\usepackage{tikz}

\bibliographystyle{plainurl}% the recommnded bibstyle

\title{Random Subgroups of Rationals}

%\titlerunning{}%optional, please use if title is longer than one line

%\author{David Belanger}{National University of Singpore, Singapore}{}{}{}

\author{Ziyuan Gao}{Department of Mathematics, National University of Singapore, Singapore}{matgaoz@nus.edu.sg}{}{}

\author{Sanjay Jain}{School of Computing, National University of Singapore, Singapore}{sanjay@comp.nus.edu.sg}{}{}

\author{Bakhadyr Khoussainov}{Department of Computer Science,
University of Auckland, New Zealand}{bmk@cs.auckland.ac.nz}{}{}

\author{Wei Li}{Department of Mathematics, National University of Singapore, Singapore}{matliw@nus.edu.sg
}{}{}

\author{Alexander Melnikov}{Institute of Natural and Mathematical Sciences, Massey University, New Zealand}{A.Melnikov@massey.ac.nz}{}{}

\author{Karen Seidel}{Hasso Plattner Institute, University of Potsdam, Germany}{karen.seidel@hpi.uni-potsdam.de
}{}{}

\author{Frank Stephan}{Department of Mathematics, National University of Singapore, Singapore}{fstephan@comp.nus.edu.sg
}{}{}

%\author{Lawrence Tin Lok Wong}{National University of Singapore, Singapore}{}{}{}

%\author{John Q. Public}{Dummy University Computing Laboratory\\{[Address, Country]}}{johnqpublic@dummyuni.org}{https://orcid.org/0000-0002-1825-0097}{[funding]}%mandatory, please use full name; only 1 author per \author macro; first two parameters are mandatory, other parameters can be empty.

%\author{Joan R. Public}{Department of Informatics, Dummy College\\{[Address, Country]}}{joanrpublic@dummycollege.org}{[orcid]}{[funding]}

\authorrunning{Z.~Gao, S.~Jain, B.~Khoussainov, W.~Li, A.~Melnikov, K.~Seidel, F.~Stephan}%mandatory. First: Use abbreviated first/middle names. Second (only in severe cases): Use first author plus 'et al.'

\Copyright{Ziyuan Gao, Sanjay Jain, Bakhadyr Khoussainov, Wei Li, Alexander Melnikov, Karen Seidel, and Frank Stephan}%mandatory, please use full first names. LIPIcs license is "CC-BY";  http://creativecommons.org/licenses/by/3.0/

\subjclass{\ccsdesc[500]{Theory of computation~Inductive inference, Theory of computation~Pseudorandomness and derandomization}}% mandatory: Please choose ACM 2012 classifications from https://www.acm.org/publications/class-2012 or https://dl.acm.org/ccs/ccs_flat.cfm . E.g., cite as "General and reference $\rightarrow$ General literature" or \ccsdesc[100]{General and reference~General literature}. 

\keywords{Martin-L\"{o}f randomness, subgroups of rationals, finitely generated subgroups of rationals, learning in the limit, behaviourally correct learning}%mandatory

%\category{}%optional, e.g. invited paper

%\relatedversion{}%optional, e.g. full version hosted on arXiv, HAL, or other respository/website

%\supplement{}%optional, e.g. related research data, source code, ... hosted on a repository like zenodo, figshare, GitHub, ...

\funding{Sanjay Jain was supported in part by NUS grant C252-000-087-001.
Furthermore, Ziyuan Gao, Sanjay Jain and Frank Stephan have been
supported in part by the Singapore Ministry of Education Academic Research Fund Tier 2
grant MOE2016-T2-1-019 / R146-000-234-112.
Bakhadyr Khoussainov was supported by the Marsden fund of Royal Society of New Zealand.
Karen Seidel was supported by the German Research Foundation (DFG) under Grant KO 4635/1-1 (SCL) and by the Marsden fund of Royal Society of New Zealand. }%optional, to capture a funding statement, which applies to all authors. Please enter author specific funding statements as fifth argument of the \author macro.

\acknowledgements{The authors would like to thank Philipp Schlicht and Tin Lok Wong for helpful discussions, as well as 
thank Timo K\"{o}tzing for pointers to the literature.}%optional

%%Editor-only macros:: begin (do not touch as author)%%%%%%%%%%%%%%%%%%%%%%%%%%%%%%%%%%
%\EventEditors{John Q. Open and Joan R. Access}
%\EventNoEds{2}
%\EventLongTitle{42nd Conference on Very Important Topics (CVIT 2016)}
%\EventShortTitle{CVIT 2016}
%\EventAcronym{CVIT}
%\EventYear{2016}
%\EventDate{December 24--27, 2016}
%\EventLocation{Little Whinging, United Kingdom}
%\EventLogo{}
%\SeriesVolume{42}
%\ArticleNo{23}
\nolinenumbers %uncomment to disable line numbering
%%\hideLIPIcs  %uncomment to remove references to LIPIcs series (logo, DOI, ...), e.g. when preparing a pre-final version to be uploaded to arXiv or another public repository
%%%%%%%%%%%%%%%%%%%%%%%%%%%%%%%%%%%%%%%%%%%%%%%%%%%%%%%

%Editor-only macros:: begin (do not touch as author)%%%%%%%%%%%%%%%%%%%%%%%%%%%%%%%%%%
\EventEditors{} %Ioannis Chatzigiannakis and Emanuela Merelli}
\EventNoEds{3}
\EventLongTitle{} %{46th International Colloquium on Automata, Languages and Programming (ICALP 2019)}
\EventShortTitle{} %{ICALP 2019}
\EventAcronym{} %{ICALP}
\EventYear{2019}
\EventDate{} %{July 8--12, 2019}
\EventLocation{} %{Patras, Greece}
\EventLogo{}
\SeriesVolume{}
\ArticleNo{}
% Editor-only macros::end %%%%%%%%%%%%%%%%%%%%%%%%%%%%%%%%%%%%%%%%%%%%%%%

%\input{prefix}

\lstset{%
	basicstyle=\ttfamily,
	columns=fullflexible,
	emphstyle=\emph,
}

\newcommand*{\Txt}{\mathbf{Txt}}
\newcommand*{\Bc}{\mathbf{Bc}}
\newcommand*{\Ex}{\mathbf{Ex}}
\newcommand{\la}{\lambda}

\newcommand{\res}{\!\upharpoonright\!}
\newcommand{\ran}{\mathrm{content}}

\newcommand{\theory}{\mathrm{Th}}

\newcommand*{\todo}[1]{\textcolor{red}{[\textbf{TODO:} #1]}}
\newcommand{\ks}[1]{\marginpar{\tiny{\textcolor{cyan}{K2S:} #1}}}

\newcommand{\N}{\mathbb{N}}
\newcommand{\Z}{\mathbb{Z}}
\newcommand{\sm}{\setminus}
\newcommand{\cF}{\mathcal{F}}
\newcommand{\cG}{\mathcal{G}}
\newcommand{\mg}{\mbox{mg}}
\newcommand{\NE}{\mbox{NE}}
\newcommand{\ve}{\varepsilon}
\newcommand{\T}{\mathcal{T}}
\newcommand{\cE}{\mathcal{E}}
\newcommand{\Lev}{\mathrm{Lev}}
\newcommand{\conv}{\!\downarrow}

\newcommand\spn[1]{{\left\langle#1\right\rangle}}

\theoremstyle{plain}
\newtheorem{proposition}[theorem]{Proposition}
\newtheorem*{claim}{Claim}
\newtheorem{nota}[theorem]{Notation}

\begin{document}

\maketitle

\begin{abstract}
This paper introduces and studies a notion of \emph{algorithmic randomness} for subgroups of rationals.
Given a randomly generated additive subgroup $(G,+)$ of rationals, two main questions are addressed: first, what are
the model-theoretic and recursion-theoretic properties of $(G,+)$; second, what learnability properties can one extract
from $G$ and its subclass of finitely generated subgroups?   
For the first question, it is shown that the theory of $(G,+)$ coincides with that of
the additive group of integers and is therefore decidable; furthermore, while the word problem for $G$ with respect 
to any generating sequence for $G$ is not even semi-decidable,
one can build a generating sequence $\beta$ 
such that the word problem for $G$ with respect to $\beta$ is co-recursively enumerable
(assuming that the set of generators of $G$ is limit-recursive).
In regard to the second question, it is proven that 
there is a generating sequence $\beta$ for $G$ such that every non-trivial finitely generated subgroup
of $G$ is recursively enumerable and the class of all such subgroups of $G$ is behaviourally correctly learnable,
that is, every non-trivial finitely generated subgroup can be semantically identified in the limit
(again assuming that the set of generators of $G$ is limit-recursive).
On the other hand, the class of non-trivial finitely generated subgroups of $G$ cannot be
syntactically identified in the limit with respect to any generating sequence for $G$.
The present work thus contributes to a recent line of research studying algorithmically random
infinite structures and uncovers an interesting connection between the arithmetical complexity of 
the set of generators of a randomly generated subgroup of rationals and the learnability of its
finitely generated subgroups.     
\end{abstract}

\newpage
\section{Introduction}
\label{sec:intro}

The concept of \emph{algorithmic randomness}, particularly for strings and infinite sequences,
has been extensively studied in recursion theory and theoretical computer science \cite{dowhir10,LiV08,nies09}.
It has also been applied in a wide variety of disciplines, including formal language and automata
theory \cite{LiV95}, machine learning \cite{VovkGS99}, and recently even quantum theory \cite{NiesS18}. 
An interesting and long open question is whether the well-established notions of randomness for infinite sequences have analogues
for infinite structures such as graphs and groups.  Intuitively, it might be reasonable
to expect that a collection of random infinite structures possesses the following characteristics:
%\begin{itemize}
%\item 
(1) randomness should be an isomorphism invariant property; in particular, random structures
should not be computable;
%\item 
(2) the collection of random structures (of any type of algebraic structure) should have cardinality equal 
to that of the continuum.
The standard random infinite graph thus does not qualify as an algorithmically random structure;
in particular, it is isomorphic to a computable graph and has a countable categorical theory.
Very recently, Khoussainov \cite{Kho14,Kho15} defined algorithmic randomness for infinite structures 
that are akin to graphs, trees and finitely generated structures.
%\end{itemize}  

This paper addresses the following three open questions in algorithmic
randomness: (A) is there a reasonable way to define algorithmically random structures for standard
algebraic structures such as groups; (B) can one define algorithmically randomness for
groups that are not necessarily finitely generated; 
(C) what are the model-theoretic properties of algorithmically random structures?
The main contribution of the present paper is to answer these three questions positively for a 
fundamental and familiar algebraic structure, the \emph{additive group of rationals}, 
denoted $(\mathbb{Q},+)$. % the class of subgroups of $\mathbb{Q}$.
Prior to this work, question (A) was answered for structures such as
\emph{finitely generated} universal algebras, connected graphs, trees of bounded degree and
monoids \cite{Kho14}.  
Concerning question (C), it is still unknown whether the first order theory of
algorithmically random graphs (or trees) is decidable. In fact, it is not even
known whether any two algorithmically random graphs (of the same bounded degree)
are elementarily equivalent~\cite{Kho14}.

As mentioned earlier, one goal of this work is to formulate a notion of randomness for 
subgroups of $(\mathbb{Q},+)$.  This is a fairly natural class of groups to consider, 
given that the isomorphism types of its subgroups have been 
completely classified, as opposed to the current limited state of knowledge about the isomorphism 
types of even rank $2$ groups.  As has been known since the work of Baer \cite{Bae37},
the subgroups of $(\mathbb{Q},+)$ coincide, up to isomorphism, with the torsion-free Abelian groups of rank $1$.
%Considering subgroups of $(\mathbb{Q},+)$ is justified by the class being clean in
%the sense that the isomorphism types of its subgroups are fully described.
Moreover, the group $(\mathbb{Q},+)$ is robust enough that it has uncountably many algorithmically 
random subgroups (according to our definition of algorithmically random subgroups of $(\mathbb{Q},+)$), which
contrasts with the fact that there is a unique standard random graph up to isomorphism.
At the same time, the algorithmically random subgroups of
$(\mathbb{Q},+)$ are not too different from one other in the sense that they are all elementarily equivalent
(a fact that will be proven later), 
which is similar to the case of standard random graphs being elementarily equivalent.
%Further,  %subgroups of $(\mathbb{Q},+)$ are torsion-free and hence enjoy a number
%of properties such as having no non-trivial finite subgroups. %a reasonable choice because they are
%torsion-free.  
%In this paper, we did not extend the definition of algorithmic randomness to
%\emph{all} Abelian groups; such a general definition is very likely to be %an impossible task 
%out of reach for current methods due to the fact that the isomorphism types of even rank two 
%groups (subgroups of $(\mathbb{Q}^2,+)$) are still unknown.
    
The properties of the subgroups of $(\mathbb{Q},+)$ were first systematically studied by 
Baer \cite{Bae37} and then later by Beaumont and Zuckerman \cite{BeaZ51}. % later showed,  
%As is known since the 1950's %by results of \ref{beaumont1951characterization} 
%every subgroup $G$ of $(\mathbb{Q},+)$ admits a rather simple characterisation: there is an integer $z$ together with a sequence $(n_i)_{i<\omega} \in \mathbb{N}\cup\{\infty\}$ such that $G$ consists of all rationals of the shape $a \cdot z \cdot y^{-1}$, where 
%$y$ is a product
%of finitely many prime powers $p_{i_0}^{m_{i_0}},\ldots,p_{i_{\ell}}^{m_{i_{\ell}}}$ with $m_{i_j} < n_{i_j}$ for all
%$j \in \{0,\ldots,\ell\}$ %divides   
%$$G = \left\{ \frac{a\cdot z}{\prod_{i=0}^k p_i^{m_i}} \mid a\in\mathbb{Z}, k\in\mathbb{N}, \forall i\leq k (\, m_i\in\mathbb{N} \wedge m_i< n_i \,) \right\},$$
%and $p_i$ denotes the $(i+1)$-st prime for all $i \in \N$.  
%More recently,
Later, the group $(\mathbb{Q},+)$
was studied in the context of automatic structures \cite{Tsa11}.
%Another notable property is that every finitely generated subgroup of $(\mathbb{Q},+)$ is generated by
%a single rational number \cite{Bae37}.
%Further, it is easily seen that every finitely generated subgroup of  $(\mathbb{Q},+)$ is generated by a single fraction.
An early definition of a random group is due to Gromov \cite{Gro03}.  
According to this definition, random groups are those 
obtained by first fixing a set of generators, and then randomly choosing 
(according to some probability distribution) the relators specifying the 
quotient group.
%According to this definition, random groups are those obtained by first fixing a set of generators, 
%and then for any $n$ randomly choosing (according to some given probability distribution) a relator of length $n$.   
An alternative definition of a general random infinite structure was proposed by Khoussainov \cite{Kho14,Kho15};
this definition is based on the notion of a \emph{branching class}, which is in turn used to define Martin-L\"{o}f
tests for infinite structures entirely in analogy to the definition of a Martin-L\"{o}f test for sequences.
An infinite structure is then said to be Martin-L\"{o}f random if it passes every
Martin-L\"{o}f test in the preceding sense.
%The alternative definition is based on the notion of a branching class, which consists of the direct limits 
%of any chain of finite structures and embeddings satisfying certain criteria. 
%One can place a topology and measure on a branching class of infinite structures and define a Martin-L\"{o}f random
%test for a structure belonging to this branching class entirely in analogy to the definition of a 
%Martin-L\"{o}f test for sequences; a structure is then said to be Martin-L\"{o}f random if it passes every
%Martin-L\"{o}f test in the preceding sense.  It is not immediately clear in the first place that
%for any given type of algebraic structure (a subgroup of rationals in our case), Martin-L\"{o}f random structures of such a type even exist. 
The existence of a branching class of groups, and thus of continuunm many Martin-L\"{o}f random groups,
was only recently established \cite{HaKT18}.    
 
%The present paper
%We formulate a notion of algorithmic randomness specifically for subgroups of $(\mathbb{Q},+)$;
Like Gromov's definition of a random group, the one adopted in the present work is syntactic, in 
contrast to the semantic and algebraic definition due to Khoussainov.  However, rather than selecting 
the relators at random according to a prescribed probability distribution for a fixed set of generators, 
our approach is to directly encode a Martin-L\"{o}f random binary sequence into the generators of the subgroup.    
More specifically, we fix any binary sequence $R$, and define the group
$G_R$ as that generated by all rationals of the shape $p_i^{-n_i}$, where $p_i$ denotes
the $(i+1)$-st prime and $n_i$ is the number of ones occurring between the $i$-th and
$(i+1)$-st occurrences of zero in $R$; $n_0$ is the number of starting ones, and if there is no 
$(i+1)$-st zero then $n_j$ is 
defined to be zero for all $j$ greater than $i$ and $G_R$ is generated by all $p_{i'}^{-n_{i'}}$
with $i'$ less than $i$ and all $p_i^{-n'}$ such that $n'$ is any positive integer.
$G_R$ is then said to be \emph{randomly generated} if and only if $R$ is Martin-L\"{o}f
random.  In order to derive certain computability properties, it will always be assumed
in the present paper that any Martin-L\"{o}f random sequence associated to a 
randomly generated subgroup of $(\mathbb{Q},+)$ is also limit-recursive.  
It may be observed that no finitely generated subgroup of $(\mathbb{Q},+)$
is randomly generated in the sense adopted here; this corresponds to the intuition that
in any ``random'' infinite binary sequence $R$, the fraction of zeroes in the first $n$ bits 
should tend to a number strictly smaller than one as $n$ grows to infinity. 
For a similar reason, no randomly generated subgroup $G_R$ %of $(\mathbb{Q},+)$ 
is infinitely divisible by a prime, that is, there is no prime $p$ such that $p^{-n}$ belongs to $G_R$
for all $n$.  

The first main part of this work is devoted to the study of the model-theoretic and 
recursion-theoretic properties of randomly generated subgroups of $(\mathbb{Q},+)$.
It is shown that the theory of any randomly generated subgroup %of rationals
coincides with that of the integers with addition (denoted $(\Z,+)$), and is therefore
decidable\footnote{For a proof of the decidability of the theory of $(\Z,+)$, often 
known as \emph{Presburger Arithmetic}, see \cite[pages 81--84]{Mar02}.}.
Next, we define %the word problem for randomly generated subgroups $G_R$ of $(\mathbb{Q},+)$
%based on 
the notion of a \emph{generating sequence} for a randomly generated group $G_R$;
this is an infinite sequence $\beta$ such that $G_R$ is generated by the terms of $\beta$.
We then consider the word problem for $G_R$ with respect to $\beta$: in detail,
this is the problem of determining, given any two finite integer sequence representations 
$\sigma$ and $\tau$ of elements of $G_R$
with respect to $\beta$, whether or not $\sigma$ and $\tau$ represent the same
element of $G_R$.  We show that 
the word problem for $G_R$ with respect to \emph{any} generating sequence $\beta$
is never recursively enumerable (r.e.); on the other hand, one can construct a generating
sequence $\beta'$ for $G_R$ such that the corresponding word problem
for $G_R$ is co-r.e.  Moreover, one can build a generating
sequence $\beta''$ for $G_R$ such that the word problem for the quotient group
of $G_R$ by $\Z$ with respect to $\beta''$ is r.e. 

The second main part of this paper investigates the learnability of non-trivial finitely generated subgroups
of randomly generated subgroups of $(\mathbb{Q},+)$ from positive examples, also
known as learning from text.  
Stephan and Ventsov \cite{SteV01} examined the 
learnability of classes of substructures of algebraic structures; the study of
more general classes of structures was undertaken in the work of Martin and Osherson \cite[Chapter III]{MarO98}.
The general objective is to understand how semantic knowledge of a class of concepts
can be exploited to learn the class; in the context of the present problem, semantic
knowledge refers to the properties of every finitely generated subgroup of
any randomly generated subgroup of rationals, such as being generated by a single
rational \cite{Bae37}.  It may be noted that the present work considers learning
of the actual representations of finitely generated subgroups, which are all
isomorphic to each other, as opposed to learning their structures up to
isomorphism, as is considered in the learning framework of Martin and Osherson \cite{MarO98}.
Various positive learnability results are obtained: it will be proven, for example, that
for any randomly generated subgroup $G_R$ of $(\mathbb{Q},+)$, there is a generating sequence
$\beta$ for $G_R$ such that the set of representations of every non-trivial finitely generated 
subgroup of $G_R$
with respect to $\beta$ is r.e.; furthermore, the class of all such representations is
\emph{behaviourally correctly} learnable, that is, all these representations can be identified
in the limit up to semantic equivalence.  On the other hand, it will be seen
that the class of all such representations can never be \emph{explanatorily} learnable, or learnable 
in the limit.  Similar results hold for the class of non-trivial finitely
generated subgroups of the quotient group of $G_R$ by $\Z$.  Thus this facet of our work implies a 
connection between the limit-recursiveness
of the set of generators of a randomly generated subgroup of $(\mathbb{Q},+)$ and the learnability
of its non-trivial finitely generated subgroups. 

\section{Preliminaries}\label{sec:not}

\noindent
Any unexplained recursion-theoretic notation may be found in \cite{Rog67,Soa87,Od89}.
For background on algorithmic randomness, we refer the reader to \cite{dowhir10,nies09}.
We use $\N =\{0,1,2,\ldots\}$ to denote
the set of all natural numbers and $\Z$ to denote the set of all integers.
The $(i+1)$-st prime will be denoted by $p_i$.
$\Z^{<\omega}$ denotes the set of all finite sequences of integers.
Throughout this paper, $\varphi_0,\varphi_1, \varphi_2,\ldots$ is a fixed 
acceptable programming system of all partial recursive functions and 
$W_0,W_1,W_2,\ldots$ is a fixed \emph{acceptable numbering of all recursively 
enumerable} (abbr.~r.e.) \emph{sets} of natural numbers.
We will occasionally work with objects belonging to some countable class $X$ different from $\N$; in 
such a case, by abuse of notation, we will use the same symbol $W_e$ to denote
the set of objects obtained from $W_e$ by replacing each member $x$ with
$F(x)$ for some fixed bijection $F$ between $\N$ and $X$. 
   
%We will occasionally work with finite sequences of integers rather than
%natural numbers; in such a case, by abuse of notation, we will use the same
%symbol $W_e$ to denote the set of finite sequences of integers obtained
%from $W_e$ by replacing each member $x$ with the finite sequence $\sigma$
%mapped to $x$ based on a fixed bijection between $\Z^{<\omega}$
%and $\N$.    %where
%$W_e$ is the domain of $\varphi_e$ for all $e\in\N$. 

Given any set $S$, 
%use~$\overline{S}$ to denote the complement of $S$, and 
$S^*$ denotes the set of all finite sequences of elements from~$S$. By
$D_0,D_1,D_2,\ldots$ we denote any fixed \emph{canonical indexing of all finite sets}
of natural numbers. 
%We recall that
Cantor's pairing function $\langle \,\cdot\,,\,\cdot\,\rangle\colon \N\times\N\to\N$
is given by $\langle x,y\rangle =\frac{1}{2}(x+y)(x+y+1)+y$ 
for all $x,y\in\N$.
The symbol $K$ denotes the \emph{diagonal  
halting problem}, i.e.,  $K=\{e\mid e\in\N,\; \varphi_e(e)~\mbox{converges}\}$.
The \emph{jump} of $K$, that is, the relativised halting problem $\{e\mid e \in \N; \varphi_e^K(e)\conv\}$, 
will be denoted by $K'$.
%Furthermore, 
\iffalse
We use $K_s$ to denote an effective $s$-th approximation to $\K$: 
where, $K_s \subseteq K_{s+1}$, $\bigcup_s K_s=K$,
and without loss of generality,
$\K_s\subseteq \{0,1,\ldots,s\} \cap \K$
and the set $\{\langle x,s\rangle:x\in \K_s\}$ is primitive recursive.
\fi

%Moreover, the \emph{jump} of a set $A\subseteq \N$ is denoted by $A'$ and denotes the
%relativised halting problem $A'=\{e:e\in\N,\; \varphi_e^A(e)\conv\}$.

%An r.e.\ set $S$ is \emph{simple} if it is coinfinite
%and intersects every infinite r.e. set.   
%Sanjay: removed simple as it doesn't seem to have been used.
%For any two sets $A$ and $B$, we define 
%$A\oplus B=\{2x:x\in A\}\cup \{2y+1:y\in B\}$.  
%Analogously, one defines $A\oplus B\oplus C=\{3x:x\in A\}\cup \{3y+1:y\in B\}
%\cup \{3z+2:z\in C\}$.

%We continue with some technical notations needed for our definitions of
%variants of learnability. In the following we always assume that $\#\notin\N$.
%Furthermore, 
For $\sigma\in (\N\cup \{\#\})^{\ast}$ and $n\in\N$
we write $\sigma(n)$ to denote the element in the $n$-th position of 
$\sigma$. Further, $\sigma[n]$ denotes the
sequence $\sigma(0),\sigma(1),\ldots,\sigma(n-1)$.
Given a number $a\in\N$ and some fixed $n\in\N$, 
$n\geq 1$, we denote by $a^n$ the finite sequence $a,\ldots, a$, 
where $a$ occurs exactly $n$ times.  Moreover, 
we identify $a^0$ with the empty string $\ve$.
For any finite sequence $\sigma$ we use
$|\sigma|$ to denote the length of $\sigma$.  
The concatenation of two sequences $\sigma$ and $\tau$ is
denoted by $\sigma\circ\tau$; for convenience, and whenever there is no
possibility of confusion, this is occasionally
denoted by $\sigma\tau$.  For any sequence $\beta$ (infinite or otherwise)
and $s < |\beta|$, $\beta\res_s$ denotes the initial segment of $\beta$
of length $s+1$.
For any $m \geq 1$ and $p \in \Z$,
$I_m(p)$ denotes the vector of length $m$ whose first $m-1$ coordinates
are $0$ and whose last coordinate is $p$.
Furthermore, given two vectors $\alpha = (a_i)_{0\leq i\leq m}$
and $\beta = (b_i)_{0 \leq i \leq m}$ of equal length,
$\alpha \cdot \beta$ denotes the scalar product of
$\alpha$ and $\beta$, that is, $\alpha \cdot \beta :=
\sum_{i=0}^m a_ib_i$.  For any $c \in \Z$ and $\sigma := (b_i)_{0 \leq i \leq m} \in \Z^{<\omega}$,
$c\sigma$ denotes the vector obtained from $\sigma$ by coordinatewise multiplication with $c$, that is, $c\sigma := (cb_0,cb_1,\ldots,cb_m)$. 
For any non-empty $S \subseteq \mathbb{Q}$, $\spn{S}$ denotes 
$\{\sum_{i=0}^k c_i s_i\mid k \in \N \wedge c_i \in \Z \wedge s_i \in S\}$. 

Cantor space, the set of all infinite binary sequences, will be denoted by $2^{\omega}$.
The set of finite binary strings will be denoted by $2^{< \omega}$.
For any binary string $\sigma$, $[\sigma]$ denotes the cylinder generated by
$\sigma$, that is, the set of infinite binary sequences with prefix $\sigma$.
For any $U \subseteq 2^{< \omega}$, the open set generated by $U$ is
$[U] := \bigcup_{\sigma \in U} [\sigma]$.  The Lebesgue measure on $2^{\omega}$
will be denoted by $\la$; that is, for any binary string $\sigma$,
$\la([\sigma]) = 2^{-|\sigma|}$.
By the Carathéodory Theorem, this uniquely determines the Lebesgue measure on the Cantor space.

\section{Randomly Generated Subgroups of Rationals}\label{sec:StringRandomGroups}
\label{sec:prelims}

We first review some basic definitions and facts in algorithmic randomness which in our setting is always understood w.r.t the Lebesgue measure.
An \emph{r.e.\ open set} $R$ is an open set generated by an r.e.\ set of binary strings. %, that is, $\{\sigma: [\sigma] \subseteq R\}$ is r.e. %% KS: I don't think this is true in general and we don't need it.
%Sets of natural numbers can be identified with infinite binary sequences;
Regarding $W_e$ as a subset of $2^{<\omega}$, one has an enumeration
$[W_0],[W_1],[W_2],\ldots$ of all r.e.\ open sets.
A \emph{uniformly r.e.\ sequence $(G_m)_{m < \omega}$ of open sets}
is given by a recursive function $f$ such that $G_m = [W_{f(m)}]$ for
each $m$.  As infinite binary sequences may be viewed as characteristic
functions of subsets of $\N$, we will often use the term ``set''
interchangeably with ``infinite binary sequence''; in particular,
the subsequent definitions apply equally to subsets of $\N$ and
infinite binary sequences.  

Martin-L\"{o}f \cite{lof66} defined randomness based on tests.  A \emph{Martin-L\"{o}f 
test} is a uniformly r.e.\ sequence $(G_m)_{m < \omega}$ of open sets such that 
$(\forall m < \omega)[\la(G_m) \leq 2^{-m}]$.
A set $Z \subseteq \N$ \emph{fails} the test if $Z \in \bigcap_{m < \omega} G_m$;
otherwise $Z$ \emph{passes} the test.
$Z$ is \emph{Martin-L\"{o}f random} if $Z$ passes each Martin-L\"{o}f test.   

Schnorr \cite{schnorr71} showed that Martin-L\"{o}f random sets can be described via martingales.  A \emph{martingale} is a function $\mg:2^{<\omega} \rightarrow \mathbb{R}^+ \cup \{0\}$
that satisfies for every $\sigma \in 2^{<\omega}$ the equality $\mg(\sigma\circ 0) + \mg(\sigma\circ 1)= 2\mg(\sigma)$.
%$$
%\mg(\sigma\circ 0) + \mg(\sigma\circ 1)= 2\mg(\sigma).
%$$
For a martingale $\mg$ and a set $Z$, the martingale $\mg$ \emph{succeeds} on $Z$ if $\sup_n \mg(Z(0)\ldots Z(n)) = \infty$.

\begin{theorem}\cite{schnorr71}\label{thm:martinlofrandommartingale}
For any set $Z$, $Z$ is Martin-L\"{o}f random iff no r.e.\ martingale succeeds on $Z$. 
\end{theorem}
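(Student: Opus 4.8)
The plan is to establish the two implications separately, using in each direction the standard translation between Martin-L\"of tests and lower semi-computable (``r.e.'') martingales: an averaging construction turns open sets of small measure into martingales with small initial value, and conversely a martingale maximal inequality turns a martingale into a test.

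For the implication ``$Z$ not Martin-L\"of random $\Rightarrow$ some r.e.\ martingale succeeds on $Z$'', I would fix a Martin-L\"of test $(G_m)_{m<\omega}$ that $Z$ fails, so $\la(G_m)\leq 2^{-m}$ and $Z\in\bigcap_m G_m$, and write $G_m=[U_m]$ for an r.e.\ set $U_m\subseteq 2^{<\omega}$ obtained uniformly in $m$. For each $m$ put $\mg_m(\sigma):=2^{|\sigma|}\la([\sigma]\cap G_m)$. From $[\sigma]=[\sigma 0]\sqcup[\sigma 1]$ one checks immediately that $\mg_m$ satisfies the martingale identity, that $\mg_m(\ve)=\la(G_m)\leq 2^{-m}$, and that $\mg_m$ is lower semi-computable, since $\la([\sigma]\cap[U_m])$ is the non-decreasing limit of the measures of finite sub-unions of cylinders drawn from an enumeration of $U_m$. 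Set $\mg:=\sum_{m<\omega}\mg_m$; this is again a lower semi-computable martingale, and it is real-valued because $\mg(\ve)=\sum_m\la(G_m)\leq 2$. To see that $\mg$ succeeds on $Z$, fix $M$: since $Z\in G_m$ for every $m\leq M$, some prefix of $Z$ lies in $U_m$, so for $n$ at least the maximal length of such prefixes over $m\leq M$ we get $[Z[n]]\subseteq G_m$ and hence $\mg_m(Z[n])=1$ for all $m\leq M$, whence $\mg(Z[n])\geq M+1$. As $M$ was arbitrary, $\sup_n\mg(Z[n])=\infty$.

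For the converse, suppose an r.e.\ martingale $\mg$ succeeds on $Z$. The main tool is Kolmogorov's (Ville's) inequality: for a non-negative martingale $\mg$ and $\alpha>0$, $\la(\{X:\sup_n\mg(X[n])\geq\alpha\})\leq\mg(\ve)/\alpha$. I would prove this by passing to the prefix-free set $\hat V$ of length-minimal strings $\sigma$ with $\mg(\sigma)\geq\alpha$, truncating to the strings of $\hat V$ of length at most $N$, and using the elementary fact that $\sum_{\sigma\in A}\mg(\sigma)2^{-|\sigma|}\leq\mg(\ve)$ for every prefix-free $A$ (push the martingale identity down to a common length and drop the non-negative remaining terms), then letting $N\to\infty$. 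Now put $V_m:=\{\sigma:\mg(\sigma)>2^m\}$; by lower semi-computability of $\mg$ this is r.e.\ uniformly in $m$, so $([V_m])_{m<\omega}$ is a uniformly r.e.\ sequence of open sets. Since $\mg(\ve)$ is a finite real there is some $m_0$ with $2^{m_0}\geq\mg(\ve)$, and then $\la([V_{m_0+m}])\leq\mg(\ve)/2^{m_0+m}\leq 2^{-m}$, so $(H_m)_{m<\omega}:=([V_{m_0+m}])_{m<\omega}$ is a Martin-L\"of test. Because $\sup_n\mg(Z[n])=\infty$, for every $m$ some prefix of $Z$ lies in $V_m$, so $Z\in\bigcap_m H_m$ fails this test; hence $Z$ is not Martin-L\"of random.

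The bookkeeping items --- the martingale identity for the $\mg_m$, the uniform lower semi-computability of $\mg$ and of the $V_m$, and the $N\to\infty$ limits with prefix-free truncations --- I would keep to a few lines. The one genuinely substantive ingredient is the martingale maximal inequality used in the converse direction; the only subtlety there worth flagging is that $\mg(\ve)$ need not be computable from an index for $\mg$, which is why the argument appeals to the mere existence of $m_0$ rather than computing it --- a Martin-L\"of test only has to exist, not to be produced uniformly.
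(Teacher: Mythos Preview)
The paper does not prove this theorem at all; it is stated with a citation to Schnorr~\cite{schnorr71} and used as a black box (in the proof of Proposition~\ref{prop:recursivegeneratingsequence}). Your argument is correct and is the standard textbook proof: the averaging construction $\mg_m(\sigma)=2^{|\sigma|}\la([\sigma]\cap G_m)$ summed over $m$ for one direction, and Ville's inequality applied to the level sets $V_m=\{\sigma:\mg(\sigma)>2^m\}$ for the other. The point you flag about $m_0$ not being computable from an index for $\mg$ is exactly the right caveat, and your handling of it is fine since only existence of the test is required.
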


\noindent
The following characterisation of all subgroups of $(\mathbb{Q},+)$ forms the basis of our definition of a random subgroup.

\begin{theorem}\cite{BeaZ51}\label{thm:subgroupsrationalschar}
Let $G$ be any subgroup of $(\mathbb{Q},+)$.  Then there is an integer $z$, as well as a sequence
$(n_i)_{i < \omega}$ with $n_ i \in \N \cup \{\infty\}$ such that $G = \left\{\displaystyle\frac{a \cdot z}{\Pi_{i=0}^k p_i^{m_i}}\mid a \in \Z \wedge k \in \N \right. \wedge(\forall i \leq k)[\left. m_i \in \N \wedge m_i < n_i] \right\}$.
%$$
%G = \left\{\displaystyle\frac{a \cdot z}{\Pi_{i=0}^k p_i^{m_i}}\mid a \in \Z \wedge k \in \N \wedge
%(\forall i \leq k)[m_i \in \N \wedge m_i < n_i] \right\}.
%$$
\end{theorem}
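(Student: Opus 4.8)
The plan is to prove Theorem~\ref{thm:subgroupsrationalschar} by using the structure theory of subgroups of $(\mathbb{Q},+)$ directly. First I would recall that every subgroup $G$ of $(\mathbb{Q},+)$ is torsion-free Abelian of rank at most $1$; if $G$ is trivial the claim holds with $z = 0$, so assume $G \neq \{0\}$ and fix a non-zero element $g_0 \in G$. By multiplying through by the denominator we may write $g_0 = z/d$ for integers $z,d$ with $d \geq 1$, and after clearing the greatest common divisor we can arrange that $z$ is precisely the integer we want: the idea is that $z$ records the ``integral part'' contribution, namely the generator of $G \cap \Z$ (rescaled), while the sequence $(n_i)$ records, for each prime $p_i$, how divisible the elements of $G$ are by powers of $p_i$.

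The key steps, in order, would be: (i) For each prime $p_i$, define $n_i \in \N \cup \{\infty\}$ to be the supremum of all $m \in \N$ such that there exists $h \in G$ with $p_i^m h = g_0$ (equivalently, such that $g_0$ is divisible by $p_i^m$ inside $G$); this is the $p_i$-height of $g_0$ in $G$, shifted appropriately so that it is $0$ when $p_i$ already divides the denominator of $g_0$ in lowest terms — so one should really first normalise $g_0$ to have the form $z$ with $z \in \Z$, i.e.\ pick $g_0$ to be a generator of the cyclic group $G \cap \Z$, and set $z$ to be that generator. (ii) Show the inclusion $\supseteq$: any rational of the displayed form $\frac{a z}{\Pi_{i=0}^k p_i^{m_i}}$ with $m_i < n_i$ lies in $G$, because by definition of $n_i$ there is an element of $G$ equal to $z/p_i^{m_i}$ for each $i$, and $G$ is closed under addition and integer scaling; a short argument combining these via the Chinese Remainder Theorem (or just by noting $\frac{z}{\Pi p_i^{m_i}}$ is an integer combination of the $\frac{z}{p_i^{m_i}}$, since the $p_i^{m_i}$ are pairwise coprime) gives membership. (iii) Show the inclusion $\subseteq$: take an arbitrary $g \in G$, write $g = \frac{b}{c}$ in lowest terms; since $g$ and $z$ both lie in the rank-$1$ group $G$, some integer multiple relation holds, and analysing the prime factorisation of $c$ one shows each prime dividing $c$ must be some $p_i$ and its exponent must be strictly below $n_i$, whence $g$ has the required form. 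The divisibility bound here is exactly where the definition of $n_i$ as a height is used: if $p_i^{m}$ divided $c$ with $m \geq n_i$, then $g$ would witness $p_i^{m}$-divisibility of a suitable multiple of $z$, contradicting maximality.

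The main obstacle I expect is step (iii), specifically the bookkeeping that translates ``$g = b/c$ in lowest terms lies in $G$'' into the statement ``every prime power $p_i^{m_i}$ dividing $c$ satisfies $m_i < n_i$, and these $m_i$ are simultaneously achievable''. One has to be careful that the heights $n_i$ are defined relative to the single fixed element $z$ (a generator of $G \cap \Z$) rather than relative to varying elements, and to check that $p_i$-height and $p_j$-height for distinct primes can be realised by a common element of $G$ — this is where rank $1$ is essential, since it forces any two elements to be commensurable, and the coprimality of distinct prime powers lets one patch the local information together. A clean way to organise this is to first prove the lemma that for $g \in \mathbb{Q}$, $g \in G$ if and only if for every prime $p$ the $p$-adic valuation $v_p(g)$ satisfies $v_p(g) \geq v_p(z) - n_i$ (with the convention that $n_i = \infty$ imposes no lower bound), and then the displayed formula is just a repackaging of this valuation criterion. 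I would also remark that the statement is essentially Baer's classification \cite{Bae37} in the concrete coordinates of \cite{BeaZ51}, so the proof can be kept brief by citing that this is a standard fact, but the self-contained argument above via heights and valuations is short enough to include.
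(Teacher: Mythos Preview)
The paper does not prove this theorem: it is stated with the citation \cite{BeaZ51} and used as a background fact underpinning Definition~\ref{Group1}, with no argument supplied. So there is no ``paper's own proof'' to compare against; you have written a self-contained argument for a result the authors simply import from Beaumont and Zuckerman.

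Your approach via $p$-heights and $p$-adic valuations is the standard one and is essentially correct. One small point of bookkeeping: there is an off-by-one mismatch between your definition of $n_i$ (the supremum of $m$ with $z/p_i^m \in G$, i.e.\ the $p_i$-height of $z$) and the $n_i$ appearing in the displayed formula, which uses the strict inequality $m_i < n_i$. Your $n_i$ is the maximal allowed exponent, whereas the theorem's $n_i$ is one more than that (so $n_i = 1$ rather than $0$ when $p_i$ never divides a denominator); this is easy to fix but worth stating cleanly. Your concern in step~(iii) about realising the heights at distinct primes simultaneously is handled exactly as you indicate: if $z/p_i^{m_i} \in G$ for each $i \leq k$ then a B\'ezout combination of these elements yields $z/\prod_{i\leq k} p_i^{m_i} \in G$, since the $p_i^{m_i}$ are pairwise coprime. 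With those adjustments the argument goes through.
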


\begin{definition}
\label{Group1}
Let $R \in 2^\omega$ be a real in the Cantor space, i.e. an infinite sequence of $0$'s and $1$'s.
Then the group $G_R$ is the subgroup of the rational numbers $(\mathbb{Q},+)$ generated by $a_0, a_1, \ldots$ with
$a_i=\frac{1}{p_i^{n_i}}$ for all $i \in \mathbb{N}$,
where for each $i \in \mathbb{N}$, by $p_i$ we denote the $(i+1)$-st prime and by $n_i$ the number of consecutive $1$'s in $R$ between the $i$-th and $(i+1)$-st zero in $R$, with which we let $n_0$ count the number of starting $1$'s.
If there is no $(i+1)$-st zero, we let $n_i := \infty$, meaning that for all $n$ the fraction $\frac{1}{p_i^n}$ is in $G_R$.
\end{definition}

Clearly, $(\mathbb{Z},+)$ is always a subgroup of $G_R$ and $\frac{1}{p_i} \notin G_R$ if and only if the $i$-th and $(i+1)$-st zero in $R$ are consecutive. Thus, if $R$ ends with infinitely many zeros, then $G_R$ is isomorphic to $(\mathbb{Z},+)$.
Moreover, there is a prime $p_i$ such that $\frac{1}{p_j} \notin G_R$ for all $j > i$ and $\frac{1}{p_i^n} \in G_R$ for all $n\in\mathbb{N}$, for short $p_i$ infinitely divides $G_R$, if and only if $R$ ends with an infinite sequence of $1$'s.

\begin{lemma}
If $R \in 2^\omega$ is Martin-Löf random% w.r.t the Lebesgue measure on the Cantor space
, then $n_i$ is finite for every $i \in \mathbb{N}$, where $n_i$ is defined as in Definition \ref{Group1}. %With
In other words, the group $G_R$ is not infinitely divisible by any prime.
\end{lemma}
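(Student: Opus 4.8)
The plan is to reduce the statement to the elementary fact that a Martin-L\"{o}f random real cannot be computable. First I would unwind the definition of $n_i$ from Definition \ref{Group1}: the value $n_i$ equals $\infty$ precisely when $R$ contains no $(i+1)$-st zero, that is, exactly when $R$ has at most $i$ zeros in total. Consequently, if $n_i$ were infinite for some $i$, then $R$ would contain only finitely many zeros, so there would be a position $k$ with $R(j)=1$ for all $j\geq k$. Such an $R$ is computable --- it is determined by the finite string $R[k]$ together with the instruction ``output $1$ forever'' --- which contradicts $R$ being Martin-L\"{o}f random. Hence $n_i<\infty$ for every $i$, and the ``in other words'' reformulation is immediate: by Definition \ref{Group1}, a prime $p_i$ infinitely divides $G_R$ iff $n_i=\infty$, so no prime can infinitely divide $G_R$.

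If one wants an argument phrased solely in terms of the machinery recalled in this section, I would instead build an r.e.\ martingale that succeeds on every real with finitely many zeros and then apply Theorem \ref{thm:martinlofrandommartingale}. For each $k$, let $\mg_k$ be the computable martingale that keeps capital $1$ on all strings of length at most $k$ and thereafter stakes the entire current capital on the next bit being $1$; thus $\mg_k(\sigma)=2^{\,|\sigma|-k}$ when $\sigma$ has length greater than $k$ and $\sigma(j)=1$ for all $j\in[k,|\sigma|)$, and $\mg_k(\sigma)=0$ for every other $\sigma$ of length greater than $k$, so $\mg_k$ succeeds on every real that is constantly $1$ from position $k$ onwards. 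Put $\mg:=\sum_{k\geq 0}2^{-k-1}\mg_k$. Then $\mg(\ve)=1$, and for a fixed string $\sigma$ of length $n$ the series is in fact a finite rational sum, since $\mg_k(\sigma)=1$ once $k\geq n$, so $\mg$ is a computable, hence r.e., martingale; moreover it is a convex combination of martingales and therefore again a martingale. If $R(j)=1$ for all $j\geq k_0$ then $\mg(R[n])\geq 2^{-k_0-1}\mg_{k_0}(R[n])\to\infty$, so $\mg$ succeeds on $R$, and by Theorem \ref{thm:martinlofrandommartingale} such an $R$ is not Martin-L\"{o}f random. (A purely measure-theoretic route also works: the eventually-$1$ reals form a countable set $\{R_0,R_1,\ldots\}$, and letting $G_m$ be the union over $n$ of the cylinder above the length-$(m+n+1)$ prefix of $R_n$ gives a Martin-L\"{o}f test of measure $\sum_n 2^{-m-n-1}\leq 2^{-m}$ that every eventually-$1$ real fails.)

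I do not expect a genuine obstacle here: the conceptual content is just that Martin-L\"{o}f random reals are not computable, and the only real work is the trivial translation ``$n_i=\infty$ for some $i$ $\iff$ $R$ has finitely many zeros $\iff$ $R$ is eventually constantly $1$''. If the self-contained variant is preferred, the mildest bookkeeping is to check that $\mg$ is finite-valued everywhere, is recursive, and satisfies the martingale identity --- all of which are routine given that it is an explicit convex combination of the $\mg_k$.
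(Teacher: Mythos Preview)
Your proposal is correct and follows essentially the same route as the paper: both reduce to the observation that a Martin-L\"{o}f random real must contain infinitely many zeros, which the paper simply asserts as an easy fact while you additionally justify it via computability, an explicit r.e.\ martingale, and a Martin-L\"{o}f test. The extra detail is harmless and the core argument is the same.
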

\begin{proof}
This is an easy observation, as in no Martin-Löf random w.r.t the Lebesgue measure only finitely many $0$'s occur.
\end{proof}

\noindent A similar argument shows that for Martin-Löf random $R$ there are infinitely many primes occurring as basis of a denominator of a generator.

\begin{definition}
Fix a probability distribution $\mu$ on the natural numbers and let $X=(X_i)_{i\in \mathbb{N}}$ be a sequence of iid random variables taking values in $\mathbb{N}$ with distribution $X_i \sim \mu$ for all $i\in\mathbb{N}$.
Denote by $H_X$ the subgroup of $(\mathbb{Q},+)$ generated by $\{ p_i^{-X_i} \mid i \in \mathbb{N} \}$, where $p_i$ denotes the $(i+1)$-st prime.
\end{definition}

The so obtained random group might follow a more uniform process.

\begin{lemma}
If $\mu$ is the distribution on $\mathbb{N}$ assigning $0$ probability $\frac{1}{2}$, $1$ probability $\frac{1}{4}$, $2$ probability $\frac{1}{8}$ and $n$ probability $2^{-n-1}$, then with probability $1$ holds $H_X=G_R$ for some Martin-Löf random $R$. % w.r.t. the Lebesgue measure. 
\end{lemma}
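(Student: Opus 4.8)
The plan is to exhibit a measure-preserving ``run-length encoding'' map from the sample space onto Cantor space, push the distribution of $X$ forward along it to obtain the Lebesgue measure, and then read both assertions off from this.

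First I would define $\Phi\colon\mathbb{N}^{\mathbb{N}}\to 2^{\omega}$ by letting $\Phi(x_0,x_1,x_2,\ldots)$ be the concatenation of the blocks $1^{x_i}0$ for $i=0,1,2,\ldots$, i.e.\ $\Phi(x_0,x_1,\ldots)=1^{x_0}0\,1^{x_1}0\,1^{x_2}0\cdots$. Since each $X_i$ takes values in $\mathbb{N}$, every block is finite, so $\Phi(X)$ is an infinite binary sequence with infinitely many zeros; its number of leading ones equals $X_0$, and the number of ones strictly between its $i$-th and $(i+1)$-st zero equals $X_i$. Comparing with Definition~\ref{Group1}, this means that the parameters $n_i$ of $R:=\Phi(X)$ satisfy $n_i=X_i$ for all $i$, so $G_{\Phi(X)}$ is generated by $\{p_i^{-X_i}\mid i\in\mathbb{N}\}$; that is, $H_X=G_{\Phi(X)}$, and this identity holds for \emph{every} realisation of $X$, not merely almost surely.

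Next I would verify that $\Phi_{*}(\mu^{\mathbb{N}})=\lambda$, the Lebesgue measure on $2^{\omega}$. By the uniqueness part of Carath\'eodory's theorem (as invoked in Section~\ref{sec:not}), it suffices to show $\mu^{\mathbb{N}}(\Phi^{-1}([\sigma]))=2^{-|\sigma|}$ for every $\sigma\in 2^{<\omega}$, and here a short case distinction on the last symbol of $\sigma$ is needed. If $\sigma=1^{k_0}0\,1^{k_1}0\cdots 1^{k_m}0$ ends in a zero, then $\Phi(x)\in[\sigma]$ exactly when $x_j=k_j$ for $j\le m$, an event of probability $\prod_{j\le m}2^{-k_j-1}=2^{-\sum_{j\le m}(k_j+1)}=2^{-|\sigma|}$, using $\mu(k)=2^{-k-1}$. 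If $\sigma=1^{k_0}0\cdots 1^{k_{m-1}}0\,1^{j}$ with $j\ge 1$ (the case $m=0$, $\sigma=1^{j}$, included), then $\Phi(x)\in[\sigma]$ exactly when $x_i=k_i$ for $i<m$ and $x_m\ge j$, an event of probability $\bigl(\prod_{i<m}2^{-k_i-1}\bigr)\sum_{n\ge j}2^{-n-1}=2^{-\sum_{i<m}(k_i+1)}\cdot 2^{-j}=2^{-|\sigma|}$; and $\sigma=\varepsilon$ is trivial. Since the cylinders generate the Borel $\sigma$-algebra of $2^{\omega}$, this yields $\Phi_{*}(\mu^{\mathbb{N}})=\lambda$.

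Finally I would combine the two steps. It is standard that the set $\mathrm{MLR}$ of Martin-L\"of random sequences satisfies $\lambda(\mathrm{MLR})=1$, since the non-random reals lie in $\bigcap_{m<\omega} U_m$ for a universal Martin-L\"of test $(U_m)_{m<\omega}$ and $\lambda(\bigcap_m U_m)\le\inf_m 2^{-m}=0$. Hence $\mu^{\mathbb{N}}\bigl(\Phi^{-1}(\mathrm{MLR})\bigr)=\lambda(\mathrm{MLR})=1$, so with probability one the sequence $R=\Phi(X)$ is Martin-L\"of random, and by the first step $H_X=G_R$ for this very $R$ — which is precisely the claim. I expect the only genuinely fiddly point to be the cylinder computation in the push-forward step: keeping the block indexing straight and, above all, correctly handling the cylinders whose defining string ends in a run of ones (where the relevant event constrains one coordinate only from below) together with the degenerate strings $\varepsilon$ and $1^{j}$; everything else is routine bookkeeping.
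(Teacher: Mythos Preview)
Your proof is correct and follows essentially the same approach as the paper: define the run-length encoding map $\Phi(x_0,x_1,\ldots)=1^{x_0}0\,1^{x_1}0\cdots$, observe that $H_X=G_{\Phi(X)}$ by construction, and argue that the pushforward of $\mu^{\mathbb N}$ under $\Phi$ is Lebesgue so that $\Phi(X)$ is Martin-L\"of random almost surely. The paper's proof is considerably terser---it only notes that the block $1^n0$ has Lebesgue probability $2^{-n-1}=\mu(n)$ and invokes $\lambda(\mathrm{MLR})=1$---whereas you carry out the cylinder computation carefully, including the case of strings ending in a run of ones; this extra care is warranted but does not constitute a different method.
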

\begin{proof}
This follows immediately, as the set of ML-randoms has measure $1$ with respect to the Lebesgue measure. From $X_0=n_0$, $X_1=n_1$, $\ldots$, $X_i=n_i$, $\ldots$ we obtain an infinite binary sequence $R\in 2^\omega$ by recursively appending $1^{n_i}0$ in step $i$ to the already established initial segment of $R$, starting with the empty string.
By definition the Lebesgue measure assigns probability $\frac{1}{2^{n+1}}$ to having the (intermediate) subsequence $1^{n}0$ in $R$. This is exactly the probability of the event $X_i=n$. 
\end{proof}

\noindent
A \emph{generating sequence for $G_R$} is an infinite sequence $(b_i)_{i<\omega}$ such that
$\spn{b_i\mid i < \omega} = G_R$.  We will often deal with generating sequences rather than minimal generating
sets for $G_R$, mainly due to the fact that if the terms of a sequence $\beta$ are carefully chosen based on a limiting 
recursive programme for $R$ (so that $\beta$ itself is limiting recursive), then, as will be seen later, the set of representations 
of elements of $G_R$ with respect to $\beta$ can have certain desirable computability properties, such as equality being
co-r.e.

\begin{proposition}\label{prop:recursivegeneratingsequence}
Suppose $R \leq_T K$ is Martin-Löf random. %w.r.t the Lebesgue measure on $2^\omega$.
Then there does not exist any strictly increasing recursive enumeration $i_0,i_1,i_2,\ldots$
such that for each $j$, there is some $n_{i_j} \geq 1$ with $p_{i_j}^{-n_{i_j}} 
\in G_R$. 
\end{proposition}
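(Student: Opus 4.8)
By the remark following Definition~\ref{Group1}, for each $i$ one has $p_i^{-n}\in G_R$ for some $n\ge 1$ iff $p_i^{-1}\in G_R$ iff $n_i\ge 1$ (with $n_i$ as in Definition~\ref{Group1}); so the proposition says exactly that the set $S=\{i\mid n_i\ge 1\}$ contains no infinite recursive subset. The plan is to argue by contradiction: assuming $i_0<i_1<i_2<\cdots$ is a strictly increasing recursive enumeration with $n_{i_j}\ge 1$ for every $j$, I would build from it a recursive martingale succeeding on $R$ and then invoke Theorem~\ref{thm:martinlofrandommartingale}. One may assume $i_0\ge 1$ (otherwise discard the first term). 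The hypothesis $R\le_T K$ plays no role here; it is merely the paper's standing convention.

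The martingale exploits the block decomposition $R=1^{n_0}\,0\,1^{n_1}\,0\,1^{n_2}\,0\cdots$: the bit of $R$ immediately after its $c$-th zero equals $1$ precisely when $n_c\ge 1$. So imagine a gambler reading $R$ bit by bit and counting the zeros it has seen so far; whenever it has just read a zero and the current count $c$ belongs to $\{i_j\mid j\in\N\}$, it stakes its whole capital on the next bit being $1$, and otherwise it bets nothing. Formally, $\mg(\sigma\circ 1)=2\,\mg(\sigma)$ and $\mg(\sigma\circ 0)=0$ whenever $\sigma$ is non-empty, ends in $0$, and the number of zeros in $\sigma$ lies in $\{i_j\mid j\in\N\}$, while $\mg(\sigma\circ 0)=\mg(\sigma\circ 1)=\mg(\sigma)$ in all other cases; this is readily checked to be a martingale. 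Since $j\mapsto i_j$ is recursive and strictly increasing, its range is a recursive set, so $\mg$ is computable and hence an r.e.\ martingale.

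Next I would verify that $\mg$ succeeds on $R$. As $R$ is Martin-Löf random it has infinitely many zeros, so for every $j$ the $i_j$-th zero of $R$ genuinely occurs, and the initial segment $\tau_j$ of $R$ ending with that zero is one of the strings on which $\mg$ ``bets''; because $n_{i_j}\ge 1$, the next bit of $R$ is $1$, so along $R$ the value of $\mg$ doubles at $\tau_j$ and is never reset to $0$. Since $\tau_0,\tau_1,\tau_2,\ldots$ have strictly increasing lengths, $\sup_n \mg(R(0)\ldots R(n))=\infty$, i.e.\ $\mg$ succeeds on $R$, contradicting Theorem~\ref{thm:martinlofrandommartingale}.

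The points needing care, I expect, are two. First, $\mg$ must be a genuine martingale defined on all of $2^{<\omega}$ (not merely along $R$) and effectively computable; both follow from the recursiveness of the range of $(i_j)_j$. Second, one genuinely needs Martin-Löf randomness, rather than only the hypothesis $n_{i_j}\ge 1$, to guarantee that $R$ has infinitely many zeros: without this $R$ could end in $1^\omega$, making $S$ cofinite and the proposition false, so this is the essential place where randomness is used.
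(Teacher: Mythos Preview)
Your proposal is correct and follows essentially the same approach as the paper: both build a recursive martingale that wagers its entire capital on the bit immediately following the $i_j$-th zero of $R$ being $1$, and then invoke Theorem~\ref{thm:martinlofrandommartingale}. The only cosmetic difference is that the paper writes down an explicit closed-form value $\mg(\sigma)\in\{0,2^j,2^{j+1}\}$, whereas you describe the same martingale recursively via the gambler's strategy; your observations that $R\le_T K$ is irrelevant here and that one may harmlessly assume $i_0\ge 1$ are both accurate.
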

\def\proofofrecursivegeneratingsequence{
\begin{proof} 
Suppose that such an enumeration did exist.  We show that this contradicts the Martin-L\"{o}f
randomness of $R$.  By Theorem \ref{thm:martinlofrandommartingale},
it suffices to show that there is a recursive martingale $\mg$ succeeding on $R$.  
Define $\mg$ as follows.
For any $\sigma \in \{0,1\}^*$, if there is some $j$ such
that $\sigma$ contains at least $i_j$ occurrences of $0$ and the $i_j$-th occurrence of $0$
is immediately succeeded by $0$, then set $\mg(\sigma) = 0$.  Else, let $j$ be the largest
$j'$ for which either $j' = 0$ or $\sigma$ contains at least $i_{j'}$ occurrences of $0$,    
and set 
$$
\mg(\sigma) = \left\{\begin{array}{ll}
2^{j+1} & \mbox{if $\sigma$ contains at least $i_j$ $0$'s and the $i_j$-th occurrence of $0$ in $\sigma$ is not} \\
~ & \mbox{the last bit of $\sigma$;} \\
2^j & \mbox{otherwise.}\end{array}\right.
$$  
It may be directly verified that $\mg$ satisfies the martingale equality 
$\mg(\sigma) = \frac{1}{2}(\mg(\sigma0)+\mg(\sigma1))$ for all $\sigma \in \{0,1\}^*$.
Furthermore, $\mg(R(0)R(1)\ldots R(n))$ grows to infinity with $n$ and so $\mg$ succeeds on $R$,
contradicting the fact that $R$ is Martin-Löf random.         
\end{proof}
}
\proofofrecursivegeneratingsequence

\begin{theorem}\label{thm:genseqcoreequality}
\label{coRE}
If $R \leq_T K$ is Martin-Löf random, % w.r.t the Lebesgue measure on $2^\omega$, 
then $(G_R,+)$ is co-r.e., meaning that $+$ is recursive and there is a generating sequence with respect to which equality is co-r.e.
\end{theorem}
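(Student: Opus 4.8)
The claim has two parts. That $+$ is recursive needs nothing beyond the choice of presentation: with respect to any generating sequence $\beta=(b_i)_{i<\omega}$, an element of $G_R$ is coded by a finite integer tuple and $+$ is coordinate-wise addition (after padding the shorter tuple by zeros), which is recursive. So the whole content lies in exhibiting a generating sequence $\beta$ for which equality on codes is co-r.e., that is, for which $\{(\sigma,\tau):\phi_\beta(\sigma)\ne\phi_\beta(\tau)\}$ is r.e., where $\phi_\beta(\sigma)=\sum_i\sigma(i)b_i$.

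Since $R\leq_T K$, the Limit Lemma yields a recursive $(R_s)_s$ with $\lim_sR_s=R$, hence recursive approximations $n_{i,s}\to n_i$ to the exponents of Definition~\ref{Group1}. I would take $\beta$ to be a ``running product'' generating sequence $\beta=(x_0,x_1,x_2,\dots)$ with $x_i=\bigl(\prod_{j<i}p_j^{n_j}\bigr)^{-1}$, so that $\spn{x_i\mid i<\omega}=\bigcup_i\frac1{\prod_{j<i}p_j^{n_j}}\Z=G_R$ and $\beta$ is limiting recursive via $x_{i,s}=\bigl(\prod_{j<i}p_j^{n_{i,s}}\bigr)^{-1}$. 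With this $\beta$ a code $c=(c_0,\dots,c_m)$ satisfies $\phi_\beta(c)=0$ iff the integer $\sum_{i\le m}c_i\prod_{i\le j<m}p_j^{n_j}$ vanishes, equivalently iff a ``Horner cascade'' of divisibility requirements of the shape $p_i^{n_i}\mid(\text{remainder after processing }x_{i+1},\dots,x_m)$ all hold, with final remainder $0$.

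To make the complement of equality r.e.\ I would use the algebraic fact that $\bigcap_{n\ge1}nG_R=\{0\}$; this holds because, $R$ being Martin-Löf random, every $n_i$ is finite (the preceding lemma), so $G_R\ne\mathbb Q$ and no nonzero element of $G_R$ is infinitely divisible. Thus $\phi_\beta(c)\ne0$ iff $\phi_\beta(c)\notin nG_R$ for some $n$, i.e.\ iff some prime power occurs in the denominator of $\phi_\beta(c)$ to the maximal extent permitted by $G_R$ --- equivalently, iff the Horner cascade fails at some level. The construction I would run is therefore a stage-by-stage construction of $\beta$ that does not merely output the bare running product but interleaves explicit integer generators and extra copies of the fractional generators, controlling which prime powers get ``committed'' into the running products at each stage, so that an over-committed exponent at a stage is only provisional (only the limit $x_i$ of each generator must lie in $G_R$, which is all that is needed for $\spn{\beta}=G_R$), while any \emph{soundly witnessed} failure of divisibility for a fixed word $c$ is recorded once and for all and drives the enumeration. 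The design goal is that for every $c$ with $\phi_\beta(c)\ne0$ the construction eventually reaches a state from which $\phi_\beta(c)$ is ``visibly'' nonzero, and that such visibility is never retracted.

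The main obstacle is precisely reconciling these two demands. A plain limiting-recursive approximation of the $n_i$ fluctuates in both directions, so one may not trust the current $n_{i,s}$ when certifying $\phi_\beta(c)\ne0$; and by Proposition~\ref{prop:recursivegeneratingsequence} equality can never be made recursive with respect to any generating sequence, so the construction is genuinely barred from ever freezing its picture of $G_R$ --- which is exactly why ``co-r.e.'' rather than ``recursive'' is the right outcome. The resolution is to separate the two kinds of information: each generator's exponents stay revisable (downward) throughout the construction, but each monotone event ``the denominator of $\phi_\beta(c)$ has been forced to contain $p^{k}$'' is recorded irrevocably, and only these monotone events feed the enumeration. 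Verifying that these prescriptions are mutually consistent, that the limiting sequence $\beta$ still generates $G_R$, and that the set enumerated is exactly $\{c:\phi_\beta(c)\ne0\}$, is the technical heart of the argument.
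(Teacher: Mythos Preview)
Your outline correctly isolates the real issue: addition is trivially recursive, and the substance is to build a limiting-recursive generating sequence $\beta$ so that once an inequality $\phi_\beta(\sigma)\ne\phi_\beta(\tau)$ is enumerated it is never invalidated by later revisions of $\beta$. You also correctly flag the obstacle, namely that the $n_{i,s}$ fluctuate in both directions. What is missing is the mechanism that actually resolves this obstacle; phrases like ``monotone events are recorded irrevocably'' and ``exponents stay revisable (downward)'' describe a desideratum, not a construction, and you yourself defer the verification to an unstated ``technical heart''. Two concrete problems: first, with the running-product choice $x_i=\bigl(\prod_{j<i}p_j^{n_j}\bigr)^{-1}$, a single revision of $n_j$ perturbs \emph{every} later generator, so there is no local repair; second, the criterion $\phi_\beta(c)\notin nG_R$ for some $n$ is correct but not obviously semidecidable, since membership in $nG_R$ depends on the full sequence $(n_i)_i$ and is not r.e.

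The paper's proof supplies exactly the missing device, and it is simple once seen. Use a sequence whose entries are each either an integer or a single prime-power reciprocal $p_i^{-n_{i,s}}$; when some $n_{i,s}$ changes, replace only that one entry by an \emph{integer} $w$ chosen so that every inequality already enumerated (a finite set of linear inequations in $w$ over $\mathbb{Q}$, each with at most one solution) remains true. Such a $w$ always exists, and once an entry is an integer it never needs updating again, so each coordinate stabilises. The inequality set $\bigcup_s\Delta_{\beta_s}$ is then r.e.\ by construction and sound by the preservation step; completeness follows because for any fixed pair the relevant coordinates eventually stabilise. This integer-replacement trick is the whole point, and nothing in your sketch approximates it.
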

\begin{proof}
For a fixed generating sequence $( q_i )_{i < \omega}$ of $G_R$ there is an epimorphism from the set of finite sequences of integers $\mathbb{Z}^{<\omega}$ to $G_R$ by identifying $\sigma=(\sigma(0),\ldots,\sigma(|\sigma|-1))$ with $x= \sum_{i=0}^{|\sigma|-1} \sigma(i) q_i$.
We call $\sigma$ a representation of $x$ w.r.t.~$(q_i)_{i<\omega}$ or $(q_i)_{i<n+1}$.

\smallskip
Obviously, for any generating sequence of $G_R$ addition is recursive as only the components of the representations have to be added as integers.

\smallskip
In order to prove that equality is co-r.e., we construct a specific generating sequence $(b_i)_{i<\omega}$.
Based on the result $R^s$ of the computation of $R$ after $s$ steps, we are going to define finite sequences $\beta_s$ of rational numbers recursively, such that $|\beta_s|=s+1$ and inequality on $\{-s-1,\ldots,s+1\}^{s+1} \subseteq \mathbb{Z}^{s+1}$, interpreted as representations w.r.t.~$\beta_s$, is decided and extends the inequalities on $\{-s,\ldots,s\}^{s}$, even though they originate from an interpretation as representations according to $\beta_{s-1}$.
With this in the limit we obtain a generating sequence of $G_R$, meaning that for every $i$ there is some $s_i > i$ such that for all $s \geq s_i$ the $i$-th element of $\beta_s$ is the same as the $i$-th element of $\beta_{s_i}$, which we denote by $b_i$. Further, $(b_i)_{i \in \mathbb{N}}$ generates $G_R$ and for this generating sequence equality will be co-r.e.

In the following we write $n_{i,s}$ for %the minimum of $s$ and
$n_i$ according to $R^s$, i.e. the number of $1$'s between the $i$-th and $(i+1)$-st zero in $R^s$, as introduced in Definition \ref{Group1}.
As $R^s$ does not end with infinitely many $1$'s, $n_{i,s}$ can be computed in finitely many steps for every $i$ and $s$.

\begin{enumerate}
\item[]$s=0$.
Let $\beta_0=(1)$.%(\fiac{1}{2^{n^0_0}})$.
\item[]$s\leadsto s+1$.
Check for every $i\leq s$ whether $n_{i,s}= n_{i,s+1}$.
%This can be done in finitely many steps as by the following construction the entries are integers or fractions with a prime power as denominator.
If $n_{i,s}=n_{i,s+1}$ let $\beta_{s+1}(i)=\beta_s(i)$.
Replace all $\frac{1}{p_i^{n_{i,s}}}$ occurring in $\beta_{s}$ with $n_{i,s} \neq n_{i,s+1}$ by some respective integer, for which existence we argue below, such that
\begin{align*}
\Delta_{(q_i)_{i<s+1}} = \{ \,(\sigma_0, \sigma_1) \in \,&(\,\{-s-1,\ldots,s+1\}^{s+1}\,)^2 \mid \\
& \sigma_0, \sigma_1 \text{ represent different elements w.r.t. } (q_i)_{i<s+1} \,\}
\end{align*}
stays the same or enlarges if $(q_i)_{i<s+1}$ equals the first $s+1$ entries of $\beta_{s+1}$ instead of $\beta_s$.
Further, let
$$\beta_{s+1}(s+1) = \frac{1}{p_j^{n_{j,s+1}}},$$
where $j\leq s+1$ is minimal such that $\frac{1}{p_j^{n_{j,s+1}}}$ is an element of $G_{R^{s+1}}$ and does not yet occur in $\beta_{s+1}\res (s+1)$.
If there is no such $j$, let $\beta_{s+1}(s+1)=1$.
\end{enumerate}

For example, if the tape after stage $s=2$ started with $1111010\ldots$, after $3$ steps contained $1101010\ldots$ and $\beta_2=(1,\frac{1}{2^4},\frac{1}{3})$, then in $\beta_3$ we would have to replace $\frac{1}{2^4}$ by an integer $w$ such that for arbitrary integers $u_0,u_1,u_2,v_0,v_1,v_2$ between $-3$ and $3$ we have
$$u_0 + u_1 \frac{1}{2^4} + u_2 \frac{1}{3} \neq v_0 + v_1 \frac{1}{2^4} + v_2 \frac{1}{3} \quad \Rightarrow \quad u_0 + u_1 w + u_2 \frac{1}{3} \neq v_0 + v_1 w + v_2 \frac{1}{3}$$
and $\beta_{3}(3)$ would be $\frac{1}{2^2}$.

%For example, if the tape after stage $s=4$ startet with $1111110\ldots$, after $5$ steps contained $1110110\ldots$ and $\beta_4=(1,\frac{1}{2},\frac{1}{2^2},\frac{1}{2^3},\frac{1}{2^4})$, then in $\beta_5$ we would have to replace $\frac{1}{2^4}$ by an integer $w$ such that for arbitrary integers $u_0,\ldots,u_4,v_0,\ldots,v_4$ between $-5$ and $5$ we have
%$$\sum_{i=0}^4 u_i \beta_4(i) \neq \sum_{i=0}^4 v_i \beta_4(i) \quad \Rightarrow \quad \sum_{i=0}^3 u_i \beta_4(i) + u_4 w \neq \sum_{i=0}^3 v_i \beta_4(i) + v_4 w$$
%and $\beta_{5}(5)$ would be $\frac{1}{3^2}$.

\medskip
We proceed by showing that there is always such an integer $w$.

\begin{claim}
For every $s\in\mathbb{N}$ in step $s+1$ it is possible to alter finitely many entries of $\beta_s$ to obtain $\beta_{s+1}\res(s+1)$ such that $\Delta_{\beta_s}\subseteq\Delta_{\beta_{s+1}\res(s+1)}$.
\end{claim}
\begin{proof}[Proof of the Claim.]
Let $s\in\mathbb{N}$. It suffices to show that one entry can be replaced in this desired way. As the argument does not depend on the position, we further assume that it is the last entry.
For all $(\sigma_0, \sigma_1) \in \Delta_{\beta_s}$ we want to prevent
$$\sum_{i=0}^{s-1} \sigma_0(i) \beta_{s}(i) + \sigma_0(s)w = \sum_{i=0}^{s-1} \sigma_1(i) \beta_{s}(i) + \sigma_1(s)w.$$
This is a linear equation having zero or one solution in $\mathbb{Q}$. As there are only finitely many choices for the pair $(\sigma_0, \sigma_1)$, an integer not fulfilling any of these equations can be found in a computable way.
\end{proof}

We continue by proving that the entries of the $\beta_s$ stabilize, such that in the limit we obtain a sequence $(b_i)_{i<\omega}$ of elements of $G_R$.

\begin{claim}
For every $i\in\mathbb{N}$ there is some $s_i\geq i$ such that for all $s\geq s_i$ we have $\beta_s(i)=b_i$, with $b_i=\beta_{s_i}(i)$.
\end{claim}
\begin{proof}[Proof of the Claim.]
Let $i \in \mathbb{N}$. If there is $s_i>i$ such that the entry $\beta_{s_i-1}(i)$ had to be changed, then $\beta_{s_i}(i)$ is an integer and thus, it will never be changed lateron.
In case this does not happen, we obtain $\beta_s(i)=\beta_i(i)$ for all $s \geq i$ and therefore $s_i=i$.
\end{proof}

By the next claim the just constructed sequence generates the random group.

\begin{claim}
The sequence $(b_i)_{i<\omega}$ generates $G_R$.
\end{claim}
\begin{proof}[Proof of the Claim.]
Let $i\in \mathbb{N}$ and $a_i$ as in Definition~\ref{Group1}. We argue that there is some $j$ with $a_i=b_j$. Let $m_i$ be the position of the $(i+1)$-st zero in the Martin-Löf random $R$. Then there is $s'$ such that after $s'$ computation steps $R\res (m_i+1)$ is not changed any more. Thus, after at most $i$ additional steps all generators of $G_R$ having one of the first $i$ primes as denominator are in the range of $\beta_{s'+i}$.
\end{proof}

Finally, we observe that w.r.t.~the generating sequence $(b_i)_{i<\omega}$ all pairs of unequal elements of $G_R$ can be recursively enumerated.

\begin{claim}
Equality in $(G_R,+)$ is co-r.e.
\end{claim}
\begin{proof}[Proof of the Claim.]
We run the algorithm generating $(b_i)_{i<\omega}$ and in step $s$ return all elements of the finite set $\Delta_{\beta_s}$.
As inequalities w.r.t $\beta_s$ yield inequalities w.r.t.~$(b_i)_{i<\omega}$, we only enumerate correct information.
Further, for every two elements $x,y$ of $G_R$ fix representations w.r.t. $(b_i)_{i<\omega}$ and $s'$ large enough such that not more than the first $s'$ of the $b_i$ occur in these representations, all of these have stabilized up to stage $s'$ and all coefficients in the representations take values between $-s'-1$ and $s'+1$. Then $x\neq y$ if and only if the tuple of their representations is in $\Delta_{\beta_{s'}}$.
\end{proof}
This finishes the proof of the theorem.
\end{proof}

\noindent
As there are $K$-recursive Martin-Löf random reals, we obtain the following corollary.

\begin{corollary}
There exists a co-r.e.~random subgroup of the rational numbers.
\end{corollary}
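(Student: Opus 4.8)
The plan is to deduce the corollary directly from Theorem~\ref{thm:genseqcoreequality} together with the classical fact that there is a Martin-L\"of random real that is recursive in the halting problem. So the only real step is to exhibit a suitable witness $R$ and then feed it into the theorem.

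First I would recall why a $K$-recursive Martin-L\"of random real exists. The cleanest choice is Chaitin's halting probability: fixing a universal prefix-free machine $U$, set $\Omega = \sum_{x\,:\,U(x)\conv} 2^{-|x|}$. Then $\Omega$ is a left-r.e.\ real, hence $\Delta^0_2$, hence $\Omega \leq_T K$, and by a theorem of Chaitin, $\Omega$ (viewed as an element of $2^\omega$) is Martin-L\"of random. Alternatively one could appeal to the Low Basis Theorem, or to the Ku\v{c}era--G\'acs theorem applied to a $\Pi^0_1$ subclass of positive measure of the Martin-L\"of random reals, obtaining even a low (in particular $K$-recursive) random real; any of these routes suffices.

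Next I would fix such an $R$ with $R \leq_T K$ and $R$ Martin-L\"of random, and simply invoke Theorem~\ref{thm:genseqcoreequality}. That theorem yields a generating sequence $(b_i)_{i<\omega}$ for $G_R$ with respect to which addition of integer-sequence representations is recursive and the set of pairs of representations denoting distinct group elements is recursively enumerable, i.e.\ equality is co-r.e. Hence $(G_R,+)$ is a co-r.e.\ random subgroup of $(\mathbb{Q},+)$, which is exactly the assertion of the corollary.

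I do not expect any genuine obstacle here, since the substantive work — the stage-by-stage definition of $(b_i)_{i<\omega}$ and the verification that equality is co-r.e. — is already contained in the proof of Theorem~\ref{thm:genseqcoreequality}. The one point where a little care is warranted, and which I would write out explicitly rather than leave as folklore, is the claim that the chosen witness is simultaneously $\leq_T K$ and Martin-L\"of random; recording the $\Omega$ argument settles this in a line or two, after which the corollary follows immediately.
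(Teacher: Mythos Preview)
Your proposal is correct and follows essentially the same approach as the paper: the paper simply notes that there exist $K$-recursive Martin-L\"of random reals and immediately deduces the corollary from Theorem~\ref{thm:genseqcoreequality}. Your version adds the helpful detail of naming a concrete witness (Chaitin's $\Omega$, or a low random via the Low Basis Theorem), but the argument is otherwise identical.
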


\begin{remark}
Proposition \ref{prop:recursivegeneratingsequence} implies, in particular, that if $R \leq_T K$ is Martin-L\"{o}f
random, then there cannot exist any generating sequence for $G_R$ with respect to which equality of members of $G_R$
is r.e.  Indeed, suppose that such a generating sequence $\beta$ did exist, so
that $E := \{(\sigma,\tau) \in \Z^{<\omega} \times \Z^{<\omega}\mid \sigma \cdot \beta\res_{|\sigma|-1} =
\tau \cdot \beta\res_{|\tau|-1}\}$ is r.e.
Fix any $\sigma_0 \in \Z^{<\omega}$ such that $\sigma_0 \cdot \beta_{|\sigma_0|-1} = 1$
(since $1 \in G_R$, such a $\sigma_0$ must exist). %  $i'$ for which $b_{i'} \neq 0$ and $b_{i'} \leq 1$ (the Martin-L\"{o}f randomness of $R$
%implies that such an $i'$ must exist), and let $\sigma_0$ be the sequence of length $i'+1$ whose first $i'$ entries
%are $0$ and last entry is $b_{i'}^{-1}$.  
Then there is a strictly increasing recursive enumeration
$i_0,i_1,i_2,\ldots$ such that for all $j$, $i_j$ is the first $\ell$ found for which the following
hold:
%\begin{itemize}
%\item 
(i) $\ell > i_{j'}$ whenever $j' < j$;
%\item 
(ii) there are $n_{\ell} \geq 1$ and relatively prime positive integers $q,r$ with $p_{\ell} \nmid q$ and $p_{\ell} \nmid r$
such that for some $m$, $(q\sigma_0,I_m(rp_{\ell}^{n_{\ell}})) \in E$.
%\end{itemize} 
Note that
\begin{equation*}
\begin{split}
(q\sigma_0,I_m(rp_{\ell}^{n_{\ell}})) \in E &\Leftrightarrow q = (q\sigma_0) \cdot \beta_{|\sigma_0|-1} = I_m(rp_{\ell}^{n_{\ell}}) \cdot 
\beta_{m-1} = rp_{\ell}^{n_{\ell}}b_{m-1} \\
~& \Leftrightarrow b_{m-1} = qp_{\ell}^{-n_{\ell}}r^{-1}.
\end{split}
\end{equation*}    
The Martin-L\"{o}f randomness of $R$ implies that $\beta$ contains infinitely many terms of the form $\frac{q'}{r'p_{\ell'}^{n'_{\ell'}}}$ with $n'_{\ell'} \geq 1$, $q'$ and $r'$ relatively prime and positive, $p_{\ell'} \nmid q'$ and $p_{\ell'} \nmid r'$.  
Thus $i_j$ is defined for all $j$, and by Proposition \ref{prop:recursivegeneratingsequence} this contradicts
the Martin-L\"{o}f randomness of $R$.
\end{remark}

Further, a variation of the algorithm yields that equality of the proper rational part is r.e. on random groups.

\begin{theorem}\label{thm:reequalitymod1}
If $R \leq_T K$ is Martin-Löf random, % w.r.t the Lebesgue measure on $2^\omega$, 
then equality modulo 1 on $(G_R,+)$ is r.e. with respect to some generating sequence.
\end{theorem}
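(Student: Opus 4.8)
The plan is to mimic the construction of Theorem~\ref{thm:genseqcoreequality} but now working in the quotient group $G_R/\Z$ instead of $G_R$, exploiting the crucial structural fact that $G_R/\Z$ is a \emph{torsion} group: every element has finite order, since any rational is, modulo $\Z$, a fraction with some denominator $d$, and $d$ copies of it sum to an integer. This is what flips co-r.e.\ to r.e.\ First I would set up representations exactly as before: fix a generating sequence $(q_i)_{i<\omega}$ for $G_R$, identify $\sigma \in \Z^{<\omega}$ with $\sum_i \sigma(i) q_i \bmod \Z$, and note that addition of representations is still recursive. The goal is a generating sequence $(b_i)_{i<\omega}$ with respect to which the set $\{(\sigma,\tau) : \sigma \equiv \tau \pmod{\Z}\}$ is r.e.

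The key observation making equality modulo $\Z$ semi-decidable is this: given representations $\sigma,\tau$ with respect to the \emph{true} generators $a_i = p_i^{-n_i}$ (with $n_i$ finite, by the Lemma), $\sigma \equiv \tau \pmod \Z$ holds if and only if, writing $c_i = \sigma(i)-\tau(i)$, the rational $\sum_i c_i p_i^{-n_i}$ lies in $\Z$, i.e.\ each $p_i^{n_i}$ divides $c_i$ (the terms for distinct primes are ``independent'' modulo $\Z$). So once the relevant finite prefix of $R$ has stabilised — which happens after finitely many steps — the question becomes a decidable divisibility check, and we are ``waiting for convergence'' of a limit-recursive object, which is exactly the shape of an r.e.\ predicate. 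Concretely, I would run the same stagewise construction of $\beta_s$ as in Theorem~\ref{thm:genseqcoreequality}: at stage $s$, when $n_{i,s}\neq n_{i,s+1}$, replace the entry $\frac{1}{p_i^{n_{i,s}}}$ by a suitable integer chosen so that the set
\[
\Gamma_{(q_i)_{i<s+1}} = \{(\sigma_0,\sigma_1) : \sigma_0,\sigma_1 \text{ represent the \emph{same} element mod }\Z\}
\]
restricted to coefficient-bounded tuples only \emph{shrinks} (never grows) when passing from $\beta_s$ to $\beta_{s+1}$ — the mirror image of the $\Delta$-monotonicity in the previous proof. Replacing a generator $\frac{1}{p_i^{n}}$ by an integer can only destroy, never create, congruences modulo $\Z$, so this monotonicity is essentially automatic; I would verify it by the same finite linear-algebra argument used in the Claim above, now over $\Q/\Z$. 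Stabilisation of the entries, and the fact that $(b_i)_{i<\omega}$ generates $G_R$, follow verbatim from the corresponding Claims in Theorem~\ref{thm:genseqcoreequality}.

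Finally, to enumerate equalities: run the algorithm producing $(b_i)_{i<\omega}$ and at stage $s$ output every pair in $\Gamma_{\beta_s}$ (over coefficients bounded by $s{+}1$ in the first $s{+}1$ coordinates). Because $\Gamma_{\beta_s}\supseteq\Gamma_{\beta_{s+1}\res(s+1)}\supseteq\cdots\supseteq\Gamma_{(b_i)}$ eventually on each bounded block — here one must be slightly careful: it is the \emph{equalities}, not inequalities, that are preserved as we pass to the limit, so a pair enumerated at stage $s$ really is an equality with respect to $(b_i)_{i<\omega}$ — every enumerated pair is correct; and conversely, given $x\equiv y\pmod\Z$ with fixed representations, choosing $s'$ past the stabilisation of all relevant $b_i$ and past the coefficient bounds puts the pair into $\Gamma_{\beta_{s'}}$. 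I expect the main obstacle to be exactly this direction-of-monotonicity bookkeeping: one must confirm that integer-substitution of a generator only \emph{adds} congruences (equivalently, that the ``bad'' set of pairs that are equal mod $\Z$ is monotonically non-increasing along the construction restricted to each finite coefficient-bounded window), so that at the limit no spurious equality has been enumerated and every genuine one eventually appears. The torsion structure of $G_R/\Z$ is what guarantees there is no analogue of the ``inequality leaking in'' phenomenon that forced the co-r.e.\ (rather than recursive) bound in Theorem~\ref{thm:genseqcoreequality}.
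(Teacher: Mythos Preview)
Your overall approach---mimic the stagewise construction of Theorem~\ref{coRE} but track equalities modulo~$\Z$ instead of inequalities---is exactly what the paper does. However, you have the monotonicity direction backwards, and your proposal is internally inconsistent about it.

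You write that $\Gamma$ (the set of pairs equal mod~$\Z$) should only \emph{shrink} when passing from $\beta_s$ to $\beta_{s+1}$, and that ``replacing a generator $\frac{1}{p_i^n}$ by an integer can only destroy, never create, congruences modulo~$\Z$.'' This is false: take $\beta_s=(1,\tfrac{1}{2})$, $\sigma=(0,1)$, $\tau=(0,0)$; before replacement $\sigma\not\equiv\tau\pmod\Z$, but after replacing $\tfrac{1}{2}$ by any integer $w$ we get $\sigma\equiv\tau\pmod\Z$. Integer substitution \emph{creates} congruences, it does not destroy them. More to the point, if $\Gamma_{\beta_s}\supseteq\Gamma_{(b_i)}$ as you write, then enumerating $\Gamma_{\beta_s}$ at stage~$s$ would output \emph{false} equalities---the opposite of what an r.e.\ enumeration requires. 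Your final paragraph then reverses course and says ``it is the equalities \ldots\ that are preserved as we pass to the limit,'' directly contradicting the $\supseteq$ chain you wrote two lines earlier.

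The correct direction (and the paper's) is $E_{\beta_s}\subseteq E_{\beta_{s+1}}$: equalities modulo~$1$ already established must be \emph{preserved}, so that everything enumerated at stage~$s$ remains a genuine equality in the limit. The paper notes that one may simply replace the withdrawn generator $\frac{1}{p_i^{n_{i,s}}}$ by~$0$. This works because each prime occurs at most once as the base of a denominator in $\beta_s$, so the contributions for distinct primes are independent modulo~$\Z$: if $\sum_j c_j\beta_s(j)\in\Z$ then already $p_i^{n_{i,s}}$ divides the coefficient at the relevant coordinate, and substituting~$0$ there keeps the sum integral. No search for a ``suitable integer'' is needed, and no linear-algebra argument over $\mathbb{Q}/\Z$ is required.
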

\def\proofreequalitymod1{
\begin{proof}
The construction of the generating sequence follows the construction of $(b_i)_{i<\omega}$ in the proof of Theorem~\ref{coRE} with the main difference that in step $s+1$ instead of making sure that in case of replacements no already enumerated inequalities are destroyed, we have to make sure that all equalities modulo $1$ that have been established in the first $s$ steps are preserved.
Formally, this reads as $E_{\beta_s}\subseteq E_{\beta_{s+1}}$ %\res(s+1)}$ 
with
\begin{align*}
E_{(q_i)_{i<s+1}} = \{ \,(\sigma_0, \sigma_1) \in \,&(\,\{-s-1,\ldots,s+1\}^{s+1}\,)^2 \mid \\
& \sigma_0, \sigma_1 \text{ modulo $1$ represent the same element w.r.t. } (q_i)_{i<s+1} \,\}.
\end{align*}
As we have to preserve equality modulo 1 and each prime occurs at most once as basis of a denominator, we may use $0$ to replace the prime power fraction(s) if necessary. The rest of the proof works the same way.
\end{proof}
}
\proofreequalitymod1

\medskip
\noindent The next main result is concerned with the model-theoretic properties of random subgroups of
rationals.  We recall that two structures (in the model-theoretic sense) $M$ and $N$ with the same set $\sigma$ of non-logical symbols are \emph{elementarily equivalent} (denoted $M \equiv N$) iff they satisfy the same first-order sentences over 
$\sigma$; the \emph{theory} of a structure $M$ (denoted $\theory(M)$) is the set of all first-order sentences (over the set of non-logical symbols of $M$) that are satisfied by $M$.  The reader is referred to \cite{Mar02} for more background on model theory.  
We will prove a result that may appear a bit surprising: even though Martin-L\"{o}f random subgroups
of $(\mathbb{Q},+)$ (viewed as classes of integer sequence representations) are not computable,
any such subgroup is elementarily equivalent to $(\Z,+)$ - the additive group of integers - and thus
has a decidable theory. %\footnote{For a proof of the decidability of $\theory(\Z,+)$, which is often known as \emph{Presburger Arithmetic}, see \cite[pages 81--84]{Mar02}.}.  
In other words, the incomputability of a random subgroup of rationals, at least according to the notion of ``randomness'' adopted in the present work, has little or no bearing on the decidability of its first-order properties.  We begin by showing that the theory of any subgroup $G$ of rationals reduces to that of the subgroup of
$(\mathbb{Q},+)$ generated by the set of all rationals either equal to $1$ or of the shape $p^{-n}$, where $p$ is a prime infinitely dividing $G$ and $n \in \N$.  Our proof of this fact rests on a sufficient criterion due to Szmielew \cite{Szm55} for the elementary 
equivalence of two groups; this result will be stated as it appears in \cite{Khi98}.

\begin{theorem}(\cite{Szm55}, as cited in \cite{Khi98})\label{thm:szmelemenequiv} 
Let $p$ be a prime number and $G$ be a group.  For all $n \geq 1$, $k \geq 1$
and elements $g_1,\ldots,g_k \in G$,
define $G[p^n] := \{ x \in G\mid p^n x = 0\}$ and the following predicate $C(p;g_1,\ldots,g_k)$:
\begin{equation*}
\begin{aligned}
%~ & \mbox{$C(g_1,\ldots,g_k)$ is satisfied by $G$ iff the elements $g_1,\ldots,g_k \in G$ have the following property:} \\
~ & \mbox{$C(p;g_1,\ldots,g_k) \Leftrightarrow$ the images $g'_1,\ldots,g'_k$ of $g_1,\ldots,g_k$ in the factor group $\overline{G} := G/G[p^n]$ are} \\
~ & \mbox{such that $g'_1+p\overline{G},\ldots,g'_k+p\overline{G}$ are linearly independent in $\overline{G}/p\overline{G}$.}
\end{aligned}
\end{equation*}
Define the parameters $\alpha_{p,n}(G),\beta_p(G)$ and $\gamma_p(G)$ as follows.
\begin{equation*}
\begin{aligned}
\alpha_{p,n}(G) &:= \sup\{k \in \N\mid \mbox{$G$ contains $\Z_{p^n}^k$ as a pure subgroup} \}, \\
\beta_p(G) &:= \inf\{\sup\{k \in \N\mid \mbox{$\Z^k_{p^n}$ is a subgroup of $G$}\}\mid n \in \N \}, \\
\gamma_p(G) &:= \inf\{\sup\{k \in \N\mid (\exists x_1,\ldots,x_k)C(p;x_1,\ldots,x_k)\}\mid n \in \N\}.
\end{aligned}
\end{equation*}
(Here $pG := \{pg\mid g \in G\}$ and $\Z_{p^n}^k$ is the $k$-th power of the primary cyclic group on $p^n$ elements, 
that is, it consists of all elements $(a_0,\ldots,a_{k-1})$ such that $a_0,\ldots,a_{k-1} \in \Z_{p^n}$.)
Then any two groups $H$ and $L$ are elementarily equivalent iff $\alpha_{q,m}(H) = \alpha_{q,m}(L)$,
$\beta_{q}(H) = \beta_{q}(L)$ and $\gamma_{q}(H) = \gamma_{q}(L)$ for all primes $q$ and all $m \geq 1$.
\end{theorem}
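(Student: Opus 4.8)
The plan is to prove the theorem by the method of \emph{positive-primitive} (pp) elimination, which is the standard route to Szmielew's invariants and appears in this sharpened form in the work of Baur and Monk. Throughout, abelian groups are regarded as $\Z$-modules and the ambient language is $\{+,-,0\}$. A pp-formula is one of the form $\exists \bar y\,\bigwedge_{j}(\bar a_j\cdot\bar x+\bar b_j\cdot\bar y=0)$ with integer coefficients; in any abelian group $G$ its solution set in the free variables is a coset of a subgroup, and a pp-formula $\phi(x)$ in a single free variable defines a subgroup $\phi(G)\le G$. For two pp-formulas with $\psi\vdash\phi$ one has $\psi(G)\le\phi(G)$, and the \emph{Baur--Monk invariant} $\mathrm{Inv}(G;\phi,\psi):=\min\{[\phi(G):\psi(G)],\aleph_0\}$ is the quantity whose behaviour must be controlled.

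The first and main step is the pp-elimination theorem: modulo the theory of abelian groups, every formula is equivalent to a Boolean combination of pp-formulas together with \emph{invariant sentences} asserting $[\phi(G):\psi(G)]\ge m$ for pairs $\psi\vdash\phi$ of pp-formulas. I would prove this by induction on quantifier depth, the only nontrivial case being the elimination of a single existential quantifier from a Boolean combination $\theta(\bar x,y)$ of pp-formulas. After putting $\theta$ into disjunctive normal form and grouping, $\exists y\,\theta$ becomes a statement about whether a coset of a pp-subgroup can avoid a finite union of cosets of pp-subgroups; the decisive tool here is B.\,H.~Neumann's covering lemma, that a coset of a subgroup is contained in finitely many cosets of subgroups $H_1,\ldots,H_r$ only if it is already contained in those $H_i$ of finite index. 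This converts the existential statement into a Boolean combination of pp-conditions on $\bar x$ and of index conditions not involving $\bar x$, completing the induction.

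Granting pp-elimination, two abelian groups $H$ and $L$ satisfy the same sentences iff they satisfy the same invariant sentences, that is, $\mathrm{Inv}(H;\phi,\psi)=\mathrm{Inv}(L;\phi,\psi)$ for all pp-pairs $\psi\vdash\phi$ in one free variable. The second step is then purely structural: I would compute the lattice of pp-definable subgroups of a $\Z$-module and show that, after localising at each prime $p$, every such subgroup is a finite sum/intersection of the subgroups $p^kG$ and $G[p^m]$. Consequently each index $[\phi(G):\psi(G)]$ is a product of $\mathbb{F}_p$-dimensions of subquotients of the form $(p^kG\cap G[p^{m+1}])/(p^kG\cap G[p^m])$ and of $G/pG$. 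A bookkeeping argument identifies these dimensions with exactly the three families displayed in the statement: the Ulm-type counts of pure cyclic summands of order $p^n$ give $\alpha_{p,n}$, the stabilised $p$-torsion counts give $\beta_p$, and the stabilised rank of $G/G[p^n]$ modulo $p$ --- which is precisely what the predicate $C(p;\,\cdot\,)$ was engineered to express --- gives $\gamma_p$. Thus agreement of all $\alpha_{q,m},\beta_q,\gamma_q$ is equivalent to agreement of all Baur--Monk invariants, which by the first step is equivalent to elementary equivalence. The converse direction is the easy one: each of $\alpha_{p,n},\beta_p,\gamma_p$, truncated at $\aleph_0$, is itself first-order expressible (for $\gamma_p$ by the very formula $\exists x_1\cdots x_k\,C(p;x_1,\ldots,x_k)$), so elementarily equivalent groups automatically share these invariants.

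The hard part will be the two technical cores just isolated: the clean application of Neumann's covering lemma to push the existential quantifier through a Boolean combination of cosets without losing track of the finite-index conditions, and the combinatorial reduction of the \emph{a priori} infinite family of Baur--Monk invariants to precisely $\alpha$, $\beta$ and $\gamma$. The latter is where the specific choice of invariants in the statement --- in particular the factor groups $G/G[p^n]$ and the independence-modulo-$p$ condition $C$ --- does the real work, and matching each Baur--Monk index to the correct subquotient dimension is the most error-prone bookkeeping in the argument.
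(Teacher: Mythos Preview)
The paper does not prove this theorem at all: it is stated purely as a citation of Szmielew's classical result (via Khisamiev's exposition), and the paper immediately \emph{applies} it to compute the invariants for subgroups of $(\mathbb{Q},+)$. There is therefore no ``paper's own proof'' to compare against.

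That said, your outline via Baur--Monk pp-elimination is the standard modern route to Szmielew's invariants and is essentially correct in shape. Szmielew's original 1955 argument predates the Baur--Monk theorem and proceeds by a direct, rather laborious quantifier-elimination procedure tailored to abelian groups; your approach is cleaner and more conceptual, trading the bespoke elimination for the general module-theoretic machinery plus Neumann's lemma. The part you flag as ``error-prone bookkeeping'' --- reducing the full lattice of pp-definable subgroups at each prime to the three families $\alpha_{p,n},\beta_p,\gamma_p$ --- is indeed where the work lies, and your sketch is thin there: you would need to show concretely that the subquotients $(p^kG\cap G[p^{m}])/(p^{k+1}G\cap G[p^{m}])$ and their variants are controlled by exactly these three invariants, which requires a careful case analysis (this is done in full in, e.g., Prest's \emph{Model Theory and Modules} or Hodges' \emph{Model Theory}, A.2). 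But the architecture is right, and for the purposes of this paper the theorem is a black box in any case.
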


\medskip
\noindent
The definition of a \emph{pure} subgroup will not be used in the proof of the subsequent theorem;
it will be observed that if $G$ is a subgroup of the rationals, then for $k \geq 1$ and $n \geq 1$, it cannot contain
$\Z_{p^n}^k$ as a subgroup in any case, %(the reason for this will be explained in the next proof)
so that 
$\alpha_{p,n}(G) = \beta_p(G) = 0$.

%Surprisingly, as a corollary of the next result, we obtain that every random subgroup of the rational numbers is elementarily equivalent to $(\mathbb{Z},+)$.

\begin{theorem}
Let $G$ be a subgroup of $(\mathbb{Q},+)$.
Then $G\equiv [\mathbb{Z}]_{P(G)}$, where $P(G) := \{i\in\N\mid(\forall x\in G)(\forall n\in\N) 
[\frac{x}{p_i^n}\in G]\}$ denotes the set of all primes infinitely dividing $G$ and for a set of 
primes $P$ we write $[\mathbb{Z}]_P$ for the subgroup of $(\mathbb{Q},+)$ generated by 
$\{1\}\cup\{\frac{1}{p^k} \mid p\in P, k\in \mathbb{N}\}$.
\end{theorem}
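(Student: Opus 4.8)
The plan is to apply Szmielew's criterion (Theorem~\ref{thm:szmelemenequiv}) directly, showing that $G$ and $[\mathbb{Z}]_{P(G)}$ have matching Szmielew invariants $\alpha_{q,m}$, $\beta_q$, $\gamma_q$ for every prime $q$ and every $m\geq 1$. First I would dispose of the easy invariants: since every subgroup of $(\mathbb{Q},+)$ is torsion-free, for any prime $p$ and $n\geq 1$ the subgroup $G[p^n]$ is trivial, and moreover $G$ cannot contain a copy of $\Z_{p^n}$ (let alone $\Z_{p^n}^k$ for $k\geq 1$) as a subgroup, since that would be a nonzero torsion element. Hence $\alpha_{p,n}(G)=\beta_p(G)=0$ for all $p,n$, and likewise for $[\mathbb{Z}]_{P(G)}$, so these invariants agree trivially. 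This is the observation already flagged in the remark preceding the statement.

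The substance is therefore entirely in computing $\gamma_q$. Since $G[q^n]=0$, the factor group $\overline{G}$ is just $G$ itself, and $C(q;x_1,\dots,x_k)$ says that the images of $x_1,\dots,x_k$ in $G/qG$ are linearly independent over $\Z_q$ (noting that $qG$ has index exactly $q$ in $G$ when $q$ does not infinitely divide $G$, because $G$ has rank $1$; more precisely $G/qG$ is either trivial or cyclic of order $q$). So $\sup\{k : (\exists x_1,\dots,x_k)\,C(q;x_1,\dots,x_k)\}$ equals $1$ if $qG\neq G$ and equals $0$ if $qG=G$, independently of $n$; taking the infimum over $n$ does not change this. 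Thus $\gamma_q(G)=1$ exactly when $q$ does \emph{not} infinitely divide $G$ — equivalently, when $q=p_i$ for $i\notin P(G)$ — and $\gamma_q(G)=0$ when $q$ infinitely divides $G$, i.e.\ $i\in P(G)$. The key point to verify carefully here is the claim that for a rank-$1$ group $G$, divisibility by $q$ is "all-or-nothing" in the sense that $qG=G$ iff $q$ infinitely divides $G$: one direction is immediate, and the other uses that if $q^{-1}x\in G$ for some $x$ generating a finite-index subgroup, one can iterate. Concretely, using the Beaumont--Zuckerman description (Theorem~\ref{thm:subgroupsrationalschar}) with parameters $z$ and $(n_i)_{i<\omega}$, one has $qG=G$ iff $n_i=\infty$ where $q=p_i$, and $P(G)=\{i : n_i=\infty\}$.

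Finally, the same computation applied to $H:=[\mathbb{Z}]_{P(G)}$ gives $\gamma_q(H)=0$ for $q=p_i$ with $i\in P(G)$ and $\gamma_q(H)=1$ for $q=p_i$ with $i\notin P(G)$, since by construction $H$ is infinitely divisible by exactly the primes in $P(G)$ and contains $1$, so $p_iH\neq H$ for $i\notin P(G)$. Hence $\gamma_q(G)=\gamma_q(H)$ for all primes $q$, all the other invariants vanish on both sides, and Szmielew's theorem yields $G\equiv H=[\mathbb{Z}]_{P(G)}$. The main obstacle, such as it is, is not any deep argument but rather bookkeeping: making sure the quotients $\overline{G}=G/G[q^n]$ collapse correctly, that $G/qG$ has the claimed structure for rank-$1$ groups, and that "$q$ infinitely divides $G$" as defined via $P(G)$ lines up exactly with "$qG=G$". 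I would isolate the rank-$1$ divisibility dichotomy as a small lemma (or an inline observation referencing Theorem~\ref{thm:subgroupsrationalschar}) before feeding everything into Szmielew's criterion.
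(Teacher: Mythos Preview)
Your proposal is correct and follows essentially the same route as the paper: both apply Szmielew's criterion, dispose of $\alpha_{p,n}$ and $\beta_p$ via torsion-freeness, and reduce $\gamma_p$ to computing $\dim_{\Z_p}(G/pG)$, splitting on whether $p$ infinitely divides $G$. The only minor difference is that where you invoke the rank-$1$ structure (via Beaumont--Zuckerman) to see that $G/pG$ is at most one-dimensional, the paper instead shows explicitly, after normalising so that $1\in G$ and $p^{-1}\notin G$, that each generator $q^{-n}$ with $q\neq p$ is congruent to an integer modulo $pG$ by writing $1=mq^n+lp$ and hence $q^{-n}=m+p(lq^{-n})$; this is the same fact proved by a direct B\'ezout computation rather than your structural dichotomy.
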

\begin{proof}
Define the predicate $C(p;x_1,\ldots,x_k)$ and the parameters $\alpha_{p,n},\beta_p$ and $\gamma_p$
as in Theorem \ref{thm:szmelemenequiv}. 
Let $p$ be a prime number and suppose $n \geq 1$.  By Theorem \ref{thm:szmelemenequiv}, it suffices 
to show that the three parameters $\alpha_{p,n},\beta_p$ and $\gamma_p$ coincide on $G$ and
$[\Z]_{P(G)}$. First, $\Z^k_{p^n}$ cannot be a subgroup of $G$ or $[\Z]_{P(G)}$ when $k \geq 1$
and $n \geq 1$ since by Theorem \ref{thm:subgroupsrationalschar}, no non-trivial subgroup of any
subgroup of rationals
can be torsion\footnote{We recall that a group $G$ is \emph{torsion} iff for every $x \in G$, there is some
$n$ such that $x^n$ is equal to the identity element of $G$.}; thus 
$\alpha_{p,n}$ and $\beta_p$ are both equal to $0$ for $G$ as well as $[\Z]_{P(G)}$. 
For a similar reason,
$G[p^n] := \{ x \in G\mid p^n x = 0\} = \{0\}$ for every $p$ and $n \in \N$,
and therefore $\overline{G} := G/G[p^n] = G/\{0\} = G$ and
$\overline{G}/p\overline{G} = G/pG$.  Furthermore, $G/pG$ may be regarded as a vector
space over the field $\Z_p$, and $(\exists x_1,\ldots,x_k)C(p;x_1,\ldots,x_k)$
holds iff the dimension of the $\Z_p$-vector space $G/pG$ (denoted
$\dim_{\Z_p}(G/pG)$) is at least $k$.  It
follows that
\begin{equation*}
\begin{aligned}
\gamma_p(G) &= \inf\{\sup\{k \in \N\mid (\exists x_1,\ldots,x_k)C(p;x_1,\ldots,x_k)\}\mid n\in\N\} \\
~ &= \inf\{\sup\{k \in \N\mid \dim_{\Z_p}(G/pG) \geq k\}\mid n\in\N\} \\
~ &= \dim_{\Z_p}(G/pG).
\end{aligned}
\end{equation*}
Similarly, $[\Z]_{P(G)}/p[\Z]_{P(G)}$ is a $\Z_p$-vector space and
$\gamma_p([\Z]_{P(G)}) = \dim_{\Z_p}\left([\Z]_{P(G)}/p[\Z]_{P(G)}\right)$.
Thus it suffices to show that $\dim_{\Z_p}(G/pG) = \dim_{\Z_p}\left([\Z]_{P(G)}/p[\Z]_{P(G)}\right)$.
%\begin{description}[leftmargin=0cm]
\medskip

\noindent\textbf{Case 1:} $p \in P(G)$.  Then $pG = G$.  
\iffalse This can be seen as follows.  
First, $pG \subseteq G$ follows from the closure of $G$ under integer multiplication.  
Second, take any element of $G$; this element must be of
the shape $z\cdot x\cdot y^{-1} \cdot p^{-i}$, where $z$ is the multiplicative constant of $G$ (that is, $z$ is defined 
exactly as in Theorem \ref{thm:subgroupsrationalschar}), $i \in \N$ and $x,y,p$ are pairwise coprime.
Then $z\cdot x\cdot y^{-1} \cdot p^{-i} = p \cdot \left(z \cdot x \cdot y^{-1} \cdot
p^{-i-1}\right)$.  Suppose $y = p_{i_0}^{m_0} \cdot \cdots \cdot p_{i_{\ell}}^{m_{\ell}}$,
where $m_0,\ldots,m_{\ell} \geq 1$.  By the characterisation in Theorem \ref{thm:subgroupsrationalschar}
and the fact that $p$ infinitely divides $G$,    
$a\cdot z \cdot p_{i_j}^{-m_j} \in G$ and $a\cdot z \cdot p^{-i-1} \in G$ for all $a \in \Z$ and $j \in \{0,\ldots,\ell\}$.
Pick integers $a_0,\ldots,a_{\ell},a_{\ell+1}$ such that $\sum_{h=0}^{\ell} \left(a_h \cdot p^{i+1} \cdot 
\prod_{j\neq h}p_{i_j}^{m_j}\right) + a_{\ell+1} \cdot \prod_{j=0}^{\ell} p_{i_j}^{m_j} = x$.
Then $z\cdot x \cdot y^{-1} \cdot p^{-i-1} = z \cdot \left( \sum_{j=0}^{\ell} \left(a_j \cdot p_{i_j}^{-m_j}\right) +  
a_{\ell+1}\cdot p^{-i-1}\right) \in G$.  Hence $z \cdot x \cdot y^{-1} \cdot p^{-i} \in pG$, as desired. 
\fi
It follows that $G/pG = G/G = \{0\}$; the same argument shows that $[\Z]_{P(G)}/p[\Z]_{P(G)} = \{0\}$.

\medskip
%Thus $\dim_{\Z_p}(G/pG) = \dim_{\Z_p}\left([\Z]_{P(G)}/p[\Z]_{P(G)}\right) = 1$.  
\noindent\textbf{Case 2:} $p \notin P(G)$.  Then there is some non-zero $x \in G$ such that $p^{-1} \cdot x \notin G$.  
It may be assumed without loss of generality that $x = 1$ because if
$x = u\cdot v^{-1}$ for some non-zero integers $u$ and $v$, then,
taking $G' = vG$, $G'$ is a subgroup of $(\mathbb{Q},+)$
that is isomorphic to $G$ such that $P(G) = P(G')$.
 %for the following reasons:
\iffalse
\begin{itemize}
\item Suppose $x = u\cdot v^{-1}$ for some non-zero integers $u$ and $v$.
Then, taking $G' = vG$, $G'$ is a subgroup of $(\mathbb{Q},+)$
that is isomorphic to $G$; furthermore, $p^{-1}\cdot u \notin G'$ (for otherwise $p^{-1} \cdot u \cdot v^{-1} \in G$)
and hence $p^{-1} \notin G'$.
\item  Since $G'/pG'$ and $[\Z]_{P(G')}/p[\Z]_{P(G')}$ are isomorphic
to $G/pG$ and $[\Z]_{P(G)}/p[\Z]_{P(G)}$ respectively, it suffices  %to establish the required
%result by replacing $G$ with $G'$ in the argument, i.e. 
to prove that
$\dim_{\Z_p}(G'/pG') = \dim_{\Z_p}\left([\Z]_{P(G')}/p[\Z]_{P(G')}\right)$. 
\end{itemize}
\fi
Assuming $p^{-1} \notin G$, there is a fixed integer $z$ such that $G$ is generated 
(as a subgroup of $(\mathbb{Q},+)$) by
rationals of the shape $zq^{-n}$, where $q$ ($\neq p$) is prime and $n \geq 1$.
As before, it may be assumed without loss of generality that $z = 1$.
It will be shown that each such generator is congruent to an integer modulo $pG$.
Fix a generator of the shape $q^{-n}$.  Let $m$ and $l$ be integers such that $mq^n + lp = 1$.
Then $q^{-n} = m + p\cdot\left( l \cdot q^{-n}\right)$.  It follows that $G/pG$ is isomorphic to $\Z_p$
and so $\dim_{\Z_p}(G/pG) = \dim(\Z_p) = 1$.  Using the case assumption that $p^{-1} \notin P(G)$, one
also has that $p^{-1} \notin [\Z]_{P(G)}$, and so the same argument as before yields
$\dim_{\Z_p}\left([\Z]_{P(G)}/p[\Z]_{P(G)}\right) = 1$.  
\end{proof}      

\noindent
Note that $\theory([\Z]_K,+)$ is undecidable; in contrast,
%With $[\mathbb{Z}]_{K}$ we obtain a 
%subgroup of the rationals with undecidable theory. In contrast, 
for $R$ Martin-L\"{o}f random we have
%As for $R$ Martin-Löf random we have 
$P(G_R)=\varnothing$, so the promised corollary follows.

\begin{corollary}
Let $R \in 2^\omega$ be Martin-Löf random. % w.r.t.~the Lebesgue measure.
Then $(G_R,+)$ and $(\mathbb{Z},+)$ have the same theories.
\end{corollary}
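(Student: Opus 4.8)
The plan is to obtain this as an essentially immediate consequence of the preceding theorem, which gives $G \equiv [\mathbb{Z}]_{P(G)}$ for every subgroup $G$ of $(\mathbb{Q},+)$, together with the earlier Lemma guaranteeing that each $n_i$ from Definition \ref{Group1} is finite when $R$ is Martin-Löf random. Concretely, I would first apply that theorem with $G = G_R$ to get $(G_R,+) \equiv [\mathbb{Z}]_{P(G_R)}$.

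Next I would check that $P(G_R) = \varnothing$. Recall $i \in P(G_R)$ means $x/p_i^n \in G_R$ for \emph{every} $x \in G_R$ and every $n \in \N$; since $1 \in G_R$ always, it suffices, for each $i$, to produce a single $n$ with $p_i^{-n} \notin G_R$. By the Lemma, Martin-Löf randomness of $R$ forces $n_i$ to be finite for every $i$, since a Martin-Löf random real cannot contain only finitely many zeros. Now $G_R$, being generated by $1$ together with the rationals $p_i^{-n_i}$, is described by Theorem \ref{thm:subgroupsrationalschar} with multiplicative constant $z = 1$, so the powers $p_i^{-n}$ lying in $G_R$ are exactly those with $n$ bounded in terms of $n_i$; in particular some $p_i^{-n}$ is missing, so $i \notin P(G_R)$. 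As $i$ was arbitrary, $P(G_R) = \varnothing$.

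Finally I would unwind the notation: by definition $[\mathbb{Z}]_{\varnothing}$ is the subgroup of $(\mathbb{Q},+)$ generated by $\{1\}$, which is $(\mathbb{Z},+)$ itself. Chaining the three steps yields $(G_R,+) \equiv (\mathbb{Z},+)$, so the two groups satisfy the same first-order sentences; by the decidability of Presburger Arithmetic this common theory is moreover decidable.

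There is no genuine obstacle here, as the corollary is a direct specialisation of the preceding theorem; the only mildly delicate point is the bookkeeping in the middle step --- namely, reading off from the definition of $P(G)$ that testing the single witness $x = 1$ suffices, and that finiteness of all the $n_i$ is exactly what excludes every prime from $P(G_R)$.
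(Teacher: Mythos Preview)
Your proposal is correct and follows exactly the paper's approach: apply the preceding theorem to obtain $G_R \equiv [\mathbb{Z}]_{P(G_R)}$, invoke the earlier Lemma to conclude $P(G_R)=\varnothing$, and observe that $[\mathbb{Z}]_\varnothing = (\mathbb{Z},+)$. The paper states this in a single line just before the corollary; your write-up simply unpacks the same argument in more detail.
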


\noindent
One may ask whether this still holds for richer structures. This is not the case, as for example the theory of $(G,+,<)$ is different from $\theory(\mathbb{Z},+,<)$, as in the latter $x=1$ is a satisfying assignment for the formula $x+x>x \wedge \forall y<x\: \neg y+y>y$. There does not exist an $x\in G_R$ with this property for a ML-random $R$.

%Further, in case $(G,+,0)$ is considered, \ldots

%\bigskip\bigskip

\section{Learning Finitely Generated Subgroups of a Random Subgroup of Rationals}
\label{sec:learnrandomsubgroups}

In this section, we investigate the learnability of non-trivial finitely generated subgroups of any 
group $G_R$ generated by a Martin-Löf random sequence $R$ such that $R \leq_T K$.
More specifically, we will examine for any given $G_R$ the set $F_{\beta}$ of 
representations of elements of any non-trivial finitely generated subgroup $F$ of $G_R$ with respect to a fixed generating 
sequence $\beta$ for $G_R$ such that all $F_{\beta}$ are r.e., and consider the learnability of the class 
of all such sets of representations.
        
We will consider learning from \emph{texts}, where a text is an infinite sequence that
contains all elements of $F_\beta$ for the $F$ to be learnt %the subgroup representation to be learnt
 and may contain the symbol $\#$, which indicates a pause in the data presentation and thus no new information.
For any text $T$ and $n \in \N$, $T(n)$ denotes the $(n+1)$-st term of $T$ and
$T[n]$ denotes the finite sequence $T(0),\ldots,T(n-1)$, i.e., the \emph{initial segment} of length 
$n$ of $T$; $\ran(T[n])$ denotes the set of non-pause elements occurring in $T[n]$.
A \emph{learner} $M$ is a recursive function mapping $(\Z^{<\omega}\cup\{\#\})^*$ into
$\N\cup\{?\}$; the $?$ symbol permits $M$ to abstain from conjecturing at any stage.  
A learner is fed successively with growing initial segments of the text and it
produces a sequence of conjectures $e_0,e_1,e_2,\ldots$, which are interpreted
with respect to a fixed \emph{hypothesis space}. 
In the present paper, we stick to the standard hypothesis space, a fixed G\"{o}del numbering $W_0,W_1,W_2,\ldots$ of all 
r.e.\ subsets of $\mathbb{Z}^{<\omega}$. In our setting from the generator $\frac{q}{m}$ of $F$ we can immediately derive an index $e$ for $F_\beta$ and therefore in the proofs we argue for learning $q$ and $m$.
The learner is said to \emph{behaviourally
correctly} (denoted $\Bc$) learn the representation $F_{\beta}$ of 
a finitely generated subgroup $F$ with respect to a fixed generating sequence $\beta$ for $G_R$ 
iff on every text for $F_{\beta}$, the sequence of conjectures output by the learner 
converges to a correct hypothesis; in other words, the learner almost always outputs an 
r.e.\ index for $F_{\beta}$ \cite{Fe72,CS83,Bar74}.  If almost all of the learner's 
hypotheses on the given text are equal in addition to being correct, then the learner is 
said to \emph{explanatorily} (denoted $\Ex$) learn $F_{\beta}$ (or it learns $F_{\beta}$ 
\emph{in the limit}) \cite{Go67}.

A useful notion that captures the idea of the learner converging on a given text 
is that of a \emph{locking sequence}, or more generally that of a 
\emph{stabilising sequence}.
A sequence $\sigma \in (\N \cup \{\#\})^*$ is called a \emph{stabilising sequence}
\cite{Fulk85} 
for a learner $M$ on some set $L$ if $\ran(\sigma) \subseteq L$ and for
all $\tau \in (L \cup \{\#\})^*$, $M(\sigma) = M(\sigma\circ\tau)$.
A sequence $\sigma \in (\N \cup \{\#\})^*$ is called a \emph{locking sequence}
\cite{BB75}
for a learner $M$ on some set $L$ if $\sigma$ is a stabilising sequence for
$M$ on $L$ and $W_{M(\sigma)} = L$. 

The following proposition due to Blum and Blum \cite{BB75} will be occasionally 
useful.

\begin{proposition}\label{prop:lockingsequence}\cite{BB75}
If a learner $M$ explanatorily learns some set $L$, then there exists a
locking sequence for $M$ on $L$.  Furthermore, all stabilising sequences for
$M$ on $L$ are also locking sequences for $M$ on $L$.
\end{proposition}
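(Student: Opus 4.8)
The plan is to reduce the proposition to two claims: first, that $M$ possesses at least one stabilising sequence on $L$; and second, that \emph{every} stabilising sequence $\sigma$ for $M$ on $L$ additionally satisfies $W_{M(\sigma)} = L$, i.e.\ is in fact a locking sequence. Granting these, both assertions follow at once: the ``furthermore'' part is precisely the second claim, and the existence of a locking sequence is obtained by applying the second claim to the stabilising sequence supplied by the first.

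For the second claim (the easy direction), suppose $\sigma$ is a stabilising sequence for $M$ on $L$, so in particular $\ran(\sigma) \subseteq L$. Extend $\sigma$ to an infinite text $T$ for $L$ by appending after it some enumeration of all elements of $L$, padding with $\#$ where convenient (and noting that when $L$ is empty the only candidate $\sigma$ consists of $\#$'s and $T = \sigma\#\#\#\ldots$); such a $T$ exists exactly because $\ran(\sigma)\subseteq L$. For every $n \geq |\sigma|$ the initial segment $T[n]$ is of the form $\sigma\circ\tau$ with $\ran(\tau)\subseteq L$, so the stabilising property gives $M(T[n]) = M(\sigma)$. Hence the conjecture sequence of $M$ on $T$ converges to $M(\sigma)$, and since $M$ explanatorily learns $L$ and $T$ is a text for $L$, this limiting conjecture must be a correct index; that is, $W_{M(\sigma)} = L$, so $\sigma$ is a locking sequence.

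For the first claim, assume toward a contradiction that $M$ has \emph{no} stabilising sequence on $L$. Fix an enumeration $x_0, x_1, x_2, \ldots$ of $L$ (repeating $\#$ in place of $x_i$ once the elements of $L$ are exhausted, should $L$ be finite or empty). Construct finite sequences $\sigma_0 = \varepsilon, \sigma_1, \sigma_2, \ldots$, each $\sigma_{i+1}$ extending $\sigma_i$, with $\ran(\sigma_i)\subseteq L$, as follows: given $\sigma_i$, it is not a stabilising sequence by assumption, so there is a $\tau_i$ with $\ran(\tau_i)\subseteq L$ and $M(\sigma_i\circ\tau_i)\neq M(\sigma_i)$; put $\sigma_{i+1} = \sigma_i\circ\tau_i\circ x_i$. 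The limit $T = \bigcup_i \sigma_i$ is a text for $L$, since $\ran(T)\subseteq L$ and every element of $L$ occurs in it. But by construction $M$ changes its conjecture at least once among positions $|\sigma_i|, |\sigma_i|+1, \ldots, |\sigma_{i+1}|$ for each $i$, and $|\sigma_i|\to\infty$, so $M$ makes infinitely many mind changes on $T$ and does not converge — contradicting explanatory learning of $L$. Therefore a stabilising sequence exists, which completes the argument.

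The only genuinely delicate point is the construction in the first claim: one must interleave the steps so as to \emph{simultaneously} destroy stability at each stage (forcing a mind change) and guarantee that the limiting sequence is a bona fide text for $L$ (every element eventually listed); the degenerate cases where $L$ is empty or finite are absorbed by padding with $\#$. This is exactly the classical argument of Blum and Blum \cite{BB75}, and nothing further is required.
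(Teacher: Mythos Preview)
Your proof is correct and is precisely the classical Blum--Blum argument. Note, however, that the paper does not supply its own proof of this proposition: it is stated with a citation to \cite{BB75} and used as a black box, so there is no paper proof against which to compare approaches. Your write-up would serve perfectly well as the omitted justification.
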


\noindent Clearly, also a $\Bc$-version of Proposition~\ref{prop:lockingsequence} holds.

It is not clear in the first place whether or not every finitely generated subgroup
of a randomly generated subgroup of $(\mathbb{Q},+)$ can even be represented
as an r.e.\ set.  This will be clarified in the next series of results.
We recall that a \emph{finitely generated subgroup} $F$ of $G_R$ is any subgroup
of $G_R$ that has some \emph{finite generating set} $S$, which means that every element of
$F$ can be written as a linear combination of finitely many elements of $S$ and
the inverses of elements of $S$.  $F$ is \emph{trivial} if it is
equal to $\{0\}$; otherwise it \emph{non-trivial}.  Furthermore, if $G_R$ is a subgroup of 
$(\mathbb{Q},+)$, then any finitely generated subgroup $F$ of $G_R$ is \emph{cyclic}, that is,
$F = \spn{\displaystyle\frac{q}{m}}$ for some $q \in \N$ and $m \in \N$ with $\gcd(q,m) = 1$
(see, for example, \cite[Theorem 8.1]{Szabosands}).  The latter fact will be
used freely throughout this paper.  For any generating sequence $\beta$ for $G_R$
and any finitely generated subgroup $F$ of $G_R$, the set of representations of elements of $F$
with respect to $\beta$ will be denoted by $F_{\beta}$. 

\begin{theorem}
\label{FinSubgroupsRE}
Let $R \leq_T K$ be Martin-Löf random. %w.r.t.~the Lebesgue measure on $2^\omega$.
Then there is a generating sequence $(b_i)_{i<\omega}$ of $G_R$ such that for every non-trivial finitely generated subgroup $F$ of $G_R$ the set $F_\beta$ is r.e. 
% the following set is r.e.
%$$\{ \sigma \in \mathbb{Z}^{<\omega} \mid \sigma \text{ represents w.r.t.}~(b_i)_{i<\omega} \text{ an element of } F \}.$$
\end{theorem}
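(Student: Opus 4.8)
The plan is to build on the generating sequence $(b_i)_{i<\omega}$ constructed in the proof of Theorem~\ref{coRE} and show directly that, with respect to it, every non-trivial finitely generated subgroup $F=\spn{q/m}$ of $G_R$ has an r.e.\ set of representations $F_\beta$. First recall the key fact used freely in this section: every finitely generated subgroup of $G_R$ is cyclic, so $F = \spn{q/m}$ for uniquely determined coprime $q \in \N$ and $m \in \N$ with $q/m \in G_R$. The crucial structural observation is that $m$ must be a product of prime powers $p_i^{k_i}$ with $k_i \le n_i$, where the $n_i$ are as in Definition~\ref{Group1}; in particular, $q/m \in G_R$ forces $1/m \in G_R$ (after clearing the factor $q$, which is coprime to $m$), and hence $F = \spn{q/m}$ consists exactly of those rationals $a q/m$ with $a \in \Z$.

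Next I would describe the enumeration procedure for $F_\beta$. Given $\sigma \in \Z^{<\omega}$, we want to recognise whether $\sigma \cdot \beta\res_{|\sigma|-1}$ lies in $\spn{q/m}$, i.e.\ whether it is an integer multiple of $q/m$. Since addition is recursive and each $b_i$ is a rational (obtained in the limiting-recursive construction), the value $x_\sigma := \sum_i \sigma(i) b_i$ is a rational number that can be computed exactly once the $b_i$ appearing in $\sigma$ have stabilised — but stabilisation is only $K$-recognisable, not recursive. This is where the co-r.e.\ equality from Theorem~\ref{coRE} is the wrong direction, and the work lies. The way around it: to enumerate $F_\beta$ we do not need to decide membership; we only need to \emph{confirm} membership when it holds. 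Fix once and for all a representation $\tau_0$ of $q/m$ with respect to $(b_i)_{i<\omega}$ (such a $\tau_0$ exists because $(b_i)$ generates $G_R$ and $q/m \in G_R$). Then $\sigma \in F_\beta$ iff there is some $a \in \Z$ with $\sigma$ and $a\tau_0$ representing the same element of $G_R$, i.e.\ iff $(\sigma, a\tau_0)$ is an equal pair for some $a$. Equality of representations is co-r.e., which again seems to point the wrong way; so instead I would run the generating-sequence construction from Theorem~\ref{coRE} and use that \emph{inequalities} stabilise: at stage $s$ we have the finite set $\Delta_{\beta_s}$ of confirmed-unequal pairs, and a pair $(\sigma, a\tau_0)$ that never enters $\Delta_{\beta_s}$ (for $s$ beyond stabilisation of all relevant $b_i$ and beyond the coefficient bound) is genuinely equal. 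The point is that we can \emph{search}: enumerate $\sigma$ into $F_\beta$ as soon as we find, for some bound $s$ and some $a$ with $|a| \le s$, that all entries $b_i$ used by $\sigma$ and $a\tau_0$ have appeared to stabilise through stage $s$ and $(\sigma, a\tau_0) \notin \Delta_{\beta_s}$ while all coefficients are within $\{-s-1,\dots,s+1\}$ — but this last step is not yet an r.e.\ procedure because "appeared to stabilise" is not decidable.

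To fix this last gap — and here is the real idea — I would not try to detect stabilisation at all. Instead, note that the construction in Theorem~\ref{coRE} has the property that an entry $\beta_s(i)$, once replaced, becomes an integer and is never changed again; and the only $b_i$ that are not integers are the "prime-power fractions" $1/p_i^{n_i}$ that were guessed correctly. For a \emph{fixed} target $q/m$ with $m = \prod p_{i} ^{k_i}$, only finitely many primes are involved, and for those primes the relevant fact "$p_i^{-k_i} \in G_R$" is a $\Sigma^0_1$ (in fact decidable-in-$R$, hence $K$-) event witnessed at some finite stage — but more usefully, we can build $F_\beta$ using the \emph{definition} of $G_R$ directly rather than via $\Delta$: since $R \le_T K$, membership of a \emph{given} rational $y$ in $G_R$ is co-r.e.\ via Theorem~\ref{coRE}; but the map $e \mapsto$ "an index for $F_\beta$" need only be valid (not decidable), and $F_\beta$ itself is r.e.\ because, working relative to the finitely many primes dividing $m$, we can enumerate $\{\sigma : x_\sigma \in \spn{q/m}\}$ by waiting for the finitely many coordinates determining $x_\sigma$ to settle in the limiting-recursive approximation to $(b_i)$ and simultaneously searching for the integer $a$. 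Because each non-integer $b_i$ changes only finitely often and each integer entry never changes, for every fixed $\sigma$ the value $x_\sigma$ is eventually correctly computed, and membership $x_\sigma \in \spn{q/m}$, being a $\Sigma^0_1$ condition on that eventually-correct value, gives an r.e.\ set. The main obstacle, which the write-up must handle carefully, is exactly this interplay: showing that one can enumerate $F_\beta$ using only the limiting-recursive approximation to $\beta$ together with the co-r.e.\ equality, and that the resulting r.e.\ set is \emph{exactly} $F_\beta$ and not a proper subset — i.e.\ every genuine representation is eventually enumerated, which follows because the approximation to the finitely many relevant $b_i$ does stabilise and the witness $a$ for $x_\sigma = a q/m$ is found by dovetailed search. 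I expect the bookkeeping around "which finitely many coordinates matter" and the uniformity in $F$ (so that the generating sequence $\beta$ is a single sequence working for all $F$ simultaneously) to be the delicate part.
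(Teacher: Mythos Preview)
Your proposal has a genuine gap at the heart of the argument. You commit to the \emph{unmodified} generating sequence from Theorem~\ref{coRE} and then try to enumerate $F_\beta$ by computing, at each stage $s$, the current approximate value $x_\sigma^{(s)} := \sum_i \sigma(i)\beta_s(i)$ and enumerating $\sigma$ once $x_\sigma^{(s)} \in \spn{q/m}$. You correctly worry about \emph{under}-enumeration (``not a proper subset''), but you never address \emph{over}-enumeration: nothing prevents $x_\sigma^{(s)}$ from landing in $\spn{q/m}$ at some intermediate stage while the true $x_\sigma$ does not. In the construction of Theorem~\ref{coRE}, the replacement integer $w$ is chosen only to preserve \emph{inequalities}, so a later replacement can easily move $x_\sigma$ out of $\spn{q/m}$ after you have already enumerated $\sigma$. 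Your claim that ``membership $x_\sigma \in \spn{q/m}$, being a $\Sigma^0_1$ condition on that eventually-correct value, gives an r.e.\ set'' conflates two things: membership of a \emph{known} rational in $\spn{q/m}$ is decidable, but $x_\sigma$ is only limit-computable, and a decidable predicate composed with a limit-computable function is in general $\Delta^0_2$, not $\Sigma^0_1$.

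The paper fixes exactly this by \emph{modifying} the Theorem~\ref{coRE} construction rather than reusing it verbatim. When an entry $\beta_s(j)=p_\ell^{-n_{\ell,s}}$ must be replaced by an integer $w$, the paper chooses $w$ via the Chinese Remainder Theorem so that $p_\ell^{n_{\ell,s}}w \equiv 1 \pmod{p_i^{s}}$ for all $i<\ell$. For a fixed $F=\spn{q/m}$, once $s$ is past a threshold $s_F$ (large enough that all primes dividing $q$ or $m$ have stabilised and are smaller than any $p_\ell$ that could still be replaced), this congruence forces $\sigma(j)(w-\beta_s(j)) \in \spn{q} \subseteq F$, so replacements \emph{preserve} membership in $F$. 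That is the missing idea: one engineers the sequence so that $F_{\beta_s} \subseteq F_{\beta_{s+1}\res(s+1)}$ for all $s \ge s_F$, which is precisely what makes the naive stage-by-stage enumeration sound.
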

\def\proofofFinSubgroupsRE{
\begin{proof}
We denote the set of all non-trivial finitely generated subgroups of $G_R$ by $\mathcal{F}$ and modify the construction of the generating sequence $(b_i)_{i<\omega}$ in the proof of Theorem~\ref{coRE}.
In contrast we show that for every $F \in \mathcal{F}$ there is some $s_F$ such that for every $s \geq s_F$ in step $s+1$ we can assure that replacements do not violate the property to represent an element of $F$,
i.e. it is possible to change entries of $\beta_s$ to obtain $\beta_{s+1}\res(s+1)$, such that we have $F_{\beta_s} \subseteq F_{\beta_{s+1}\res(s+1)}$, where
\begin{align*}
F_{(q_i)_{i<s+1}} = \{ \,\sigma \in \,&\{-s-1,\ldots,s+1\}^{s+1} \mid \\
& \sigma \text{ represents an element of $F$ w.r.t. } (q_i)_{i<s+1} \,\}.
\end{align*}
Let $F \in \mathcal{F}$, then there are $q$ and $m$ coprime, such that $F$ is generated by $\frac{q}{m}$.
Let $h \in \mathbb{N}$ be such that all prime factors of $q$ or $m$ are less or equal to $p_h$. We let $s_F \in \mathbb{N}$ be such that all $b_i$ having powers of a prime below $p_h$ as denominator have stabilized up to stage $s_F$ and the exponents occurring in the prime factorizations of $q$ and $m$ are $\leq s_F$.

We may assume that only the $j$-th component $\beta_s(j)=\frac{1}{p_\ell^{n_{\ell,s}}}$ for some $\ell>h$ has to be replaced by some integer $w$.
Thus, for all $\sigma \in F_{\beta_s}$ we want to make sure
$$\sum_{\overset{i=0,}{i\neq j}}^{s} \sigma(i) \beta_{s}(i) + \sigma(j)w \in F.$$

For this, it suffices to show that $\sigma(j)(w-\beta_s(j))\in F$.
By the Chinese Remainder Theorem there exists some integer $w$ such that for all $i<\ell$ we have $1 \equiv p_\ell^{n_{\ell,s}}w \mod p_i^{s}$.
%We obtain $\sigma(j)\left(w-\frac{1}{p_l^{n_{l,s}}}\right) \in F$.
With this there is some integer $z$ such that
$$\sigma(j)\left(w-\beta_s(j)\right)
=\frac{\sigma(j)}{{p_\ell^{n_{\ell,s}}}}\left(p_\ell^{n_{\ell,s}}w-1\right)
=z\:\frac{\sigma(j)}{{p_\ell^{n_{\ell,s}}}}\,\prod_{i<\ell}{p_i^{s}}.$$
Because $\ell > h$ we obtain that $\sigma(j)$ divided by $p_\ell^{n_{\ell,s}}$ is an integer and moreover $q$ is a factor of $\prod_{i<\ell}{p_i^{s}}$.
All integer-multiples of $q$ are members of $F$.
In a nutshell, enumerating
$\{ \sigma \in \mathbb{Z}^{<s_F} \mid \sigma\circ 0^{s_F-|\sigma|} \in F_{\beta_{s_F}} \}$
and all elements of $F_{\beta_s}$ for $s\geq s_F$ yields
the set of all representations of elements of $F$ w.r.t.~$(b_i)_{i<\omega}$.
%As $F$ is finitely generated, we pick $i_0,\ldots,i_k\in\mathbb{N}$ such that all generators of $F$ can be written as linear combinations of $\frac{1}{p_{i_0}^{n_{i_0}}},\ldots,\frac{1}{p_{i_k}^{n_{i_k}}}$.
%Then there are $u_0,\ldots,u_k\in\mathbb{Z}$ with
%$$\sum_{i=0}^{s} \sigma(i) \beta_{s}(i) = \sum_{j=0}^k u_j \frac{1}{p_{i_j}^{n_{i_j}}}.$$
\end{proof}
}
\proofofFinSubgroupsRE

\begin{remark}
The statement of Theorem \ref{FinSubgroupsRE} excludes the trivial subgroup because for any
generating sequence $\beta := (b_i)_{i < \omega}$ for $G_R$, $\spn{0}_{\beta}$ cannot be r.e.  To see this,
suppose, by way of contradiction, that $\spn{0}_{\beta}$ were r.e.  Given any $\sigma,\sigma'\in\Z^{<\omega}$,
set $\ell = \max(\{|\sigma|-1,|\sigma'|-1\})$, and for all $i \in \{0,\ldots,\ell\}$, $w_i = \sigma(i)$ if 
$i \leq |\sigma|-1$ and $0$ otherwise,
and $v_i = \sigma'(i)$ if $i \leq |\sigma'|-1$ and $0$ otherwise.
\iffalse 
$$
w_i = \left\{\begin{array}{ll}
\sigma(i) & \mbox{if $i \leq |\sigma|-1$;} \\
0 & \mbox{otherwise,}\end{array}\right.
$$
and 
$$
v_i = \left\{\begin{array}{ll}
\sigma'(i) & \mbox{if $i \leq |\sigma'|-1$;} \\
0 & \mbox{otherwise,}\end{array}\right.
$$
for all $i \in \{0,\ldots,\ell\}$.
\fi  
Then $\sigma \cdot \beta\res_{|\sigma|-1} = \sigma' \cdot \beta\res_{|\sigma'|-1} \Leftrightarrow \sigma \cdot \beta\res_{|\sigma|-1} - \sigma' \cdot \beta\res_{|\sigma'|-1} = 0 \Leftrightarrow \sum_{i=0}^{\ell} (w_i - v_i)b_i = 0  \Leftrightarrow (w_0-v_0,w_1-v_1,\ldots,w_{\ell}-v_{\ell}) \in \spn{0}_{\beta}$.
\iffalse
\begin{equation*}
\begin{split}
\sigma \cdot \beta\res_{|\sigma|-1} = \sigma' \cdot \beta\res_{|\sigma'|-1} &\Leftrightarrow \sigma \cdot \beta\res_{|\sigma|-1} - 
\sigma' \cdot \beta\res_{|\sigma'|-1} = 0 \\
& \Leftrightarrow \sum_{i=0}^{\ell} (w_i - v_i)b_i = 0 \\
& \Leftrightarrow (w_0-v_0,w_1-v_1,\ldots,w_{\ell}-v_{\ell}) \in \spn{0}_{\beta}. 
\end{split}
\end{equation*}
\fi
Thus if $\spn{0}_{\beta}$ were r.e., then equality with respect to $\beta$  would also be r.e., which, 
as was shown earlier, is impossible. 
\end{remark}

%\bigskip
\noindent
We %conclude the present section by 
note that there cannot be any generating sequence $\beta$ for $G_R$
such that there are finitely generated subgroups $F,F'$ of $G_R$ with $F_{\beta}$ r.e.\ and $F'_{\beta}$
co-r.e.

\begin{theorem}\label{thm:nogenseqfinitereandcore}
Let $R \leq_T K$ be Martin-L\"{o}f random. % w.r.t.~the Lebesgue measure on $2^\omega$.
Let $\beta$ be any generating sequence for $G_R$.  Then for any finitely generated subgroups
$F$ and $F'$ of $G_R$, one of the following holds: (i) both $F_{\beta}$ and $F'_{\beta}$ are r.e.,
(ii) both $F_{\beta}$ and $F'_{\beta}$ are co-r.e., or (iii) at least one of $F_{\beta}$ and $F'_{\beta}$ is
neither r.e.\ nor co-r.e.  
\end{theorem}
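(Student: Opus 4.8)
The plan is to reduce the whole statement to one observation: whenever $F_\beta$ is r.e.\ and $F'_\beta$ is co-r.e., both are in fact recursive; the trichotomy then falls out of a short Boolean manipulation of the negations of (i)--(iii). The entire argument takes place inside a single cyclic subgroup of $(\mathbb{Q},+)$.

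\textbf{Setup.} Put $H := F + F'$. Being generated by finitely many rationals, $H$ is a finitely generated subgroup of $(\mathbb{Q},+)$, hence cyclic, say $H = \spn{g}$; since $F, F' \subseteq G_R$ we have $g \in G_R$. As $F$ and $F'$ are non-trivial subgroups of the infinite cyclic group $H$, the indices $e := [H : F]$ and $e' := [H : F']$ are finite and $F = \spn{eg}$, $F' = \spn{e'g}$. Fix a representation $\rho \in \Z^{<\omega}$ of $g$ with respect to $\beta$ (one exists because $\beta$ generates $G_R$), and for $r \in \N$ let $\rho_r := r\rho$, which represents $rg$. Write $v(\sigma)$ for the element of $G_R$ that $\sigma \in \Z^{<\omega}$ represents with respect to $\beta$, and let $\sigma \ominus \tau$ be the coordinatewise difference of $\sigma$ and $\tau$ after padding the shorter with trailing zeros; then $\sigma \mapsto \sigma \ominus \rho_r$ is a total recursive map on $\Z^{<\omega}$ with $v(\sigma \ominus \rho_r) = v(\sigma) - rg$. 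It is harmless here that $\beta$ itself may be non-recursive: $\rho$ enters only as a fixed finite parameter.

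\textbf{The identities.} From the coset decomposition $H = \bigsqcup_{r<e}(F + rg)$, and the analogous one with $e', F'$, one reads off for every $\sigma \in \Z^{<\omega}$:
\[
\sigma \in H_\beta \iff (\exists r < e)\,[\sigma \ominus \rho_r \in F_\beta] \iff (\exists r < e')\,[\sigma \ominus \rho_r \in F'_\beta],
\]
and, because $F$ is precisely the coset $F + 0\cdot g$ of that partition,
\[
\sigma \notin F_\beta \iff \sigma \notin H_\beta \ \text{ or }\ (\exists r \in \{1, \dots, e-1\})\,[\sigma \ominus \rho_r \in F_\beta].
\]
Informally: $v(\sigma)$ lies in $H$ iff one of finitely many fixed shifts drags it into $F$ (resp.\ $F'$), and once $v(\sigma) \in H$ it is which of these shifts works, i.e.\ the coset, that decides membership in $F$. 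Each set on a right-hand side is built from $F_\beta$, $F'_\beta$, $H_\beta$, or their complements by preimages under the recursive maps $\sigma \mapsto \sigma \ominus \rho_r$ together with finite unions and intersections.

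\textbf{The deduction.} Suppose the trichotomy fails for $F, F'$. Negating (i)--(iii): each of $F_\beta, F'_\beta$ is r.e.\ or co-r.e., they are not both r.e., and not both co-r.e.; a short case check then shows neither can be recursive, so one of them is r.e.\ but not co-r.e.\ and the other co-r.e.\ but not r.e. By the symmetry of the statement we may assume $F_\beta$ is r.e.\ (and not co-r.e.) and $F'_\beta$ is co-r.e. The first identity presents $H_\beta$ as a finite union of recursive preimages of the r.e.\ set $F_\beta$, so $H_\beta$ is r.e.; the same identity with $F'_\beta$ presents $\overline{H_\beta}$ as a finite intersection of recursive preimages of the r.e.\ set $\overline{F'_\beta}$, so $H_\beta$ is co-r.e. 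Hence $H_\beta$ is recursive. Then the second identity gives $\overline{F_\beta} = \overline{H_\beta} \cup \bigcup_{r=1}^{e-1}\{\sigma : \sigma \ominus \rho_r \in F_\beta\}$, a union of a recursive set with finitely many r.e.\ sets, hence r.e.; so $F_\beta$ is co-r.e., contradicting the assumption. Therefore the trichotomy holds.

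\textbf{Main obstacle.} The only real content is the verification of the three displayed equivalences, especially the last, which says that membership of $\sigma$ in $F_\beta$ is determined by the single bounded piece of coset information separating $v(\sigma)$ inside $H$. This rests solely on $H = F + F'$ being cyclic and containing $F$ and $F'$ with finite index, so the proof uses nothing about the Martin-Löf randomness or limit-recursiveness of $R$ beyond guaranteeing that $G_R$, and with it $H$, is a genuine subgroup of the rationals.
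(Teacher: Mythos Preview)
Your proof is correct and follows a genuinely different route from the paper's. The paper assumes $F_\beta$ r.e.\ and $F'_\beta$ co-r.e.\ and then manufactures from these data a strictly increasing recursive enumeration of primes $p_\ell$ with $p_\ell^{-1}\in G_R$, contradicting Proposition~\ref{prop:recursivegeneratingsequence}; the Martin-L\"of randomness of $R$ enters precisely at that point. You instead stay inside the cyclic group $H=F+F'$: the finite coset decompositions exhibit $H_\beta$ simultaneously as a finite union of recursive translates of $F_\beta$ (hence r.e.) and as a set whose complement is a finite intersection of recursive translates of $\overline{F'_\beta}$ (hence co-r.e.), so $H_\beta$ is recursive, and the same decomposition then forces $F_\beta$ to be co-r.e.\ as well. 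This is more elementary and strictly more general---it uses neither the randomness of $R$ nor $R\leq_T K$, so the trichotomy in fact holds for any generating sequence of any subgroup of $(\mathbb{Q},+)$. The paper's argument, by contrast, is thematically tied to the rest of the paper by exhibiting yet another application of Proposition~\ref{prop:recursivegeneratingsequence}. One minor remark: you assume $F$ and $F'$ are non-trivial so that the indices $e,e'$ are finite; in the failing case $F$ is automatically non-trivial since $\spn{0}_\beta$ is never r.e., while for $F'$ this is a tacit restriction---but the paper's own proof (where $m''=0$ would vacuously falsify Condition~2) carries the same one.
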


\def\proofofnogenseqfinitereandcore{
\begin{proof}
Fix any generating sequence $\beta := (b_i)_{i < \omega}$ for $G_R$.  Assume, by way of contradiction,
that for some $F = \spn{\displaystyle\frac{m'}{m}}$ and $F' = \spn{\displaystyle\frac{m''}{m'''}}$, where $m,m',m'',m''' \in \Z$
and $m,m'''>0$, $F_{\beta}$ is r.e.\ and $F'_{\beta}$ is co-r.e; without loss of generality, assume that $m''' = m$.  
It will be shown that this implies the existence of a strictly increasing recursive enumeration 
$i_0,i_1,i_2,\ldots$ such that $p_{i_j}^{-1} \in G_R$ for all $j$.
For all $j$, let $i_j$ be the first $\ell$ found such that $i_j > i_{j'}$ for all $j' < j$ and
there is some $\sigma \in \Z^{<\omega}$ such that the following conditions are satisfied.
\begin{enumerate}
\item $(m'p_{\ell}\sigma) \cdot \beta\res_{|\sigma|-1} \in F_{\beta}$.
\item For all $i \in \{1,\ldots,p_{\ell}-1\}$, $(m''i\sigma) \cdot \beta\res_{|\sigma|-1} \notin F'_{\beta}$.
\end{enumerate}
The Martin-L\"{o}f randomness of $R$ implies that there are arbitrarily large primes
$p$ with $p^{-1} \in G_R$.  For each $p^{-1} \in G_R$ such that $p > m$, there is some $\sigma_0 \in \Z^{<\omega}$
with $\sigma_0 \cdot \beta\res_{|\sigma_0|-1} = p^{-1}$, and so $m'p\sigma_0 \cdot \beta\res_{|\sigma_0|-1} = m' \in
\spn{F}$.  Moreover, for all $i \in \{1,\ldots,p-1\}$, since $p \nmid i$ and $p \nmid m$,
one has $m''i\sigma_0 \cdot \beta\res_{|\sigma_0|-1} = m''ip^{-1} \notin \spn{F'}$.
Hence $i_j$ is defined for all $j$.  Furthermore, suppose some prime $p_{\ell}$ and $\sigma_1 \in \Z^{<\omega}$
satisfy Conditions 1 and 2.  Condition 1 implies that $m'p_{\ell}\sigma_1 \cdot \beta\res_{|\sigma_1|-1}
\in \spn{\displaystyle\frac{m'}{m}}$, and so 
\begin{equation}\label{eqn:condition1}
p_{\ell}\sigma_1 \cdot \beta\res_{|\sigma_1|-1} \in \spn{m^{-1}}.
\end{equation}
Condition 2 implies that $m''\sigma_1 \cdot \beta\res_{|\sigma_1|-1} \notin \spn{\displaystyle\frac{m''}{m}}$, and
so 
\begin{equation}\label{eqn:condition2}
\sigma_1 \cdot \beta\res_{|\sigma_1|-1} \notin \spn{m^{-1}}.     
\end{equation}
It follows from (\ref{eqn:condition1}) and (\ref{eqn:condition2}) that if $\sigma_1 \cdot \beta\res_{|\sigma_1|-1}
= \displaystyle\frac{q}{r}$ for some relatively prime integers $q$ and $r$ with $r > 0$, then $q \neq 0$, 
$r \nmid m$ and $p_{\ell} \mid r$.
Consequently, $p_{\ell}^{-1} \in G_R$, as required.
But the existence of a strictly increasing recursive enumeration $i_0,i_1,i_2,\ldots$ such 
that $p_{i_j}^{-1} \in G_R$ for all $j$ contradicts Proposition \ref{prop:recursivegeneratingsequence}.
\iffalse
  We first observe that the set of 
representations of $\Z$ with respect to $\beta$ cannot be recursive.  For suppose otherwise; then, given any $n$, 
one has that $b_i = \displaystyle\frac{p}{q}$ for relatively prime integers $p$ and $q$ iff $I_i(j) \notin \Z_{\beta}$ for
all $j \leq q-1$ and $I_i(q) \in \Z_{\beta}$.  Thus one can recursively determine all elements of
$G_R$ of the shape $p^{-1}$ for some prime $p$, which, by Proposition \ref{prop:recursivegeneratingsequence}, 
contradicts the Martin-L\"{o}f randomness of $R$.  
\fi
\end{proof}
}
\proofofnogenseqfinitereandcore

\begin{nota}\label{nota:finitegeneratedrepresent}
Let $R \leq_T K$ be Martin-Löf random % w.r.t the Lebesgue measure on $2^\omega$,
and let $\beta := (b_i)_{i<\omega}$ be any generating sequence of $G_R$. 
For any subgroup $F$ of $G_R$, $F_{\beta}$ denotes the
set of all representations of elements of $F$ with respect to $\beta$, that is,
$F_{\beta} := \{\sigma \in \Z^{<\omega}\mid \sum_{i=0}^{|\sigma|-1}\sigma(i)b_i \in F\}$.
Furthermore, define $\cF_{\beta} := \{F_{\beta}\mid \mbox{$F$ is a non-trivial finitely generated subgroup of $G_R$}\}$.
%$$
%\cF_{\beta} := \{F_{\beta}\mid \mbox{$F$ is a non-trivial finitely generated subgroup of $G_R$}\}.
%$$
\end{nota}

\def\proofoffiniteggbclearn{
It will be verified that $M$ is indeed a behaviourally correct learner for $\cF_{\beta}$. 
Let $F := \spn{\displaystyle\frac{q}{m}}$ be any finitely generated subgroup of
$G_R$, where $q$ and $m$ are relatively prime natural numbers, and let $T$ be any
text for $F_{\beta}$. %  It will be assumed that $q > 0$; if $q = %0$, then $M$ will
%always correctly conjecture $\Z^{<\omega}$.  
Since every integer in $F$ is a multiple of 
$q$ and $T$ must contain $(q)$, it follows that after seeing a sufficiently long segment
$T[s_1]$ of $T$, $M$ will always correctly guess that the numerator of the target subgroup's 
generating element is equal to $q$.  

Suppose $m = r_0^{h_0}\ldots r_k^{h_k}$ for some 
positive integers $h_0,\ldots,h_k$ and primes $r_0,\ldots,r_k$ with $r_0 < \ldots < r_k$.  
For all $i \in \{0,\ldots,k\}$, $T$ contains an element of the shape $\left(0,\ldots,0,qr_i^{h''_i},
0,\ldots,\right.$ $\left.0\right)$, where, if the $(j+1)$-st coordinate of this element is the only non-zero entry, 
then $b_j = r_i^{-h'_i}$ for some $h'_i$ with $h'_i - h''_i = h_i$.  Consequently, for sufficiently 
large $s_2$, $r_0^{h_0}\ldots r_k^{h_k}$ divides $m_{s_2,t}$ whenever $t \geq s_2$ 
($m_{s_2,t}$ may be divisible by other prime powers as well).  Let $j_0,\ldots,j_k$ be such that for all 
$i \in \{0,\ldots,k\}$, $b_{j_i} = r_i^{-h'''_i}$ for some $h'''_i \geq 1$.  Fix $s_3 > \max(\{s_1,
s_2\})$ such that 
\begin{enumerate}[label = \roman*.]
%\item for all $s \geq s_3$ and all $i \in \{0,\ldots,k\}$, $b_{j_i,s} = b_{j_i}$;
\item for all $s \geq s_3$ and all $j \in \N$, if $b_{j,s} = p_i^{-e}$ for some $e \geq 1$
and prime $p_i \leq \max(\{q,r_k\})$, then $b_{j,s} = b_j$ (in other words, all entries of $\beta$
that are equal to $p_i^{-e}$ for some $e \geq 1$ and $p_i \leq \max(\{q,r_k\})$ have stabilised at
stage $s$); in particular, $b_{j_i,s} = b_{j_i}$ for all $i \in \{0,\ldots,k\}$;  
\item $s_3 > \max(\{i,e\})$ for all prime powers $p_i^e$ that are factors of either 
$q$ or $m$. 
\end{enumerate}
%We prove that for all $s \geq s_3$, $M(T[s])$ is an r.e.\ index for $F_{\beta}$.  

First, it will be shown that $W_{M(T[s])} \subseteq F_{\beta}$ if $s \geq s_3$.
Fix any $s \geq s_3$ and $t \geq s$.  By the choice of $s_3$, $r_0^{h_0}\ldots r_k^{h_k}$
divides $m_{s,t}$.  Suppose $m_{s,t} = r_0^{h_0}\ldots r_k^{h_k} g_0^{c_0} \ldots g_{\ell'}^{c_{\ell'}}$ 
for some positive integers $c_0,\ldots,c_{\ell'}$ and primes $g_0,\ldots,g_{\ell'}$ that do not divide $q$ or $m$.
Consider any $\sigma \in \Z^{<\omega}$ such that 
\begin{align}\label{eqn:sum1}
\sum_{i=0}^{|\sigma|-1} \sigma(i)b_{i,s'} \in \spn{qr_0^{-h_0}\cdots 
r_k^{-h_k}g_0^{-c_0}\cdots g_{\ell'}^{-c_{\ell'}}}
\end{align}
for some $s' \geq s$.  %Once again, it 
It may be assumed without loss of generality that for 
any $p' \in \{r_0,\ldots,r_k,g_0,$ $\ldots,g_{\ell'}\}$, 
$\sum_{i=0}^{|\sigma|-1} \sigma(i)b_{i,s'} \notin 
\spn{qp' r_0^{-h_0}\ldots r_k^{-h_k}g_0^{-c_0}\ldots 
g_{\ell'}^{-c_{\ell'}}}$. % (that is, none of the terms $r_0,\ldots,
%r_k,g_0,$ $\ldots,g_{\ell}$ occurring on the right-hand side of Equation (\ref{eqn:sum1}) are
%redundant). 
Let $d_0,\ldots,d_{\ell'}$ be such that for all $i \in \{0,\ldots,\ell'\}$, $b_{d_i,s} =
b_{d_i,s'} = g_i^{-c'_i}$ for some $c'_i \geq c_i$; by the preceding assumption, 
$d_i \in \{0,\ldots,|\sigma|-1\}$ for all $i \in \{0,\ldots,\ell'\}$.     
We show that
\begin{align}\label{eqn:sum2}
\sum_{i=0}^{|\sigma|-1} \sigma(i)b_i \in \spn{qr_0^{-h_0}\cdots 
r_k^{-h_k}};
\end{align}
this will establish that $W_{M(T[s])} \subseteq F_{\beta}$.

The following relation will be established:
\begin{align}\label{eqn:sum3}
\sum_{i=0}^{|\sigma|-1} \sigma(i)b_i \in \spn{qr_0^{-h_0}\cdots r_k^{-h_k}g_0^{-c'_0}\ldots 
g_{\ell'}^{-c'_{\ell'}}}. 
\end{align}
%In order to establish Equation \ref{eqn:sum3}, 
It suffices to show that for every $i \in \{0,\ldots,|\sigma|-1\}$, $(b_i - b_{i,s'}) \sigma(i) \in 
\left\langle qr_0^{-h_0}\cdots r_k^{-h_k}g_0^{-c'_0}\cdots\right.$ $\left.g_{\ell'}^{-c'_{\ell'}}\right\rangle$; this 
fact, combined with (\ref{eqn:sum1}), will establish (\ref{eqn:sum3}). 
%First, observe that for each $i \in \{0,\ldots,|\sigma|-1\}$,
%either $b_{i,s'} = r_i^{-e_i}$ for some prime $r_i$
%with $e_i \geq 0$, or $b_{i,s'} \in \Z$.  %In the latter case, the construction of $\beta$
%implies that $b_{i,s'} = b_i$; 
%In the former case, one has that either
%(i) $r_i^{e_i} = p_j^{n_j}$ for some $j \in \{0,\ldots,\ell\}$ and $\sigma(i)
%= \sigma'(i)p_j^{n''_j}$ for some $n''_j$ with $\gcd(\sigma'(i),p_j) = 1$
%and $n'_j - n''_j = n_j$, or (ii) $\sigma(i) = \sigma'(i)r_i^{e_i}$
%for some $\sigma'(i) \in \Z$.  Since $\gcd(q,p_i) = 1$ for all $i \in \{0,\ldots,\ell\}$,
%Equations (\ref{eqn:sum1}) and (\ref{eqn:sum2}) are equivalent to
%\begin{align}\label{eqn:sum4}
%\sum_{i=0}^{|\sigma|-1} \sigma(i)b_{i,s'}\prod_{i=0}^{\ell}p_i^{n_i} \in \spn{q}
%\end{align}
%and
%\begin{align}\label{eqn:sum5}
%\sum_{i=0}^{|\sigma|-1} \sigma(i)b_i\prod_{i=0}^{\ell}p_i^{n_i} \in \spn{q}
%\end{align}
%respectively.
%In order to establish Equation (\ref{eqn:sum4}), it suffices to show
%that for all $i \in \{0,\ldots,|\sigma|-1\}$, $\prod_{i=0}^{\ell}p_i^{n_i}
%\sigma(i)(b_i - b_{i,s'}) \in \spn{q}$.

Pick any $i \in \{0,\ldots,|\sigma|-1\}$; without loss of generality, assume that  
$b_i \neq b_{i,s'}$. % By the construction of $\beta$ (c.f.~Theorem \ref{FinSubgroupsRE})
%and
By the Martin-L\"{o}f randomness of $R$, it may be assumed that every $(i+1)$-st entry of $\beta$
(for any $i \in \N$) is either equal to $p_j^{-n'_j}$ for some $n'_j \geq 1$ or equal to some integer $p'$ such that
the $(i+1)$-st coordinate of $\beta$ is changed exactly once from some value $p_{j'}^{-n''_{j'}}$ (where 
$n''_{j'} \geq 1$) to $p'$; in addition, for any two terms of $\beta$ of the shape $p_{i_1}^{-n'_{i_1}}$
and $p_{i_2}^{-n'_{i_2}}$, where $n'_{i_1} \geq 1$ and $n'_{i_2} \geq 1$, $i_1 \neq i_2$.    
Thus $b_{i,s'} = p_j^{-n'_j}$ and $b_i = w_i$ for some $j \in \N, n'_j \geq 1$ and $w_i \in \Z$. 
\begin{description}[leftmargin=0cm]
\item[Case 1:] $p_j \notin \{r_0,\ldots,r_k,g_0,\ldots,g_{\ell'}\}$.
Then
\begin{equation*}
\begin{split}
\left(b_i - b_{i,s'}\right)\sigma(i) &= 
\left(w_i - p_j^{-n'_j}\right)\sigma(i) \\
&= \left(w_ip_j^{n'_j} - 1\right)\cdot \sigma(i)\cdot p_j^{-n'_j}. \\ 
\end{split}
\end{equation*}
By Conditions i and ii, as well as by the choice of $w_i$ (as given in the proof of Theorem \ref{FinSubgroupsRE}),
every prime power factor of $q$ must divide $w_ip_j^{n'_j}-1$; in particular, $q$ divides
$w_ip_j^{n'_j}-1$.  Furthermore,
by (\ref{eqn:sum1}) and the following two facts: (a) $p_j \notin \{r_0,\ldots,r_k,g_0,\ldots,g_{\ell'}\}$
and (b) $p_j$ does not divide $b^{-1}_{i',s'}$ for all $i' \neq i$ with $b^{-1}_{i',s'} \in \N$,
one has $\sigma(i)\cdot p_j^{-n'_j} \in \Z$.  
Therefore 
\begin{equation*}
\begin{split}
\left(b_i - b_{i,s'}\right)\sigma(i) &= 
\left(w_ip_j^{n'_j} - 1\right)\cdot \sigma(i)\cdot p_j^{-n'_j} \\
&\in \spn{q} \\
&\subseteq \spn{qr_0^{-h_0}\cdots r_k^{-h_k}g_0^{-c'_0}\cdots g_{\ell'}^{-c'_{\ell'}}}. 
\end{split}
\end{equation*}
\item[Case 2:] $p_j \in \{r_0,\ldots,r_k,g_0,\ldots,g_{\ell'}\}$.
By Condition i, $p_j \in \{g_0,\ldots,g_{\ell'}\}$; suppose $p_j = g_{i'}$ for some
$i' \in \{0,\ldots,\ell'\}$, so that $n'_j = c'_{i'}$.  As in Case 1,
\begin{equation*}
\begin{split}
\left(b_i - b_{i,s'}\right)\sigma(i) &= 
\left(w_i - g_{i'}^{-c'_{i'}}\right)\sigma(i) \\
&= \left(w_ig_{i'}^{c'_{i'}} - 1\right)\cdot \sigma(i)\cdot g_{i'}^{-c'_{i'}}. \\ 
\end{split}
\end{equation*}  
Conditions i and ii, together with the choice of $w_i$, imply that $q$ divides $w_i g_{i'}^{-c'_{i'}}-1$.
Thus, as before,
\begin{equation*}
\begin{split}
\left(b_i - b_{i,s'}\right)\sigma(i) &= 
\left(w_ig_{i'}^{c'_{i'}} - 1\right)\cdot \sigma(i)\cdot g_{i'}^{-c'_{i'}} \\
&\in \spn{qg_{i'}^{-c'_{i'}}} \\
&\subseteq \spn{qr_0^{-h_0}\cdots r_k^{-h_k}g_0^{-c'_0}\cdots g_{\ell'}^{-c'_{\ell'}}}. 
\end{split}
\end{equation*} 
\end{description}
This establishes (\ref{eqn:sum3}).
Now if $\sum_{i=0}^{|\sigma|-1} \sigma(i)b_i \in \spn{qr_0^{-h_0}\cdots r_k^{-h_k}g_0^{-c'_0}\ldots 
g_{\ell'}^{-c'_{\ell'}}} \sm  \spn{qr_0^{-h_0}\cdots r_k^{-h_k}}$, then
there must be a least $i'' \in \{0,\ldots,\ell'\}$ such that $b_{d_{i''},s'} = b_{d_{i''}} = g_{i''}^{-c'_{i''}}$.
But since $\ran(T)$ contains $(0,\ldots,0,qg_{i''}^{c''},0,\ldots,0)$,
where the $(d_{i''}+1)$-st position is the only non-zero entry and $c'_{i''} - c'' = c_{i''} \geq 1$,
$g_{i''}^{c_{i''}}$ must then be a factor of $m$, a contradiction.  Hence
$\sum_{i=0}^{|\sigma|-1} \sigma(i)b_i \in  \spn{qr_0^{-h_0}\cdots r_k^{-h_k}}$.     
\iffalse
 in other words, $b_i \in \Z$ and $b_{i,s'} = 
p_j^{-e_j}$ for some $e_j \geq 1$ and prime $p_j$.   
\begin{description}[leftmargin=0cm]
\item[Case 1:] $r_i^{e_i} = p_j^{n'_j}$ for some $j \in \{0,\ldots,\ell\}$
and $\sigma(i) = \sigma'(i)p_j^{n''_j}$, where $\gcd(\sigma'(i),p_j) = 1$ and 
$n'_j - n''_j = n_j$.  Then 
\begin{equation*}
\begin{split}
\prod_{i=0}^{\ell}p_i^{n_i}\sigma(i)\left(b_i - r_i^{-e_i}\right) &= 
\prod_{i=0}^{\ell}p_i^{n_i}\sigma'(i)p_j^{n''_j}\left(b_i - 
p_j^{-n'_j}\right) \\
&= \prod_{i=0}^{\ell}p_i^{n_i}\sigma'(i)p_j^{n''_j-n'_j}\left(b_ip_j^{n'_j} - 1\right) \\
&= \prod_{i \neq j}^{\ell}p_i^{n_i} \sigma'(i)\left(b_ip_j^{n'_j}-1\right). 
\end{split}
\end{equation*}
From the construction of $\beta$ and the choice of $s'$,  
$b_ip_j^{n'_j} -1 \in \spn{q}$. 
\item[Case 2:] $\sigma(i) = \sigma'(i)r_i^{e_i}$ for some $\sigma'(i) \in \Z$.
Then the construction of $\beta$ and the choice of $s'$ together imply 
that $\prod_{i=0}^{\ell}p_i^{n_i}\sigma(i)\left(b_i - r_i^{-e_i}\right)
= \prod_{i=0}^{\ell}p_i^{n_i}\sigma'(i)(b_ir_i^{e_i} - 1) \in \spn{q}$.
\end{description}
\fi

Furthermore, since $m_{s,t''} = m$ for sufficiently large $t'' \geq s$ and for any given $l$,
there is some $t' \geq s$ with $b_{i,t'} = b_i$ whenever $i \leq l$, 
one also has that $F_{\beta} \subseteq W_{M(T[s])}$.
Thus $W_{M(T[s])} = F_{\beta}$, as required. 
}

\begin{theorem}\label{thm:finiteggbclearn}
Let $R \leq_T K$ be Martin-Löf random. %w.r.t the Lebesgue measure on $2^\omega$.
Then there is a generating sequence $\beta$ of $G_R$ such that $F_{\beta}$ is
r.e.\ for every non-trivial finitely generated subgroup $F$ of $G_R$ and
$\cF_{\beta}$ is $\Bc$-learnable. % but not $\Ex$ learnable. %from text 
\end{theorem}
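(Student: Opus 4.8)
The plan is to reuse the generating sequence $\beta := (b_i)_{i<\omega}$ built in the proof of Theorem~\ref{FinSubgroupsRE} --- so that $F_\beta$ is already r.e.\ for every non-trivial finitely generated $F \leq G_R$ --- and to exhibit a single recursive learner $M$ that behaviourally correctly identifies $\cF_\beta$. I will exploit two structural features of $\beta$: each coordinate $i$ either carries a generator value $p_{j}^{-n_{j,s}}$ and eventually stabilises, or it is changed exactly once, from such a prime power to an integer $w_i$ chosen (via the Chinese Remainder Theorem, as in Theorem~\ref{FinSubgroupsRE}) so as to agree with the replaced prime power modulo a large modulus whose prime factors exhaust all the relevant small primes; moreover, distinct generator coordinates involve distinct primes. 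Since $R \leq_T K$, the approximations $b_{i,s} := \beta_s(i)$ are uniformly recursive, so $M$ may use them freely.

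Write the target as $F = \spn{q/m}$ with $q,m$ coprime. On input $T[s]$ the learner first computes $q_s := \gcd\{a \in \Z \mid (a) \in \ran(T[s])\}$, emitting $?$ while this set is empty; as $b_0 = 1$ permanently, $(a) \in F_\beta \iff q \mid a$ and $(q)$ occurs in the text, so $q_s$ stabilises to $q$. For the denominator, $M(T[s])$ returns an index for the r.e.\ set produced by the following procedure: at internal stage $t \geq s$, form $m_{s,t} := \operatorname{lcm}\{\operatorname{den}_{q_s}(\sum_i \sigma(i) b_{i,t}) \mid \sigma \in \ran(T[s])\}$, where $\operatorname{den}_q(x)$ is the least $d \geq 1$ with $x \in \spn{q/d}$, and enumerate every $\sigma$ with $|\sigma| \leq t$ and $|\sigma(i)| \leq t$ for all $i$ for which $\sum_i \sigma(i) b_{i,t} \in \spn{q_s / m_{s,t}}$.

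It then remains to check that $W_{M(T[s])} = F_\beta$ for all sufficiently large $s$. The inclusion $F_\beta \subseteq W_{M(T[s])}$ is routine: once $s$ is large enough that $\ran(T[s])$ contains, for every prime power $r^h \| m$, a diagnostic representation $(0,\dots,0,q r^{h''},0,\dots,0)$ sitting on a coordinate $j$ with $b_j = r^{-(h+h'')}$ (such a representation lies in $F_\beta$, as it represents $q r^{-h}$), the numbers $m_{s,t}$ stabilise --- $\ran(T[s])$ being finite and the finitely many relevant coordinates of $\beta$ settling --- to exactly $m$; with the approximations eventually exact, every $\sigma \in F_\beta$ gets enumerated. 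For $W_{M(T[s])} \subseteq F_\beta$, suppose $\sigma$ enters $W_{M(T[s])}$ through internal stage $s' \geq s$, so $\sum_i \sigma(i) b_{i,s'} \in \spn{q / m_{s,s'}}$ with $m_{s,s'} = m \cdot g_0^{c_0}\cdots g_{\ell'}^{c_{\ell'}}$. Picking $s$ beyond the stabilisation of every $\beta$-coordinate whose prime is at most $\max(q,r_k)$ ($r_k$ the largest prime dividing $m$) and beyond all prime-power exponents occurring in $q$ and $m$, every coordinate $i$ with $b_{i,s'} \neq b_i$ currently holds a value $p_j^{-n'_j}$ with $p_j \nmid qm$; hence each $g_i$ fails to divide $qm$, and the CRT-choice of $w_i$ forces $q \mid w_i p_j^{n'_j}-1$. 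A short case distinction (whether or not $p_j$ is one of the $g_i$) then gives $\sigma(i)(b_i - b_{i,s'}) \in \spn{q r_0^{-h_0}\cdots r_k^{-h_k} g_0^{-c'_0}\cdots g_{\ell'}^{-c'_{\ell'}}}$ for each such $i$, so $\sum_i \sigma(i) b_i$ lies in the same group. If this element were not already in $\spn{q/m}$, some $g_{i''}$ would genuinely appear in its denominator; but then a corresponding diagnostic representation for $g_{i''}$ would belong to $F_\beta = \ran(T)$ and would force $g_{i''} \mid m$ --- a contradiction. Hence $\sum_i \sigma(i) b_i \in \spn{q/m}$, i.e.\ $\sigma \in F_\beta$, and $M$ is the required $\Bc$-learner.

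The main obstacle is the inclusion $W_{M(T[s])} \subseteq F_\beta$: one must control the spurious prime factors that $m_{s,t}$ temporarily acquires from not-yet-settled coordinates of $\beta$, and cope with enumeration being triggered at an internal stage where $\beta_{s'}$ is still imperfect. Both are handled precisely by the two properties carried over from Theorem~\ref{FinSubgroupsRE} --- at most one change per coordinate, and replacement integers congruent to the prime powers they replace modulo every small prime --- together with the abundance, for Martin-L\"{o}f random $R$, of large primes $p$ with $p^{-1}\in G_R$ (the phenomenon behind Proposition~\ref{prop:recursivegeneratingsequence}). The numerator and the inclusion $F_\beta \subseteq W_{M(T[s])}$ are comparatively straightforward.
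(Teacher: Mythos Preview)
Your proposal is correct and follows essentially the same approach as the paper: you reuse the generating sequence from Theorem~\ref{FinSubgroupsRE}, guess the numerator as the gcd of the length-one representations, approximate the denominator from the text using the stage-$t$ values $b_{i,t}$, and verify both inclusions for $W_{M(T[s])}$ by exactly the same CRT/case-analysis argument on unstable coordinates. The only cosmetic differences are that you compute $m_{s,t}$ as an $\operatorname{lcm}$ over \emph{all} of $\ran(T[s])$ rather than only over the single-coordinate ``diagnostic'' representations, and you tie the enumeration to a single internal stage $t$ rather than allowing separate stages $t,t'$ for evaluating $\sigma$ and computing $m_{s,t'}$; neither change affects the verification, and indeed your enumeration is contained in the paper's, so the harder inclusion $W_{M(T[s])} \subseteq F_\beta$ is if anything slightly easier in your setup.
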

\begin{proof} %(Sketch)
%As before, we denote by $\mathcal{F}$ the set of all finitely %generated subgroups of $G_R$.
We will reuse the generating sequence $\beta := (b_i)_{i<\omega}$ for $G_R$ constructed in the proof of Theorem 
\ref{FinSubgroupsRE}.  For all $i,t \in \N$, let $b_{i,t}$ denote the $t$-th approximation to
the $(i+1)$-st element of $\beta$. 
%As was shown in the proof of Theorem \ref{FinSubgroupsRE}, for every finitely generated subgroup
%$F$ of $G_R$, the set $F_{\beta} := \left\{\sigma \in \mathbb{Z}^{<\omega}\mid \sum_{i=0}^{|\sigma|-1} \sigma(i)b_i \in F\right\}
%$
%$$
%F_{\beta} := \left\{\sigma \in \mathbb{Z}^{<\omega}\mid \sum_{i=0}^{|\sigma|-1} \sigma(i)b_i \in F\right\}
%$$
%is r.e. % It will be shown that $\cF_{\beta}$ is behaviourally correctly learnable.
Define a learner $M$ on any text $T$ as follows.  Let $s$ be the length of the text
segment seen so far.  First, let $a_0,\ldots,a_{\ell}$ be all the positive integers such that
for every $i \in \{0,\ldots,\ell\}$, there is some $\sigma \in \ran(T[s])$ for which 
$\sigma = (a_i)$.  If no such $a_i$ exists, then $M$ just outputs a default index, say an 
r.e.\ index for the set of representations for $\spn{1}$. % guesses the numerator of the target 
%subgroup's generating element to be $0$ and it conjectures $\Z^{<\omega}$.  
Otherwise, $M$ uses 
$q' := \gcd(a_0,\ldots,a_{\ell})$ as its current guess for the numerator of the target 
subgroup's generating element. 
Next, define an approximation $m_{s,t}$ to the denominator of 
the target subgroup for every $t \geq s$ as follows.  Consider every element of $\ran(T[s])$ of 
the shape $(0,\ldots,0,q'p_i^{h_i},0,\ldots,0)$, where (1) $\gcd(q',p_i) = 1$,
(2) $q'p_i^{h_i}$ is the only non-zero coordinate of the element and it occurs in the $(j+1)$-st
position, (3) $b_{j,t} = p_i^{-h'_i}$ for some $h'_i \geq h_i$, and (4) $h_j$ is the smallest number 
$h''$ such that $\left(0,\ldots,0,q'p_i^{h''},0,\ldots,0\right) \in 
\ran(T[s])$ (as before, $q'p_i^{h''}$ is the only non-zero coordinate and it occurs in the $(j+1)$-st
position).  
\iffalse 
\begin{enumerate}
\item $\gcd(q',p_i) = 1$,
\item $q'p_i^{h_i}$ is the only non-zero coordinate of the element and it occurs in the $(j+1)$-st
position,
\item $b_{j,t} = p_i^{-h'_i}$ for some $h'_i \geq h_i$, and
\item $h_j$ is the smallest number $h''$ such that $\left(0,\ldots,0,q'p_i^{h''},0,\ldots,0\right) \in 
\ran(T[s])$ (as before, $q'p_i^{h''}$ is the only non-zero coordinate and it occurs in the $(j+1)$-st
position).
\end{enumerate}
\fi
Let $m_{s,t}$ be the product of all factors $p_i^{h'_j-h_j}$ such that $p_i,h'_j$
and $h_j$ satisfy items 1 to 4; if there is 
no such factor, then set $m_{s,t} = 1$.  $M$ outputs an index $e$ such that $W_e$ enumerates 
all $\sigma \in \Z^{< \omega}$ such that for some $t \geq s$ and $t' \geq s$, 
$\sum_{i=0}^{|\sigma|-1} \sigma(i)b_{i,t} \in \spn{\displaystyle\frac{q'}{m_{s,t'}}}$.      

\proofoffiniteggbclearn     
%The verification that $M$ behaviourally correctly learns $\cF_{\beta}$ is in
%Section \ref{appendix:finiteggbclearn} of the appendix.
\end{proof}

\medskip
\noindent
The next result shows, in contrast to Theorem \ref{thm:finiteggbclearn}, that
if $R \leq_T K$ is Martin-Löf random, then, given \emph{any} generating sequence
$\beta$ for $G_R$ such that $F_{\beta}$ is r.e.\ for every non-trivial finitely generated
subgroup $F$ of $G_R$, the class $\cF_{\beta}$ is not explanatorily learnable.

\begin{theorem}\label{thm:finitegeneratednotex}
Let $R \leq_T K$ be Martin-Löf random. %w.r.t the Lebesgue measure on $2^\omega$.
Suppose $\beta := (b_i)_{i<\omega}$ is a generating sequence for $G_R$ such
that for any non-trivial finitely generated subgroup $F$ of $G_R$, $F_{\beta}$ is r.e.
Then $\cF_{\beta}$ is not $\Ex$-learnable.
\end{theorem}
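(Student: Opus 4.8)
The plan is to assume for contradiction that some recursive learner $M$ explanatorily learns $\cF_{\beta}$, and to use $M$ to recursively enumerate the set $P:=\{i\in\N\mid p_i^{-1}\in G_R\}$. Since $P$ is infinite for Martin-Löf random $R$ (as noted after the first lemma, infinitely many blocks of $R$ contain at least one $1$), and every infinite r.e.\ set has an infinite recursive subset, an r.e.-ness proof for $P$ would yield a strictly increasing recursive enumeration $i_0<i_1<\cdots$ with $p_{i_j}^{-1}\in G_R$ for all $j$, contradicting Proposition~\ref{prop:recursivegeneratingsequence}.

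First I would fix the ingredients. Since $\spn{1}=\Z$ is a non-trivial finitely generated subgroup of $G_R$, its representation set $\spn{1}_{\beta}$ is r.e.\ by hypothesis; fix an r.e.\ index $e$ for it. Since $M$ $\Ex$-learns $\cF_{\beta}\ni\spn{1}_{\beta}$, Proposition~\ref{prop:lockingsequence} gives a locking sequence $\sigma$ for $M$ on $\spn{1}_{\beta}$; fix one. The key structural observation is that for every prime $p_i$ the set $\spn{p_i^{-1}}_{\beta}$ of $\tau\in\Z^{<\omega}$ with $\sum_{j<|\tau|}\tau(j)b_j\in\spn{p_i^{-1}}$ equals $\{\tau\in\Z^{<\omega}\mid p_i\tau\in\spn{1}_{\beta}\}$, because $\sum_{j<|\tau|}\tau(j)b_j$ always lies in $G_R$ and hence lies in $\spn{p_i^{-1}}$ exactly when $p_i$ times it is an integer; thus $(\spn{p_i^{-1}}_{\beta})_{i\in\N}$ is uniformly r.e.\ via the index $e$. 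A short Bézout argument shows furthermore that $\spn{p_i^{-1}}\cap G_R=\Z$ when $p_i^{-1}\notin G_R$, so that in that case $\spn{p_i^{-1}}_{\beta}=\spn{1}_{\beta}$, whereas when $p_i^{-1}\in G_R$ some representation of $p_i^{-1}$ witnesses $\spn{1}_{\beta}\subsetneq\spn{p_i^{-1}}_{\beta}$, and here $\spn{p_i^{-1}}$ is a non-trivial finitely generated subgroup of $G_R$, so $\spn{p_i^{-1}}_{\beta}\in\cF_{\beta}$.

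Then I would call $p_i$ \emph{active} if there is some $\tau\in(\spn{p_i^{-1}}_{\beta}\cup\{\#\})^{\ast}$ with $M(\sigma\circ\tau)\neq M(\sigma)$; by the uniform r.e.-ness above and recursiveness of $M$, ``$p_i$ is active'' is an r.e.\ predicate of $i$. The claim is that $p_i$ is active iff $p_i^{-1}\in G_R$. If $p_i^{-1}\notin G_R$ then $\spn{p_i^{-1}}_{\beta}=\spn{1}_{\beta}$, and since $\sigma$ is a locking sequence for $M$ on $\spn{1}_{\beta}$ we get $M(\sigma\circ\tau)=M(\sigma)$ for every admissible $\tau$, so $p_i$ is not active. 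Conversely, if $p_i^{-1}\in G_R$ but $p_i$ were not active, then (using $\ran(\sigma)\subseteq\spn{1}_{\beta}\subseteq\spn{p_i^{-1}}_{\beta}$) $\sigma$ would be a stabilising sequence for $M$ on $\spn{p_i^{-1}}_{\beta}$, hence a locking sequence for $M$ on $\spn{p_i^{-1}}_{\beta}$ by Proposition~\ref{prop:lockingsequence}, giving $W_{M(\sigma)}=\spn{p_i^{-1}}_{\beta}$; but $W_{M(\sigma)}=\spn{1}_{\beta}\subsetneq\spn{p_i^{-1}}_{\beta}$, a contradiction. Hence $P=\{i\mid p_i\text{ active}\}$ is r.e., and the argument concludes as in the first paragraph.

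I expect the main obstacle to be the bookkeeping around $\spn{p_i^{-1}}_{\beta}$: one must be careful that this notation makes sense even when $p_i^{-1}\notin G_R$ (so $\spn{p_i^{-1}}$ is not a subgroup of $G_R$), and one must verify cleanly the identity $\spn{p_i^{-1}}_{\beta}=\{\tau\mid p_i\tau\in\spn{1}_{\beta}\}$, since this is exactly what makes the predicate ``active'' uniformly r.e.\ in $i$ relative to the single fixed index $e$. Everything else — existence of the locking sequence, the two directions of the ``active iff divides'' equivalence via Proposition~\ref{prop:lockingsequence}, infinitude of $P$, and the final appeal to Proposition~\ref{prop:recursivegeneratingsequence} — is routine given the results already established.
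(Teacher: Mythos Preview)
Your proposal is correct and follows essentially the same route as the paper's proof: both fix a locking sequence for the learner on $\Z_{\beta}$, both exploit the characterisation $\spn{p_i^{-1}}_{\beta}=\{\tau\mid p_i\tau\in\Z_{\beta}\}$ to make the mind-change search uniformly r.e.\ in $i$, and both derive a contradiction with Proposition~\ref{prop:recursivegeneratingsequence}. The only cosmetic difference is that the paper builds the strictly increasing enumeration $i_0<i_1<\cdots$ directly (searching at each step for the next $\ell'$ larger than all previous indices satisfying the two conditions), whereas you first show $P$ is r.e.\ and then appeal to the standard fact that an infinite r.e.\ set contains an infinite recursive subset; these are equivalent packagings of the same argument.
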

\def\proofoffinitegeneratednotex{
\begin{proof}
Assume, by way of contradiction, that such a learner $N$ did exist.
By Proposition \ref{prop:lockingsequence}, one could then find a locking
sequence $\gamma$ for $N$ on the set $\Z_{\beta}$ of representations of 
$\Z$ with respect to $\beta$.  We show that this implies the existence of a strictly increasing recursive enumeration 
$i_0,i_1,i_2,\ldots$ such that for all $j$, $p_{i_j}^{-1} \in G_R$.
The enumeration $i_0,i_1,i_2,\ldots$ is defined as follows.  For each $j$, let $i_j$ be the first $\ell'$  found such that
$\ell' > i_{j'}$ for all $j' < j$ and there is a sequence $\delta \in (\Z^{<\omega})^*$ satisfying the following conditions.
\begin{enumerate}
\item $N(\gamma\circ\delta) \neq N(\gamma)$. 
\item For all $\sigma \in \ran(\delta)$, $p_{\ell'}\sigma \in \Z_{\beta}$.
\end{enumerate}  
Note that Condition 2 is semi-decidable because $\Z_{\beta}$ is r.e.
By the Martin-L\"{o}f randomness of $R$, there exist infinitely many $p$ such that
$p^{-1} \in G_R$.  For each such $p$, since $N$ must explanatorily learn $\spn{p^{-1}}_{\beta}$,
$W_{N(\gamma)} = \Z_{\beta} \subset \spn{p^{-1}}_{\beta}$ and 
$\ran(\gamma) \subseteq \Z_{\beta} \subset \spn{p^{-1}}_{\beta}$,
there exists some $\delta \in \left(\spn{p^{-1}}_{\beta}\right)^*$ such that $N(\gamma\circ\delta) \neq N(\gamma)$.
Furthermore, for each $\sigma \in \ran(\delta)$, one has 
\begin{equation*}
\begin{aligned}
(p\sigma) \cdot \beta_{|\sigma|-1} &= p (\sigma \cdot \beta_{|\sigma|-1}) \\
&\in p \spn{p^{-1}} \\
&\subseteq \Z.
\end{aligned}
\end{equation*}
Thus $i_j$ is defined for all $j$.  It remains to show that for all $j$,
$p_{i_j}^{-1} \in G_R$.  To see this, one first observes that by the locking 
sequence property of $\gamma$, if $\delta$ is the sequence found together with
$i_j$ satisfying Conditions 1 and 2, then $N(\gamma \circ \delta) \neq N(\gamma)$ implies
that there exists some $\sigma \in \ran(\delta)$ with $\sigma \notin \Z_{\beta}$;
in other words, $\sigma \cdot \beta_{|\sigma|-1} \notin \Z$.
By Condition 2, $p_{i_j}\sigma \in \Z_{\beta}$ and therefore $\sigma \cdot \beta_{|\sigma|-1}$
must be of the shape $qp_{i_j}^{-1} \in G_R$ for some $q \in \Z$ that is coprime to $p_{i_j}$.
Consequently, $p_{i_j}^{-1} \in G_R$, as required. 
But by Proposition \ref{prop:recursivegeneratingsequence}, the existence of 
the enumeration $i_0,i_1,i_2,\ldots$ would contradict the fact that $R$ is Martin-L\"{o}f random.  
Hence $\cF_{\beta}$ cannot be explanatorily learnable. 
\end{proof}
}
\proofoffinitegeneratednotex

\noindent
The next theorem considers the learnability of the set of representations of any finitely generated subgroup
$F$ of the quotient group $G_R/\Z$ with respect to the generating sequence for
$G_R/\Z$ constructed in the proof of Theorem \ref{thm:reequalitymod1}.  Slightly abusing the notation defined
in Notation \ref{nota:finitegeneratedrepresent}, for any
generating sequence $\beta$ for $G_R/\Z$, $F_{\beta}$ will denote the set of representations of any subgroup 
$F$ of $G_R/\Z$ with respect to $\beta$, and $\cF_{\beta}$ will denote $\{F_{\beta}\mid 
\mbox{$F$ is a finitely generated subgroup of $G_R/\Z$}\}$. 
%$$
%\{F_{\beta}\mid \mbox{$F$ is a finitely generated subgroup of $G_R/\Z$}\}.
%$$ 

\begin{theorem}\label{thm:finitelygeneratedmod1bc}
Suppose $R \leq_T K$ is Martin-Löf random. %w.r.t the Lebesgue measure on $2^\omega$.
Let $G_R/\Z$ be the quotient group of $G_R$ by $\Z$.  Then there is a generating
sequence $\beta$ for $G_R/\Z$ such that $F_{\beta}$ is r.e.\ for all finitely
generated subgroups of $G_R/\Z$ and $\cF_{\beta}$ % $\{F_{\beta}:\mbox{$F$ is a finitely generated subgroup 
%of $G_R/\Z$}\}$
is $\Bc$-learnable. % but not $\Ex$ learnable.
\end{theorem}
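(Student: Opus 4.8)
The plan is to combine three ingredients already developed: the generating sequence for $G_R/\Z$ built in the proof of Theorem~\ref{thm:reequalitymod1}, the stage-waiting device from the proof of Theorem~\ref{FinSubgroupsRE} that makes representation sets of finitely generated subgroups r.e., and the $\Bc$-learner from the proof of Theorem~\ref{thm:finiteggbclearn}. The decisive simplification in the quotient is structural: $G_R/\Z\cong\bigoplus_i\Z/p_i^{n_i}\Z$ is a torsion group (every $n_i$ being finite for Martin-L\"{o}f random $R$), so every finitely generated subgroup of $G_R/\Z$ is finite and, being a direct sum of cyclic groups of pairwise coprime prime-power orders, is cyclic; concretely each equals $\spn{\frac{1}{m}+\Z}$ for a unique $m=\prod_{i\in S}p_i^{e_i}$ with each $e_i\leq n_i$. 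Since $\spn{\frac{q}{m}+\Z}=\spn{\frac{1}{m}+\Z}$ whenever $\gcd(q,m)=1$, there is --- unlike in Theorem~\ref{thm:finiteggbclearn} --- no numerator to identify, and the learner must recover only the denominator $m$.

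First I would take $\beta=(b_i)_{i<\omega}$ to be exactly the generating sequence for $G_R/\Z$ produced by the algorithm in the proof of Theorem~\ref{thm:reequalitymod1}: at stage $s$ one has a length-$(s{+}1)$ tuple $\beta_s$ of cosets, each nonzero entry is of the form $p_j^{-n_{j,s}}+\Z$, every prime is the base of at most one denominator, and whenever $n_{j,s}$ changes the corresponding entry is reset to $0+\Z$ and thereafter never altered. I then claim this single $\beta$ already makes $F_\beta$ r.e.\ for every finitely generated $F=\spn{\frac{1}{m}+\Z}$, by the argument of Theorem~\ref{FinSubgroupsRE}. Let $s_F$ be a stage after which every entry of $\beta$ whose denominator uses a prime factor of $m$ has stabilised and which exceeds all exponents in the factorisation of $m$ (such a stage exists because $R\leq_T K$). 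For $s\geq s_F$, whenever step $s{+}1$ resets an entry in some position $\ell$ with $b_{\ell,s}=p^{-n_{\ell,s}}+\Z$, the prime $p$ cannot divide $m$ (else that entry would already have stabilised); since $p$ occurs in no other denominator, every $\sigma\in F_{\beta_s}$, i.e.\ every $\sigma$ with $m\sum_i\sigma(i)b_{i,s}\in\Z$, must satisfy $p^{n_{\ell,s}}\mid\sigma(\ell)$, so $\sigma(\ell)b_{\ell,s}\in\Z$ and deleting this summand leaves the coset in $F$. Hence $F_{\beta_s}\subseteq F_{\beta_{s{+}1}\res(s{+}1)}$ for all $s\geq s_F$, and enumerating $\{\sigma\in\Z^{<s_F}\mid\sigma\circ 0^{s_F-|\sigma|}\in F_{\beta_{s_F}}\}$ together with $\bigcup_{s\geq s_F}F_{\beta_s}$ yields exactly $F_\beta$. (The trivial subgroup $\{0+\Z\}$ is covered too: its representation set is the ``kernel'' of $\beta$, which is r.e.\ since equality modulo $1$ is r.e.\ by Theorem~\ref{thm:reequalitymod1}.)

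Next I would define the learner $M$ for $\cF_\beta$. On a text $T$ for some $F_\beta$, having read $T[s]$, the learner initialises its guessed denominator to $1$ and then, for every single-coordinate vector $(0,\dots,0,p_i^{h},0,\dots,0)\in\ran(T[s])$ with $h\geq 1$ whose only nonzero entry sits in position $j$, inspects the current approximations $b_{j,t}$ for $t\geq s$; if $b_{j,t}=p_i^{-h'}+\Z$ with $h'>h$ it multiplies the guess by $p_i^{h'-h}$, always using the \emph{least} such $h$ witnessed for the prime $p_i$. Writing $m_{s,t}$ for the resulting approximation, $M$ outputs an index for $\{\sigma\mid(\exists t,t'\geq s)\,[\sum_i\sigma(i)b_{i,t}\in\spn{\tfrac{1}{m_{s,t'}}+\Z}]\}$. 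One then checks that for all sufficiently large $s$ the eventual value of $m_{s,t}$ is precisely $m$: for $p_i^{e_i}\|m$ the text contains the representation $I_{j+1}(p_i^{n_i-e_i})$ of $\frac{1}{p_i^{e_i}}+\Z$ built from the single entry $p_i^{-n_i}+\Z$, so the least witnessed exponent for $p_i$ is $n_i-e_i$ and contributes exactly the factor $p_i^{e_i}$, whereas no witness in $T$ can force $p_i^{e_i+1}\mid m_{s,t}$ and no prime outside $S$ is activated in the limit; $F_\beta\subseteq W_{M(T[s])}$ is then immediate once the finitely many relevant entries of $\beta$ have stabilised.

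The hard part will be the remaining inclusion $W_{M(T[s])}\subseteq F_\beta$ for large $s$, because the generating sequence is still moving and the learner's provisional denominator $m_{s,t'}$ may transiently over-shoot $m$ through spurious prime powers coming from entries $b_{j,t'}=p^{-h'}+\Z$ that are later reset to $0+\Z$. Here I would argue exactly as in the corresponding part of the proof of Theorem~\ref{thm:finiteggbclearn}: passing from the approximations $b_{i,t}$ to their limits $b_i$ alters the represented coset only by summands $\sigma(j)b_{j,t}$ coming from positions that will be reset, and --- using $s\geq s_F$, the fact that each prime occurs in at most one denominator, and a ``least spurious prime'' contradiction against the witnesses already present in $T$ --- these extra denominators get absorbed, so $\sum_i\sigma(i)b_i$ lies in $\spn{\frac{1}{m}+\Z}=F$ after all. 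This absorption step is where the interplay between the limiting generating sequence and the learner is delicate and is the main obstacle; once it is in place one has $W_{M(T[s])}=F_\beta$ for almost all $s$, giving $\Bc$-learnability of $\cF_\beta$. The streamlining over Theorem~\ref{thm:finiteggbclearn} is that replaced entries here are simply $0+\Z$ rather than CRT-chosen integers, which simplifies the valuation bookkeeping throughout.
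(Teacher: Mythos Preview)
Your proposal is correct in outline, but it takes a substantially more laborious route than the paper. You rebuild the machinery of Theorems~\ref{FinSubgroupsRE} and~\ref{thm:finiteggbclearn}: a stage-waiting argument to make each $F_\beta$ r.e., and a learner that reconstructs the denominator $m$ by tracking single-coordinate witnesses and then wrestling with the ``absorption'' inclusion $W_{M(T[s])}\subseteq F_\beta$. The paper, by contrast, exploits exactly the structural observation you make in your first paragraph but then drop: since every finitely generated subgroup $F$ of $G_R/\Z$ is \emph{finite}, and since equality modulo~$1$ with respect to $\beta$ is r.e.\ (Theorem~\ref{thm:reequalitymod1}), the set $F_\beta$ is simply a finite union of r.e.\ equality classes and hence r.e.\ immediately. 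The learner is then trivial: on input $T[s]$ it outputs an index for the equality-closure of $\ran(T[s])$. This is automatically contained in $F_\beta$, and once $\ran(T[s])$ contains a representative of each of the finitely many elements of $F$, it equals $F_\beta$. No denominator reconstruction, no absorption argument, no ``hard part''.

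What your approach buys is a uniform template that parallels the $G_R$ case, and it would also pin down the cyclic generator $\frac{1}{m}+\Z$ explicitly. What the paper's approach buys is brevity: the step you flag as the main obstacle simply does not arise. The moral is that the two facts ``$F$ is finite'' and ``equality is r.e.'' together make the quotient case qualitatively easier than the $G_R$ case, and one should use them directly rather than re-running the Theorem~\ref{thm:finiteggbclearn} argument.
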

\def\proofoffinitelygeneratedmod1bc{
\begin{proof} 
We will use the fact that any finitely generated subgroup of $G_R/\Z$ is finite\footnote{This may be seen as follows.  Suppose $F = \spn{\displaystyle\frac{p}{q}}$ for some relatively prime $p,q \in \N$.  Then for all $n \in \N$, there are $n' \in \Z$ and $r$ with $0 \leq r < q$ such that $\displaystyle\frac{np}{q} = \displaystyle\frac{n'q+r}{q} \equiv \displaystyle\frac{r}{q}~(mod~1)$.}  (see, for example, \cite[page 106]{SteinSha02}).  Let $\beta := (b_i)_{i < \omega}$ be the generating sequence for $G_R/\Z$ constructed in the proof of Theorem \ref{thm:reequalitymod1}; as was shown in the proof of this theorem, equality is r.e.\ with respect to $\beta$, that is, $E := \{(\sigma,\sigma') \in \Z^{<\omega} \times \Z^{<\omega}\mid \sum_{i=0}^{|\sigma|-1} \sigma(i)b_i - \sum_{j=0}^{|\sigma'|-1} \sigma'(j)b_j \equiv 0~(mod~1)\}$ is r.e.  Then for any finitely generated subgroup $F$ of $G_R/\Z$
with elements $x_0,\ldots,x_k$, if $\sigma_i$ is a representation for $x_i$ for all $i \in \{0,\ldots,k\}$,
then $F_{\beta} = \bigcup_{0\leq i \leq k}\{ \tau \in \Z^{<\omega}\mid (\sigma_i,\tau) \in E\}$ is r.e.  Define a learner $M$ on any text $T$ as follows.  On input $T[s]$, $M$ outputs an r.e.\ index for the closure under equality of all $\sigma \in \ran(T[s])$, that is, $W_{M(T[s])} = \{\tau \in \Z^{<\omega}\mid (\exists \sigma \in \ran(T[s]))[(\sigma,\tau) \in E]\}$.  Let $F$ be any finitely generated subgroup of $G_R/\Z$, and suppose
$M$ is fed with a text $T$ for $F_{\beta}$.  By construction, $M$ always conjectures a set that is contained in $F_{\beta}$.
Furthermore, since $F$ is finite, there is a sufficiently large $s$ such that for all $x \in F$, $\ran(T[s])$ contains 
some $\sigma$ with $\sum_{i=0}^{|\sigma|-1} \sigma(i)b_i \equiv x~(mod~1)$.  Thus, as $M$ always conjectures a set
that is closed under equality with respect to $\beta$, it follows that for all $s' \geq s$, $M$ on $T[s']$ will conjecture $F_{\beta}$.
\end{proof}
}
\proofoffinitelygeneratedmod1bc

\noindent
As in the case of the collection of non-trivial finitely generated subgroups of $G_R$, the class $\cF_{\beta}$ is not explanatorily 
learnable with respect to any generating sequence $\beta$ for $G_R/\Z$.  The proof is entirely analogous to that of Theorem
\ref{thm:finitegeneratednotex}.

\begin{theorem}\label{thm:finitegeneratedmod1notex}
Let $R \leq_T K$ be Martin-Löf random. % w.r.t the Lebesgue measure on $2^\omega$.
Suppose $\beta := (b_i)_{i<\omega}$ is a generating sequence for $G_R/\Z$ such
that for any finitely generated subgroup $F$ of $G_R/\Z$, $F_{\beta}$ is r.e.
Then $\cF_{\beta}$ is not $\Ex$-learnable.
\end{theorem}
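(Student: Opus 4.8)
The plan is to adapt the argument for Theorem~\ref{thm:finitegeneratednotex} almost verbatim, with the set $\Z_\beta$ of representations of the integers inside $G_R$ replaced by the set $\{\bar{0}\}_\beta$ of representations of the trivial subgroup of $G_R/\Z$, that is, of those $\sigma\in\Z^{<\omega}$ with $\sum_{i<|\sigma|}\sigma(i)b_i=\bar{0}$ in $G_R/\Z$. Since the trivial subgroup is a finitely generated subgroup of $G_R/\Z$, the hypothesis guarantees that $\{\bar{0}\}_\beta$ is r.e. Suppose towards a contradiction that a recursive learner $N$ $\Ex$-learns $\cF_\beta$. As $\{\bar{0}\}_\beta\in\cF_\beta$, Proposition~\ref{prop:lockingsequence} provides a locking sequence $\gamma$ for $N$ on $\{\bar{0}\}_\beta$; in particular $W_{N(\gamma)}=\{\bar{0}\}_\beta$, $\ran(\gamma)\subseteq\{\bar{0}\}_\beta$, and $N(\gamma\circ\tau)=N(\gamma)$ for every $\tau\in(\{\bar{0}\}_\beta\cup\{\#\})^*$.

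From $N$ and $\gamma$ I would extract a strictly increasing recursive enumeration $i_0,i_1,i_2,\ldots$ with $p_{i_j}^{-1}\in G_R$ for every $j$, which contradicts Proposition~\ref{prop:recursivegeneratingsequence}. For each $j$, let $i_j$ be the least $\ell>i_{j-1}$ (with $i_{-1}:=-1$) for which some finite text segment $\delta\in(\Z^{<\omega}\cup\{\#\})^*$ can be found such that (1)~$N(\gamma\circ\delta)\neq N(\gamma)$ and (2)~$p_\ell\sigma\in\{\bar{0}\}_\beta$ for every $\sigma\in\ran(\delta)$. Condition~(1) is decidable and, since $\{\bar{0}\}_\beta$ is r.e., condition~(2) is semidecidable, so this search is effective and the enumeration is recursive. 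To see that $i_j$ is always defined, use that $R$ is Martin-L\"{o}f random, so there are infinitely many primes $p$ with $p^{-1}\in G_R$; fix such a $p=p_\ell$ with $\ell>i_{j-1}$, and let $H$ be the (finite, nontrivial) subgroup of $G_R/\Z$ generated by the image of $p^{-1}$. Then $\{\bar{0}\}_\beta\subsetneq H_\beta$, $\ran(\gamma)\subseteq\{\bar{0}\}_\beta\subseteq H_\beta$, and $H_\beta\in\cF_\beta$, so $N$ must $\Ex$-learn $H_\beta$ while $W_{N(\gamma)}=\{\bar{0}\}_\beta\neq H_\beta$; by Proposition~\ref{prop:lockingsequence}, $\gamma$ is therefore not a stabilising sequence for $N$ on $H_\beta$, and so there is $\delta$ over $H_\beta\cup\{\#\}$ with $N(\gamma\circ\delta)\neq N(\gamma)$. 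Since $H$ has exponent $p$, for each $\sigma\in\ran(\delta)$ the vector $p\sigma$ represents $p\cdot(\text{an element of }H)=\bar{0}$, hence $p\sigma\in\{\bar{0}\}_\beta$; thus (1) and (2) hold with $\ell$, and the search succeeds.

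Finally I would check that any $\ell$ returned by the search satisfies $p_\ell^{-1}\in G_R$. Fix $j$ and a witnessing $\delta$. By the locking-sequence property, if every $\sigma\in\ran(\delta)$ lay in $\{\bar{0}\}_\beta$, then $\delta\in(\{\bar{0}\}_\beta\cup\{\#\})^*$ would force $N(\gamma\circ\delta)=N(\gamma)$, contradicting~(1); hence some $\sigma\in\ran(\delta)$ satisfies $x:=\sum_{i<|\sigma|}\sigma(i)b_i\neq\bar{0}$ in $G_R/\Z$, while by~(2) $p_{i_j}x=\bar{0}$. So $x$ has order exactly the prime $p_{i_j}$ in $G_R/\Z\le\mathbb{Q}/\Z$, which forces $x=qp_{i_j}^{-1}+\Z$ for some integer $q$ coprime to $p_{i_j}$; as $\Z\subseteq G_R$ and $x\in G_R/\Z$, this gives $qp_{i_j}^{-1}\in G_R$, and choosing $a,b\in\Z$ with $aq+bp_{i_j}=1$ yields $p_{i_j}^{-1}=a\cdot qp_{i_j}^{-1}+b\in G_R$. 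This contradicts Proposition~\ref{prop:recursivegeneratingsequence}, so no such $N$ exists and $\cF_\beta$ is not $\Ex$-learnable.

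The main obstacle, exactly as for Theorem~\ref{thm:finitegeneratednotex}, is in making the search that defines $i_j$ effective; this is precisely where the hypothesis ``$F_\beta$ is r.e.\ for all finitely generated $F$'' is used, here only through $F=\{\bar{0}\}$. The only genuinely new ingredient compared with the $G_R$ case is that one works modulo $1$ and takes the \emph{trivial} subgroup (rather than $\Z$) as the base of the locking-sequence argument, together with the elementary observation that an element of prime order $p$ in $G_R/\Z$ forces $p^{-1}\in G_R$.
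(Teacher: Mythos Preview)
Your proposal is correct and follows precisely the route the paper indicates: the paper states that the proof is ``entirely analogous to that of Theorem~\ref{thm:finitegeneratednotex}'', and you have carried out exactly that analogy, replacing $\Z_\beta$ by $\{\bar 0\}_\beta$ (which is r.e.\ by hypothesis and lies in $\cF_\beta$ since in the $G_R/\Z$ setting $\cF_\beta$ is defined without the ``non-trivial'' restriction) and replacing $\spn{p^{-1}}$ by the order-$p$ subgroup of $G_R/\Z$ generated by $p^{-1}+\Z$. The verification that an element of prime order $p_{i_j}$ in $G_R/\Z$ forces $p_{i_j}^{-1}\in G_R$ is the only extra wrinkle, and you handle it correctly.
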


\noindent
A natural question is whether the learnability or non-learnability of a class of representations for a collection of 
subgroups of $G_R$ is independent of the choice of the generating sequence for $G_R$.  We have seen in
Theorem \ref{thm:finitegeneratednotex}, for example, that the non explanatory learnability of the class of non-trivial finitely generated subgroups of $G_R$ holds for \emph{any} generating sequence for $G_R$ such that $F_{\beta}$ is r.e.\ whenever
$F$ is a finitely generated subgroup.    
The next theorem gives a positive learnability result that is to some extent independent of the choice of the generating 
sequence: for any generating sequence $\beta$ for $G_R$ such that equality with respect to $\beta$ is $K$-recursive and $F_{\beta}$ is r.e.\ whenever $F$ is a finitely generated subgroup of $G_R$, the class $\cF_{\beta}$ is explanatorily 
learnable relative to oracle $K$. 

\begin{theorem}\label{thm:exklearnfinite}
Let $R \leq_T K$ be Martin-L\"{o}f random. % w.r.t the Lebesgue measure on $2^\omega$.
Then for any generating sequence $\beta$ for $G_R$ such that equality with respect to $\beta$ is $K$-recursive 
(in other words, the set $E_{\beta}:=\{(\sigma,\sigma') \in \Z^{<\omega} \times\Z^{<\omega}\mid \sigma \cdot 
\beta_{|\sigma|-1} = \sigma' \cdot \beta_{|\sigma'|-1}\}$ is $K$-recursive) and $F_{\beta}$ is r.e.\ for all finitely 
generated subgroups of $G_R$, $\cF_{\beta}$ % $\{F_{\beta}:\mbox{$F$ is a finitely generated subgroup 
%of $G_R/\Z$}\}$
is $\Ex[K]$-learnable.
\end{theorem}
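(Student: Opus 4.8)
The plan is to use the oracle $K$ to decide the relation $E_\beta$, present $\cF_\beta$ as a \emph{uniformly $K$-decidable} family indexed by representations of generators, and then run a tell-tale learner relative to $K$. The first step is the observation that, with oracle $K$, one can decide, for any $\sigma,\gamma\in\Z^{<\omega}$ with $\gamma\cdot\beta\neq 0$, whether $\sigma\in(\spn{\gamma\cdot\beta})_\beta$: since $(\sigma\cdot\beta)/(\gamma\cdot\beta)$ is a rational, dovetailing over $d\geq 1$ and $k\in\Z$ (answering each query to $E_\beta$ via $K$) one finds a pair with $(d\sigma,k\gamma)\in E_\beta$, and then every such pair has $k/d=(\sigma\cdot\beta)/(\gamma\cdot\beta)$, so $\sigma\in(\spn{\gamma\cdot\beta})_\beta$ iff $d\mid k$; the side condition $\gamma\cdot\beta\neq 0$ is $K$-decidable since it amounts to $(\gamma,-\gamma)\notin E_\beta$. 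Because every non-trivial finitely generated subgroup of $G_R$ equals $\spn{\gamma\cdot\beta}$ for some $\gamma\in\Z^{<\omega}$ with $\gamma\cdot\beta\neq 0$, and conversely, this exhibits $\cF_\beta=\{(\spn{\gamma\cdot\beta})_\beta:\gamma\in\Z^{<\omega},\ \gamma\cdot\beta\neq 0\}$ as a uniformly $K$-decidable family. Using the hypothesis that each $F_\beta$ is r.e., together with the convention in force throughout this section that an r.e.\ index for $F_\beta$ is obtained from a generator of $F$, fix $h\leq_T K$ with $W_{h(\gamma)}=(\spn{\gamma\cdot\beta})_\beta$. Note also that $\{\gamma\}$ is a tell-tale for $(\spn{\gamma\cdot\beta})_\beta$: if $(\spn{\gamma'\cdot\beta})_\beta\subsetneq(\spn{\gamma\cdot\beta})_\beta$ then $\gamma\notin(\spn{\gamma'\cdot\beta})_\beta$, since otherwise $\spn{\gamma\cdot\beta}\subseteq\spn{\gamma'\cdot\beta}$.

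Next I would define $M^K$ as follows: on input $T[n]$, with $D^n:=\ran(T[n])$, let $\gamma_n$ be the least element $\gamma$ of $D^n$ (in a fixed enumeration of $\Z^{<\omega}$) such that $\gamma\cdot\beta\neq 0$ and $D^n\subseteq(\spn{\gamma\cdot\beta})_\beta$, and output $h(\gamma_n)$, or a default index if no such $\gamma$ exists; since $D^n$ is finite and all the tests are $K$-decidable, $M^K$ is recursive in $K$. For convergence, fix a text $T$ for $F_\beta$ with $F=\spn{g}$, $g\neq 0$, and let $\gamma^\ast$ be the least $\gamma$ with $\gamma\cdot\beta\neq 0$ and $\spn{\gamma\cdot\beta}=F$. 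Since $\gamma^\ast\in F_\beta$, it eventually occurs in $T$, and $D^n\subseteq F_\beta=(\spn{\gamma^\ast\cdot\beta})_\beta$ throughout, so $\gamma^\ast$ is eventually a candidate. For $\gamma<\gamma^\ast$: either $\gamma\cdot\beta\notin F$, so $\gamma$ never occurs in $T$; or $\gamma\cdot\beta\in F$, in which case $\spn{\gamma\cdot\beta}\subsetneq F$ by minimality of $\gamma^\ast$, so $(\spn{\gamma\cdot\beta})_\beta\subsetneq F_\beta$ and some representation in $F_\beta\setminus(\spn{\gamma\cdot\beta})_\beta$ eventually appears in $T$, permanently disqualifying $\gamma$ (as $D^n$ is increasing). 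Since there are finitely many $\gamma<\gamma^\ast$, $M^K$ outputs $h(\gamma^\ast)$ from some point on, and $W_{h(\gamma^\ast)}=F_\beta$; hence $\cF_\beta$ is $\Ex[K]$-learnable.

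The crux — and the place where the oracle $K$ is genuinely used rather than merely the fact that $R\leq_T K$ — is the first step: converting the bare $K$-decidability of $E_\beta$ into a decision procedure for membership in $(\spn{\gamma\cdot\beta})_\beta$ that is \emph{uniform} across all potential target subgroups, so that the tell-tale argument applies. The subtlety here is that $\beta$ itself need not be $K$-recursive — it is pinned down only up to multiplication by units of $G_R$ — so the argument has to be phrased through the scaling-invariant ratios $(\sigma\cdot\beta)/(\gamma\cdot\beta)$ rather than through the individual values $b_i$. A secondary point that requires care is the passage from ``$F_\beta$ is r.e.'' to an explicit r.e.\ index, which is precisely where the theorem's hypothesis (and the standing convention on deriving indices from generators) enters.
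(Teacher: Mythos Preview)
Your approach is sound in spirit and in most of its details --- the $K$-decision procedure for membership in $(\spn{\gamma\cdot\beta})_\beta$ via ratios is correct and nicely phrased, and the tell-tale convergence argument for $\gamma_n\to\gamma^\ast$ is clean. However, there is a genuine gap at the point where you pass from $\gamma_n$ to an r.e.\ index. The ``standing convention'' you invoke (that an r.e.\ index for $F_\beta$ can be read off from a generator $q/m$) is stated in the paper only for the specific sequences $\beta$ constructed there, not for an \emph{arbitrary} $\beta$ satisfying the present hypotheses. For such $\beta$ you cannot even recover the rational $\gamma\cdot\beta$ from $\gamma$ using $K$ --- as you yourself note, $\beta$ is determined only up to a unit --- and the bare hypothesis ``each $F_\beta$ is r.e.'' is non-uniform, so no $K$-recursive $h$ with $W_{h(\gamma)}=(\spn{\gamma\cdot\beta})_\beta$ is available in general.

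The paper closes this gap by searching for the index \emph{inside the learner} rather than computing it via a fixed function. Having found a representation $\rho$ of a generator of the group spanned by the content (which it does by a direct GCD-style recursion rather than your tell-tale selection, though either method works), the learner outputs the least $e\leq n$ such that $\ran(T[n])\subseteq W_e$ and $W_{e,n}\subseteq(\spn{\rho\cdot\beta})_\beta$. Both conditions are $K$-decidable (the first is $\Sigma^0_1$, the second reduces to your membership test on the finite set $W_{e,n}$), and the hypothesis that $F_\beta$ is r.e.\ guarantees that a correct $e$ exists, so the output converges to the least such $e$. Grafting this index-search step onto your argument in place of the appeal to $h$ would make your proof complete; your way of isolating $\gamma^\ast$ then stands as a legitimate alternative to the paper's recursive computation of $\rho$.
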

\def\proofofexklearnfinite{
\begin{proof}
Let $\beta$ be any generating sequence for $G_R$ satisfying the hypothesis of the theorem.
%Note that since equality with respect to $\beta$ is $K$-recursive, one can determine relative to oracle $K$
%any term of $\beta$: first, fix any non-zero term of $\beta$ (without loss of generality, 
%assume that the $1$-st term of $\beta$ is equal to some $p \neq 0$); next, letting $I_n(q)$ denote the vector of length
%$n$ whose first $n-1$ coordinates are $0$ and last coordinate is $q$, given any $j \geq 2$, one can use oracle $K$ to 
%search for a pair of integers $q_1,q_2$ such that $(I_1(q_1),I_j(q_2)) \in E_{\beta}$ and either $q_1$ and $q_2$ are 
%both nonzero and relatively prime, or $q_1 = 0$ and $q_2 \neq 0$.  The $j$-th term of $\beta$ would then be equal to
%$pq_1q_2^{-1}$.  
Define an $\Ex[K]$ learner $M$ as follows.  On input $a_0 \circ \ldots \circ a_n$, where $a_i \in \Z^{<\omega} \cup 
\{\#\}$ for all $i \in \{0,\ldots,n\}$, oracle $K$ is first used to determine a representation $\rho$ of a generator for the 
subgroup generated by $\{a_i \cdot \beta_{|a_i|-1}\mid 0 \leq i \leq n \wedge a_i \notin \{\ve,\#\}\}$.  
This can be done in a recursive fashion.  We first identify the indices $i_0,\ldots,i_{\ell} \in \{0,\ldots,n\}$ (if any) such that
$a_i \notin \{\ve,\#\}$ and $(a_i,\textbf{0}) \notin E_{\beta}$ (here $\textbf{0}$ denotes any zero vector); if no such index 
exists, then let $\rho$ be any representation of $0$.  Set $\rho_0 = a_{i_0}$.  Having defined $\rho_k$, use oracle $K$ to determine relatively prime integers $i,j$ such that $(i \rho_k, j a_{i_{k+1}}) \in E_{\beta}$; without loss of generality,
assume that $j \geq 1$.  Then search for some $\rho' \in \Z^{<\omega}$ with $(j\rho',\rho_k) \in E_{\beta}$, and set 
$\rho_{k+1} = \rho'$.  Assuming inductively that $\rho_k$ represents a generator for the subgroup generated
by $\{a_{i_p}\cdot \beta_{|a_{i_p}|-1}\mid 0 \leq p \leq k\}$, one deduces from the relations $(i \rho_k, j a_{i_{k+1}}) 
\in E_{\beta}$ and $(j\rho',\rho_k) \in E_{\beta}$ that
\begin{equation*}
\begin{aligned}
a_{i_{k+1}} \cdot \beta_{|a_{i_{k+1}}|-1} &= \displaystyle\frac{i}{j} \rho_k \cdot \beta_{||\rho_k||-1} \\
&= \displaystyle\frac{i}{j} j \rho' \cdot \beta_{|\rho'|-1} \\
&= i \rho' \cdot \beta_{|\rho'|-1}.  
\end{aligned}
\end{equation*} 
Hence $a_{i_{k+1}} \cdot \beta_{|a_{i_{k+1}}|-1} \in \spn{\rho' \cdot \beta_{|\rho'|-1}}$.  Since
$\rho_k \cdot \beta_{|\rho_k|-1} \in \spn{\rho' \cdot \beta_{|\rho'|-1}}$, it follows from the
induction hypothesis that for all $l \leq k$, $a_{i_l} \cdot \beta_{|a_{i_l}|-1} \in \spn{\rho' \cdot \beta_{|\rho'|-1}}$.
Thus, setting $\rho = \rho_{k+1}$, $\rho \cdot \beta_{|\rho|-1}$ generates the subgroup generated
by $\{a_i \cdot \beta_{|a_i|-1}\mid 0 \leq i \leq n \wedge a_i \notin \{\ve,\#\}\}$.       

$M$ now outputs the least $e \leq n$ (if any such $e$ exists) such that the following hold:
\begin{enumerate}
\item $\ran(a_0\ldots a_n) \subseteq W_e$.
\item For all $\tau \in W_{e,n}$, there is some integer $q$ such that $(\tau,q\rho) \in E_{\beta}$.   
\end{enumerate}           
If there is no $e \leq n$ satisfying all of the above conditions, then $M$ outputs a default index, say $0$.
Suppose $M$ is fed with a text for the set of representations of some finitely generated subgroup $F$. 
Then $M$ will identify a generator $g$ such that $F = \spn{g}$ in the limit; Condition 1 ensures that in the
limit, $M$ will conjecture a set $W_e$ such that $\spn{g}_{\beta} \subseteq W_e$, while Condition 2 ensures
that in the limit, $M$ will not overgeneralise, that is, it will not output a set containing elements not in $\spn{g}_{\beta}$.
Hence $M$ explanatorily learns $F_{\beta}$ relative to oracle $K$.    
\end{proof}
}
\proofofexklearnfinite

\noindent
We recall from Theorem \ref{thm:reequalitymod1} that there is a generating sequence
$\beta := (b_i)_{i<\omega}$ for $G_R$ such that equality modulo $1$ with respect to $\beta$ is r.e.; in other words,
the set $\{(\sigma,\sigma') \in \Z^{<\omega} \times\Z^{< \omega}\mid \sigma \cdot \beta_{|\sigma|-1} \equiv \sigma' \cdot 
\beta_{|\sigma|'-1}~(mod~1) \}$ is r.e.  The next result considers the learnability of a class that is in some sense
``orthogonal'' to the class $\Z_{\beta}$: %let $\cF_{\Z}$ denote
the class of all sets of representations of $\Z$ with respect to \emph{any} generating sequence $\beta'$ for $G_R$
such that $\Z_{\beta'}$ is r.e.  Equivalently, we ask whether the collection of all r.e.\ sets of pairs $(\sigma,\sigma')$ 
for which equality modulo $1$ holds with respect to any given generating sequence for $G_R$ can be learnt; it
turns out that this class is not even behaviourally correctly learnable.  In the statement
and proof of the next theorem, for any generating sequence
$\beta$ for $G_R$, let $E_{\beta}$ denote the set $\{(\sigma,\sigma') \in \Z^{<\omega} \times \Z^{<\omega}\mid
\sigma \cdot \beta_{|\sigma|-1} = \sigma' \cdot \beta_{|\sigma'|-1}~(mod~1)\}$.

\begin{theorem}\label{thm:reequalitynotbclearn}
Let $R \leq_T K$ be Martin-L\"{o}f random. % w.r.t the Lebesgue measure on $2^\omega$.
Let $\cG_0$ be the collection of all generating sequences $\beta$ for $G_R$ such that 
%$\{(\sigma,\sigma') \in \Z^{<\omega} \times \Z^{< \omega}: \sigma \cdot \beta_{|\sigma|-1} \equiv \sigma' \cdot 
%\beta_{|\sigma'|-1}~(mod~1) \}$
$E_{\beta}$ is r.e., and define $\cE_0 := \{E_{\beta}\mid \beta \in \cG_0\}$.
%$$
%\cE_0 := \{E_{\beta}\mid \beta \in \cG_0\}.
%$$ 
Then $\cE_0$ is not $\Bc$-learnable.
\end{theorem}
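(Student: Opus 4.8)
The plan is to argue by contradiction, mirroring the proof of Theorem~\ref{thm:finitegeneratednotex}: from a putative $\Bc$-learner $M$ for $\cE_0$ I would manufacture a strictly increasing \emph{recursive} enumeration $i_0<i_1<i_2<\cdots$ together with numbers $n_{i_j}\ge 1$ such that $p_{i_j}^{-n_{i_j}}\in G_R$ for every $j$, in direct contradiction with Proposition~\ref{prop:recursivegeneratingsequence} and the Martin-L\"{o}f randomness of $R$. First I would fix, via Theorem~\ref{thm:reequalitymod1}, a base generating sequence $\beta:=(b_i)_{i<\omega}$ for $G_R$ with $E_\beta$ r.e.; by Martin-L\"{o}f randomness $\beta$ contains infinitely many terms of the shape $p_\ell^{-n_\ell}$ with $n_\ell\ge 1$.

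Out of $\beta$ I would build a \emph{uniformly} r.e.\ subfamily $\mathcal{H}\subseteq\cE_0$ of ``local perturbations''. For positions $k,k'$, let $\beta_{[k,k']}$ be $\beta$ with its $k$-th term replaced by $b_k+b_{k'}$. Since $b_k=b^{[k,k']}_k-b^{[k,k']}_{k'}$, each $\beta_{[k,k']}$ is again a generating sequence for $G_R$, and since $(\sigma,\tau)\in E_{\beta_{[k,k']}}$ iff $(\sigma',\tau')\in E_\beta$, where $\sigma'$ arises from $\sigma$ by adding its $k$-th coordinate to its $k'$-th coordinate, the sets $E_{\beta_{[k,k']}}$ are uniformly r.e., all lie in $\cE_0$, and include $E_\beta$ itself. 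The algebraic point I would use is that $E_{\beta_{[k,k']}}\neq E_\beta$ precisely when the map sending each $[b_i]$ to $[b^{[k,k']}_i]$ fails to extend to an automorphism of $G_R/\Z$; in particular this happens when $b_k=p_\ell^{-n_\ell}$ and $b_{k'}=p_{\ell'}^{-n_{\ell'}}$ with $\ell\neq\ell'$ and $n_\ell,n_{\ell'}\ge 1$, and then the vector equal to $p_\ell^{n_\ell}$ in position $k$ and $-p_\ell^{n_\ell}$ in position $k'$ witnesses a pair lying in $E_{\beta_{[k,k']}}\smallsetminus E_\beta$ (it is sent to an integer by $\beta_{[k,k']}$ but not by $\beta$).

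Next I would take (non-uniformly, using the $\Bc$-analogue of Proposition~\ref{prop:lockingsequence}) a $\Bc$-stabilising sequence $\gamma$ for $M$ on $E_\beta$, so that $W_{M(\gamma\circ\tau)}=E_\beta$ for every $\tau$ over $E_\beta\cup\{\#\}$. Since $\ran(\gamma)$ involves only finitely many coordinates and there are infinitely many relevant primes, infinitely many terms $p_\ell^{-n_\ell}$ of $\beta$ sit at positions beyond those coordinates; for such $\ell$ one has $\ran(\gamma)\subseteq E_{\beta_{[k,k']}}$, with $k,k'$ the positions of $p_\ell^{-n_\ell}$ and of a second prime-power term beyond the coordinates of $\gamma$, so $\gamma$ extends to a text for $E_{\beta_{[k,k']}}$. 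As $M$ must $\Bc$-learn $E_{\beta_{[k,k']}}\neq E_\beta$, it is forced, along such a text, to eventually enumerate the witnessing pair into its conjecture---an event that is semi-decidable to observe and that, by the stabilising property of $\gamma$, can never occur while $M$ is fed only members of $E_\beta$ after $\gamma$. Dovetailing a recursive search over candidate primes, candidate positions, candidate exponents, and candidate finite text-extensions of $\gamma$ built only from the candidate witnessing pair together with confirmed, small-support members of $E_\beta$, would then produce the enumeration $i_0<i_1<\cdots$, contradicting Proposition~\ref{prop:recursivegeneratingsequence}.

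\textbf{Main obstacle.} The delicate step, which I expect to be the technical heart, is the effectivisation of this last search. In the $\Ex$ setting of Theorem~\ref{thm:finitegeneratednotex} the ``deviation'' witnessing relevance of a prime is a mind change $N(\gamma\circ\delta)\neq N(\gamma)$, which is decidable; for $\Bc$ there is no syntactic mind change, so one must arrange that the only available deviation is the semi-decidable event ``$M$ enumerates a prescribed pair $X$ into $W_{M(\gamma\circ\delta)}$'', \emph{and} that this event, when it occurs on data whose new entries are only $X$ together with confirmed small-support members of $E_\beta$, genuinely forces $X\notin E_\beta$ and hence $p_\ell^{-1}\in G_R$. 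Ruling out false positives---ensuring that the prime $p_\ell$ really must divide a denominator occurring in $G_R$ for the search to succeed, rather than the perturbation being distinctive for reasons unrelated to $p_\ell$---is exactly where the perturbation and the admissible filler data must be chosen with care (for instance by only allowing filler pairs of support strictly below the guessed positions, and by tying the guessed exponent to a divisibility condition on $E_\beta$ that is itself semi-decidable), and this is the part of the argument that requires the most work.
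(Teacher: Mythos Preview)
Your approach differs fundamentally from the paper's, and the obstacle you flag is genuinely fatal to the locking-sequence-plus-prime-enumeration strategy in the $\Bc$ setting. The signal ``$M$ enumerates the witnessing pair $X$'' does not isolate the event $p_\ell^{-1}\in G_R$. If you feed only $X$ together with small-support elements of $E_\beta$ after the stabilising sequence $\gamma$, then whenever your guessed $X$ happens to lie in $E_\beta$---as it will for many wrong guesses, e.g.\ when $b_k$ is one of the integer (in fact zero) terms of the Theorem~\ref{thm:reequalitymod1} sequence, so that $p_\ell^{n_\ell}b_k\in\Z$ regardless of $p_\ell$---the $\Bc$-stabilising property forces $W_{M(\gamma\circ\delta)}=E_\beta\ni X$, and the signal fires spuriously. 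If instead you feed a full text for $E_{\beta_{[k,k']}}$, the learner's eventual conjecture contains $X$ iff $p_\ell^{n_\ell}b_k\in\Z$, a condition on $b_k$ rather than on whether $p_\ell$ appears as a denominator in $G_R$; and at finite stages the learner may overgeneralise in any case. Unlike the $\Ex$ situation of Theorem~\ref{thm:finitegeneratednotex}, where a syntactic mind change after a locking sequence \emph{certifies} that the new data left the target, there is no semi-decidable event here that certifies $X\notin E_\beta$, so your search cannot avoid false positives and does not produce the enumeration needed for Proposition~\ref{prop:recursivegeneratingsequence}.

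The paper sidesteps this entirely by direct diagonalisation: rather than fixing $\beta$ and trying to read off primes, it \emph{builds} $\beta$ (and a text $T$ for $E_\beta$) adaptively against the putative learner $N$. At each stage one tentatively extends the current $\beta^s$ to a full generating sequence $\gamma$ as in Theorem~\ref{thm:reequalitymod1}, feeds $N$ a finite segment $\delta$ of $E_\gamma$, and waits until $W_{N(T_s\circ\delta)}$ enumerates some pair $(\sigma,\tau)$ with $\sigma,\tau$ both of the form $(0,\ldots,0,1)$, of different lengths, and longer than everything already committed. Such pairs abound in $E_\gamma$ because, by Proposition~\ref{prop:recursivegeneratingsequence}, $\gamma$ must have infinitely many zero terms; since $N$ $\Bc$-learns $E_\gamma$, one is eventually enumerated. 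One then fixes the still-free coordinates of $\beta$ at positions $|\sigma|$ and $|\tau|$ to $0$ and $p^{-1}$ respectively (for one fixed prime $p$ with $p^{-1}\in G_R$), so that $(\sigma,\tau)\notin E_\beta$ and hence $W_{N(T_s\circ\delta)}\neq E_\beta$. Iterating gives a single $\beta\in\cG_0$ and a text $T$ on which $N$ errs infinitely often. The freedom to \emph{choose} $\beta$ as the construction proceeds is precisely what replaces the information-extraction step that breaks down in your plan.
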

\def\proofofreequalitynotbclearn{
\begin{proof}
%Let $F: \Z^{<\omega} \times \Z^{<\omega} \rightarrow \N$ be a bijection
%between $\Z^{<\omega} \times \Z^{<\omega}$ and $\N$.
%For the purpose of the present proof, the notation $W_e$ will be used to denote
%the r.e.\ set $\{(\sigma,\sigma') \in \Z^{<\omega} \times \Z^{<\omega}: F((\sigma,\sigma')) \in W_e\}$.
%(where $W_e$ in the latter expression refers to its standard definition as an r.e.\ set
%of natural numbers with index $e$ in a fixed numbering $W_0,W_1,W_2,\ldots$
%of all r.e.\ sets of natural numbers). 
% An $\Ex[K]$ learner $M$ may be defined as follows.  On input
%$\Gamma$, $M$ uses oracle $K$ to determine the least $e \leq |\Gamma|$ (if such an $e$ exists) satisfying
%the following conditions.
%\begin{enumerate}
%\item $\ran(\Gamma) \subseteq W_e$.
%\item For all $(\sigma,\sigma') \in W_{e,|\Gamma|}$,  
%\end{enumerate}        
Assume, by way of contradiction, that $\cE_0$ has a behaviourally correct learner $N$. 
Using a standard type of argument in inductive inference, we will build a limiting recursive generating sequence $\beta$ for 
$G_R$ and a text $T$ for $E_{\beta} := \{(\sigma,\sigma') \in \Z^{<\omega} \times \Z^{< \omega}\mid \sigma \cdot 
\beta_{|\sigma|-1} \equiv \sigma' \cdot \beta_{|\sigma'|-1}~(mod~1)\}$ such that $E_{\beta}$ is r.e.\ and $N$ on $T$ outputs 
some wrong conjecture for $E_{\beta}$ infinitely often, that is, there are infinitely many $s$ for which 
$W_{N(T[s])} \neq E_{\beta}$.

The basic construction of $\beta$ follows the proof of Theorem \ref{thm:reequalitymod1}, the main difference
being that at various stages of the construction, one searches for some sequence $\Gamma$ to extend
the current text segment $T_s$ such that $N$ on $T_s \circ \Gamma$ conjectures some r.e.\ set containing
a pair $(\sigma,\sigma') \in \Z^{<\omega} \times \Z^{<\omega}$ such that $\sigma$ and $\sigma'$ are both of
the shape $(0,\ldots,0,1)$, where $|\sigma| \neq |\sigma'|$ and $\sigma,\sigma'$ are both longer than any 
$\tau \in \ran(T_s\circ\Gamma)$; 
one may then ensure that $W_{N(T_s \circ \Gamma)}$ is a wrong conjecture by setting the $|\sigma'|$-th position of $\beta$ to 
$0$ and the $|\sigma|$-th position of $\beta$ to $p^{-1}$ for some fixed prime $p$ with $p^{-1} \in G_R$.  The constructions 
of $\beta$ and $T$ are given in more detail below. 
The approximations of $\beta$ and $T$ at stage $s$ will be denoted by $\beta^s$ and $T_s$ respectively. 
Fix some prime $p$ such that $p^{-1} \in G_R$.
              
\begin{enumerate}
\item Set $T_0 = \ve$ and $\beta^0 = \ve$.
\item At stage $s+1$, %search for the least $t \geq s$ such that for some minimum $i$, $n_{i,t} \geq 1$ and $\beta^s$ does not contain $p_i^{-n_{i,t}}$.  %Let $\gamma_0$ be the concatenation $\beta^s \circ p_i^{-n_{i,t}}$.
let $\gamma$ be a generating sequence for $G_R$ that extends a prefix of $\beta^s$ and is defined as in
Theorem \ref{thm:reequalitymod1}, so that equality modulo $1$ with respect to $\gamma$ is r.e.  
In other words, one builds $\gamma$ in increasing segments by searching at every stage $t$ a new element $p_j^{-n_{j,t}}
\in G_{R^t}$ such that $n_{j,t} \geq 1$, where $R^t$ is the $t$-th approximation to $R$,  and adding $p_j^{-n_{j,t}}$ as a 
new term to the current approximation of $\gamma$.  
Furthermore, for every $\ell$ such that the $\ell$-th term of the current approximation of $\gamma$ does not belong to 
$G_{R^t}$, the $\ell$-th term of $\gamma$ is permanently set to $0$.  
Note that an r.e.\ index for $E_{\gamma}$ can be uniformly computed from $\beta^s$.
Now search for some $\delta \in (E_{\gamma} \cup \{\#\})^*$ such that $W_{N(T_s \circ \delta)}$ enumerates a
pair $(\sigma,\tau)$ satisfying the following:
\begin{enumerate}
\item $|\sigma| \neq |\tau|$.
\item $|\sigma| > |\beta^s|$ and $|\tau| > |\beta^s|$.
\item For all $(\eta,\eta') \in \ran(\delta)$, $|\sigma| > \max(\{|\eta|,|\eta'|\})$ and $|\tau| > \max(\{|\eta|,|\eta'|\})$. 
\item The first $|\sigma|-1$ (resp.~$|\tau|-1$) terms of $\sigma$ (resp.~$\tau$) are equal to $0$ while
the last term of $\sigma$ (resp.~$\tau$) is equal to $1$. 
\end{enumerate}    
Since $E_{\gamma}$ is r.e.\ by construction, $N$ must behaviourally correctly learn $E_{\gamma}$.
Moreover, since every term of $\gamma$ is either an element of $G_R$ of the shape $p_i^{-n_i}$ for some $n_i \geq 1$ 
or equal to $0$,  Proposition \ref{prop:recursivegeneratingsequence} implies that $\gamma$ has infinitely many terms equal to 
$0$.  Hence such $\delta$ and $(\sigma,\tau)$ must eventually be found. 

\medskip
Without loss of generality, assume that $|\sigma| < |\tau|$.
Now let $\beta^{s+1}$ be the sequence of length $|\tau|+1$ such that the $(|\tau|+1)$-st position of $\beta^{s+1}$
is equal to $p_i^{-n_{i,t}}$ for the least $t \geq s$ such that for some minimum $i$, $n_{i,t} \geq 1$ and $\beta^s$ does not contain $p_i^{-n_{i,t}}$, the $|\tau|$-th position of $\beta^{s+1}$ is $p^{-1}$ and the terms of $\beta^{s+1}$ between the 
$|\sigma|$-th and $(|\tau|-1)$-st positions inclusive are all equal to $0$, and the terms of $\beta^{s+1}$ between
the $(|\beta^s|+1)$-st and the $(|\sigma|-1)$-st positions inclusive are equal to the respective terms of the $(s+1)$-st 
approximation of $\gamma$; furthermore, all the terms of $\beta^{s+1}$ are corrected up the $(s+1)$-st approximation,
that is, every term of $\beta^{s+1}$ belongs to $G_{R^{s+1}}$ and is either equal to $0$ or is of the shape
$p_i^{-n_i}$ for some $n_i \geq 1$.  Let $\Gamma$ be a string whose range consists of all $(\sigma,\sigma') \in 
\{-s-1,-s,\ldots,s,s+1\}^{<s+2} \times \{-s-1,-s,\ldots,s,s+1\}^{<s+2}$ such that $\sigma \cdot 
\beta^{s+1}_{|\sigma|-1} = \sigma' \cdot \beta^{s+1}_{|\sigma'|-1}$, and set $T_{s+1} = T_s \circ \delta \circ \Gamma$.   
\end{enumerate}
Set $T = \lim_{s\rightarrow\infty}T_s$ and $\beta = \lim_{s\rightarrow\infty}\beta^s$
(more precisely, for each $i \in \N$, $T(i) = \lim_{s\rightarrow\infty}T_s(i)$ and $\beta(i) = \lim_{s\rightarrow\infty}\beta^s(i)$).
Arguing as in the proof of Theorem \ref{thm:reequalitymod1}, the set $E_{\beta}$ is r.e.;
in addition, the range of $T$ is precisely equal to $E_{\beta}$.  On the other hand, by construction $N$ on
$T$ infinitely often outputs an r.e.\ index for some set not equal to $E_{\beta}$.
Hence $N$ cannot be a behaviourally correct learner for $\cE_0$. 
\end{proof}
}
\proofofreequalitynotbclearn

\noindent
In contrast to Theorem \ref{thm:reequalitynotbclearn}, we present a positive learnability result for the
collection of all co-r.e.\ sets of pairs of representations of $G_R$ for which equality holds with respect to any 
generating sequence for $G_R$.  In the statement and proof of the next theorem, given any generating sequence
$\beta$ for $G_R$ such that equality with respect to $\beta$ is co-r.e., $E_{\beta}$ will denote the 
set $\{(\sigma,\sigma') \in \Z^{<\omega} \times \Z^{<\omega}\mid\sigma \cdot \beta_{|\sigma|-1} = \sigma' \cdot 
\beta_{|\sigma'|-1}\}$.

\begin{theorem}\label{thm:e1exklearn}
Let $R \leq_T K$ be Martin-L\"{o}f random. % w.r.t the Lebesgue measure on $2^\omega$.
Let $\cG_1$ be the collection of all generating sequences $\beta$ for $G_R$ such that 
%$\{(\sigma,\sigma') \in \Z^{<\omega} \times \Z^{< \omega}: \sigma \cdot \beta_{|\sigma|-1} \equiv \sigma' \cdot 
%\beta_{|\sigma'|-1}\}$
$E_{\beta}$ is co-r.e., and define $\cE_1 := \{E_{\beta}\mid \beta \in \cG_1\}$.
%$$
%\cE_1 := \{E_{\beta}\mid \beta \in \cG_1\}.
%$$ 
Then $\cE_1$ is explanatorily learnable relative to oracle $K$ using co-r.e.\ indices.  That is to say, there is a 
$K$-recursive learner $M$ such that for any $E_{\beta} \in \cE_1$ and any text $T$ for $E_{\beta}$, 
$M$ on $T$ will output an r.e.\ index for $(\Z^{<\omega} \times\Z^{<\omega}) \sm E_{\beta}$ in the limit.  
\end{theorem}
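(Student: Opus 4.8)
The plan is to recover from a text for $E_{\beta}$ enough of the generating sequence $\beta$ to determine $E_{\beta}$ outright, and then to converge to an r.e.\ index for its complement by identification by enumeration. First I would use the oracle $K$ to compute $R$ (since $R \leq_T K$), which gives a $K$-decision procedure for membership in $G_R$; note that any $\beta := (b_i)_{i<\omega} \in \cG_1$ generates $G_R$, which is not finitely generated, so infinitely many $b_i$ are nonzero. The key observation is that $E_{\beta}$ is already pinned down by the \emph{ratio data} of $\beta$: for any two coordinates $i,j$ with $b_i,b_j \neq 0$ one has $b_i/b_j \in \mathbb{Q}$, hence $a b_i = c b_j$ for suitable nonzero integers $a,c$, and this equality is witnessed by a pair in $E_{\beta}$ that must eventually appear in the text; similarly $b_i = 0$ is witnessed by the pair equating the $i$-th unit vector to the zero vector. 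Thus from the text one extracts, in the limit, the set $Z$ of zero coordinates and all pairwise ratios among the others, that is, a sequence $(r_i)_{i<\omega}$ of rationals with $r_i = 0$ for $i \in Z$ and $r_{i_0} = 1$ for a fixed reference coordinate $i_0 \notin Z$, such that $\beta = \lambda\cdot(r_i)_{i<\omega}$ for some $\lambda \in \mathbb{Q}\setminus\{0\}$.

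The next step is group-theoretic. Since no $G_R$ with $R$ Martin-L\"of random is infinitely divisible by any prime (the earlier Lemma), the only rationals $q$ with $qG_R = G_R$ are $q = \pm 1$: any other $q$ has nonzero $p$-adic valuation for some prime $p$, which shifts the finite $p$-type of $G_R$. As $\beta$ generates $G_R$ we have $\langle (r_i)_{i<\omega}\rangle = \lambda^{-1}G_R$, so $\lambda$ is determined up to sign; and since $E_{\beta} = E_{-\beta}$, the set $E_{\beta}$ is completely determined by the ratio data. Concretely,
\begin{equation*}
(\Z^{<\omega}\times\Z^{<\omega}) \setminus E_{\beta} = \Bigl\{ (\sigma,\sigma') \Bigm| \textstyle\sum_i \bigl(\sigma(i)-\sigma'(i)\bigr)r_i \neq 0 \ \text{in}\ \mathbb{Q} \Bigr\},
\end{equation*}
so the learner only has to converge to an r.e.\ index for this set, which by hypothesis \emph{is} r.e.

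Finally I would build the learner $M$. It runs in parallel a $K$-recursive approximation $(r_i^{s})_i$ to the ratio data, converging coordinatewise to $(r_i)_i$, together with a search over r.e.\ indices; at stage $s$ it outputs the least $p \leq s$ that is \emph{consistent at stage $s$}, meaning that $W_{p,s}$ meets no pair occurring in the text segment read so far and that $W_{p,s}$ agrees, on all pairs with coordinates at most $g(s)$, with $\{(\sigma,\sigma') : \mathrm{coordinates}\leq g(s),\ \sum_i(\sigma(i)-\sigma'(i))r_i^{s} \neq 0\}$, where $g(s)$ is a conservative, self-adjusting bound (the largest $k$ for which $r_0^{s},\dots,r_k^{s}$ have been stable for sufficiently long). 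The verification is then routine: some index is correct because the complement of $E_{\beta}$ is r.e.; an incorrect index is, from some stage on, exposed by a disagreeing pair (either surfacing in the text, or certified once the finitely many relevant $r_i^{s}$ have stabilised); and the least correct index survives from some stage on. Hence the conjectures stabilise to an r.e.\ index for $(\Z^{<\omega}\times\Z^{<\omega})\setminus E_{\beta}$.

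The main obstacle I expect is the stabilisation of the output index rather than the extraction of the ratio data: because $(r_i^{s})_i$ converges only coordinatewise and ``$b_i = 0$'' is decidable only in the limit, a naive enumeration could refute the correct index prematurely. Overcoming this is exactly the role of the conservative bound $g(s)$, together with the observation that a temporary misjudgement of a ratio or of a zero coordinate never yields a permanent refutation of a correct candidate — so the enumeration must re-evaluate every candidate afresh at each stage, and one must check carefully that correct candidates are eventually never dropped again.
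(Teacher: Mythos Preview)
Your approach is essentially the one in the paper: extract the ratio data of $\beta$ relative to a first nonzero reference coordinate from the pairs $(I_{e+1}(q),I_{d+1}(r))$ appearing in the text, then run identification-by-enumeration over r.e.\ indices, keeping the least $c$ whose $W_c$ is disjoint from the text so far and contains every inequality pair certified by the current ratio data. Two remarks are worth making.

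First, the detour through computing $R$ via $K$ and the argument that $qG_R=G_R$ forces $q=\pm 1$ are both correct but unnecessary: as you yourself observe, $(\sigma,\sigma')\notin E_\beta$ iff $\sum_i(\sigma(i)-\sigma'(i))r_i\neq 0$, and this condition depends only on the ratios $r_i$, not on the scale $\lambda$ at all. The paper goes directly to this formula and never computes $R$ or $\lambda$.

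Second, your use of time-bounded approximations $W_{p,s}$ in the enumeration step is where you manufacture the very stabilisation difficulty you then worry about. The learner is allowed to be $K$-recursive, and the paper simply exploits this: the two consistency conditions it imposes are ``$\ran(\gamma)\cap W_c=\emptyset$'' and ``each certified inequality pair lies in $W_c$'', both of which are $K$-decidable outright. With this, the least correct index is \emph{never} rejected once the relevant ratios have stabilised, and every incorrect index is permanently rejected from some stage on, so convergence is immediate. Your conservative bound $g(s)$ and the re-evaluation mechanism are a workaround for a problem that disappears once you use the oracle you are given; as written, you would still need to argue that $W_{p^*,s}$ keeps pace with the growing set of required inequality pairs, which is not automatic.
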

\def\proofofe1exklearn{
\begin{proof}
%Following the notation in the proof of Theorem \ref{thm:reequalitynotbclearn}, fix a bijection
%$F: \Z^{<\omega} \times \Z^{<\omega} \rightarrow \N$ between $\Z^{<\omega} \times \Z^{<\omega}$ and $\N$,
%and let $W_e$ denote the r.e.\ set $\{(\sigma,\sigma') \in \Z^{<\omega} \times \Z^{<\omega}: F((\sigma,\sigma')) 
%\in W_e\}$ (where $W_e$ in the latter expression refers to its standard definition as an r.e.\ set
%of natural numbers with index $e$ in a fixed numbering $W_0,W_1,W_2,\ldots$
%of all r.e.\ sets of natural numbers).  
Define a $K$-recursive learner $M$ for $\cE_1$ as follows.
On input $\gamma$, $M$ first guesses the minimum $e$ such that the $(e+1)$-st term of the generating
sequence is non-zero; it takes $e$ to be the smallest $e' \leq |\gamma|$ such that $(\textbf{0}, I_{e'+1}(1))
\notin \ran(\gamma)$ (where $\textbf{0}$ denotes the zero vector of length $1$; $I_{e+1}(1)$ denotes the vector of 
length $e+1$ whose first $e$ coordinates are $0$ and whose last coordinate is $1$); if no such $e'$ exists,
then $M$ outputs a default index, say an r.e.\ index for $\emptyset$.  Based on the
$(e+1)$-st term $b_e$ of the generating sequence and $\ran(\gamma)$, $M$ calculates some of the remaining 
terms of this sequence as a multiple of $b_e$.  For each $(\sigma,\tau) \in \ran(\gamma)$ such that
there are $d \in \N$, $q \in \Z$ and $r \in \Z^+$ with $\sigma = I_{e+1}(q)$, $\tau = I_{d+1}(r)$,
$d \neq e$ and $\gcd(q,r) = 1$, the $(d+1)$-st term $b_d$ of the generating sequence is $\displaystyle\frac{qb_e}{r}$.
Let $e_0,\ldots,e_{\ell}$ be all the numbers such that for each $i \in \{0,\ldots,\ell\}$,
$M$ has determined a rational number $q_i$ for which the $(e_i+1)$-st term of the generating sequence
equals $q_i b_e$ (in particular, there is a $j$ with $e_j = e$).  $M$ finds all pairs
$(\sigma_0,\sigma'_0),\ldots,(\sigma_{\ell'},\sigma'_{\ell'}) \in \Z^{<\omega} \times \Z^{<\omega}$ for 
which all non-zero positions of $\sigma_i$ and $\sigma'_i$ belong to $\{e_0,\ldots,e_{\ell}\}$ and 
$\sum_{j=0}^{\ell}\sigma_i(e_j)\cdot q_j \neq \sum_{j=0}^{\ell}\sigma'_i(e_j)\cdot q_j$. %; since $b_e$ is 
%assumed to be non-zero, the latter condition
%is equivalent to $\sum_{j=0}^{\ell}\sigma(e_j)\cdot
%q_jb_e \neq \sum_{j=0}^{\ell}\sigma'(e_j)\cdot q_j b_e$. % equality with respect to the generating sequence (based on the
%given input) holds and all non-zero positions of $\sigma$ and $\sigma'$ belong to
%$\{e_0,\ldots,e_{\ell}\}$.   
$M$ outputs the least index $c \leq |\gamma|$ satisfying the following conditions (if such a $c$ exists).
\begin{enumerate}
\item $\ran(\gamma) \cap W_c = \emptyset$.
\item For all $i \in \{0,\ldots,\ell'\}$, $(\sigma_i,\sigma'_i) \in W_c$.
\end{enumerate}           
If no such $c$ exists, then $M$ conjectures $\emptyset$.

Suppose $M$ is presented with a text $T$ for some $E_{\beta} \in \cG_1$, where $\beta$ is a generating
sequence for $G_R$ such that equality is co-r.e.\ with respect to $\beta$.  Suppose $\beta = (b_i)_{i < \omega}$.
Then $M$ on $T$ will find in the limit the least number $e$ such that $b_e \neq 0$
(since for all $d < e$, $(\textbf{0},I_{d+1}(1)) \in \ran(T)$).  By Condition 1, $M$ will, in the limit, always conjecture a set 
contained in $(\Z^{<\omega} \times \Z^{<\omega}) \sm E_{\beta}$.  
Furthermore, for all $(\sigma,\sigma') \notin E_{\beta}$ and every $i \leq \max(\{|\sigma|,|\sigma'|\})$, there are integers 
$q_i,r_i$ with $r_i > 0$ and $\gcd(q_i,r_i) = 1$ such that $q_ib_e = r_ib_i$, and therefore
$\ran(T)$ must contain $(I_{e+1}(q_i),I_{i+1}(r_i))$.  Since $b_e \neq 0$, one has 
$$
\sum_{j=0}^{|\sigma|-1}\sigma(j) \cdot \displaystyle\frac{q_j}{r_j} \neq \sum_{j=0}^{|\sigma'|-1}\sigma'(j) \cdot \displaystyle\frac{q_j}{r_j}
\Leftrightarrow
\sum_{j=0}^{|\sigma|-1}\sigma(j) \cdot \displaystyle\frac{q_jb_e}{r_j} \neq \sum_{j=0}^{|\sigma'|-1}\sigma'(j) \cdot \displaystyle\frac{q_jb_e}{r_j} 
\Leftrightarrow
(\sigma,\sigma') \notin E_{\beta},
$$
and thus by Condition 2, $M$ will, in the limit, always conjecture a set containing $(\sigma,\sigma')$. 
$M$ will therefore converge to the least index $c$ satisfying $E_{\beta} = (\Z^{<\omega} \times \Z^{<\omega}) \sm
W_c$, as required. 
\end{proof}
}
\proofofe1exklearn

\section{Random Subrings of Rationals and Random Joins of Pr\"{u}fer Groups}

We have seen in Section \ref{sec:StringRandomGroups} that any Martin-L\"{o}f random sequence $R \leq_T K$ 
gives rise to a random subgroup $G_R$ of rationals such that for some generating sequence $\beta$ for $G_R$,
equality with respect to $\beta$ is co-r.e.\ and another $\beta$ such that the set of representations of any non-trivial 
finitely generated subgroup of $G_R$ with respect to $\beta$ is r.e.  The present section will define other random 
structures with similar properties in an entirely analogous manner.  

We begin by defining random subrings of rationals based on Martin-L\"{o}f random sequences.  
First, one observes that for every subring $A$ of $(\mathbb{Q},+,\cdot)$, there is a set $P$ of primes
such that $A$ consists of all fractions $\displaystyle\frac{q}{r}$ with $q$ an integer and $r$ a product of prime powers $p_{i_0}^{n_{i_0}}$, $p_{i_k}^{n_{i_k}}$ for some $p_{i_0},\ldots,p_{i_k} \in P$.\footnote{To see this, suppose $A$
is non-trivial (otherwise the statement is immediate); then $A$ must contain $0$ as well as $1$, and therefore
by induction $A$ contains all integers.  Let $P$ be the set of primes $p$ such that $p$ has a multiplicative inverse in $A$.
Then for all $p_{i_0},\ldots,p_{i_k} \in P$ and all integers $q,n_{i_0},\ldots,n_{i_k}$, one has $qp_{i_0}^{-n_{i_0}}\ldots
p_{i_k}^{-n_{i_k}} \in A$.  Conversely, let $p$ and $q$ be relatively prime integers
such that $q > 0$ and $\frac{p}{q} \in A$.  Let $x$ and $y$ be integers with $xp+yq = 1$;
then $\frac{1}{q} = \frac{xp+yq}{q} = \frac{xp}{q} + y \in A$.  Thus for every prime factor
$r$ of $q$, $\frac{1}{r} \in A$ and so $r \in P$.}  Let $R$ be any Martin-L\"{o}f random sequence that is
Turing reducible to $K$, and let $N_R$ be the subring of $(\mathbb{Q},+,\cdot)$ such that for all $i$, $p_i^{-1} \in N_R$ iff
$R(i) = 1$.  By the preceding observation, $N_R$ consists of all fractions $\displaystyle\frac{p}{q}$ such that $p$
is any integer and $q$ is any product of prime powers $p_{i_0}^{n_{i_0}},\ldots,p_{i_k}^{n_{i_k}}$ with
$R(i_j) = 1$ for all $j \in \{0,\ldots,k\}$ and $n_{i_0},\ldots,n_{i_k} \geq 0$. 
By analogy to the definition of a generating sequence for $G_R$, a \emph{generating sequence for $N_R$}
is any infinite sequence $(b_i)_{i<\omega}$ such that $\spn{b_i\mid i < \omega} = N_R$. 
All the earlier definitions that applied to $G_R$ will be adapted, mutatis mutandis, to the subring $N_R$. 

\begin{theorem}\label{thm:randomsubringcoreequfinite}
Let $R \leq_T K$ be Martin-Löf random w.r.t the Lebesgue measure on $2^\omega$.
Then there is a generating sequence $\beta$ for $N_R$ such that
\begin{enumerate}[label=(\roman*)]
\item equality with respect to $\beta$ is co-r.e.;
\item for any finitely generated subgroup $F$ of $N_R$, the set of representations of $F$ with respect to
$\beta$ is co-r.e.;
\item the class of all sets of representations of finitely generated subgroups of $N_R$ with respect to $\beta$ is explanatorily
learnable using co-r.e.\ indices.  
\end{enumerate}  
\end{theorem}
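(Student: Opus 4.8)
The plan is to build the sequence $\beta$ by merging the two constructions behind Theorems~\ref{thm:genseqcoreequality} and~\ref{FinSubgroupsRE}, transported from $G_R$ to the ring $N_R$, and to read off the three items afterwards. By the description of subrings of $(\mathbb{Q},+,\cdot)$ recalled just before the statement, $N_R$ is the additive subgroup of $(\mathbb{Q},+)$ generated by $\{1\}\cup\{p_i^{-n}\mid R(i)=1\text{ and }n\geq 1\}$, and Martin-L\"{o}f randomness of $R$ guarantees that both $\{i\mid R(i)=1\}$ and $\{i\mid R(i)=0\}$ are infinite, with each prime $p_i$ in the former set infinitely dividing $N_R$. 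First I would construct a limiting-recursive $\beta=(b_i)_{i<\omega}$: at stage $s$, use the recursive approximation to $R$ to guess $\{i\mid R(i)=1\}$, dovetail the candidate generators $p_i^{-n}$ (for currently-guessed $i$ and $n\geq 1$) into new positions, and, whenever the approximation to a bit $R(\ell)$ drops from $1$ to $0$, replace every currently ``live'' entry of the form $p_\ell^{-n}$ (finitely many at any stage) by a suitable integer $w$. Since $R\leq_T K$, each bit of $R$ changes only finitely often, so every correct generator is eventually placed permanently; hence $\spn{b_i\mid i<\omega}=N_R$, and the inclusion $\subseteq$ holds because every entry is $1$, a fraction $p_i^{-n}$ with $R(i)=1$, or an integer.

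The crux is the choice of $w$. When the entry $p_\ell^{-n}$ sitting at position $j$ is replaced at stage $s+1$, I would choose $w$, via the Chinese Remainder Theorem and exactly as in the proof of Theorem~\ref{FinSubgroupsRE}, so that $w\,p_\ell^{\,n}\equiv 1\pmod{\prod_{i\leq s+1}p_i^{\,s+1}}$; and among the whole residue class of admissible $w$ I would additionally avoid the finitely many values that would turn a previously-witnessed inequality into an equality, as in the proof of Theorem~\ref{thm:genseqcoreequality} (carrying out the replacements one position at a time keeps the two requirements compatible). The point is that $w-p_\ell^{-n}$ then equals $z\cdot\bigl(\prod_{i\leq s+1}p_i^{\,s+1}\big/p_\ell^{\,n}\bigr)$ for some integer $z$, hence is itself an integer divisible by every number whose prime factors lie among $p_0,\dots,p_{s+1}$ with multiplicities at most $s+1$ (once $p_\ell$'s contribution is set aside); in particular, for a fixed finitely generated $F=\spn{q/m}$ of $N_R$ it is divisible by $q$ as soon as $s$ is past the recursively computable threshold at which $p_{s+1}$ dominates the prime factors of $qm$ and $s+1$ dominates their exponents. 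So each such replacement alters $\sigma\cdot\beta$, for any $\sigma$ with bounded coordinates, by an element of $\spn{q}\subseteq F$ --- preserving both membership and non-membership in $F$ --- while being an integer change that can be steered clear of destroying witnessed inequalities.

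Items (i) and (ii) then follow as in Theorems~\ref{thm:genseqcoreequality} and~\ref{FinSubgroupsRE}. For~(i): since witnessed inequalities are never destroyed, enumerating over all $s$ (with growing coordinate bounds) the pairs $(\sigma,\sigma')$ with $\sigma\cdot\beta_s\neq\sigma'\cdot\beta_s$ is a correct and complete enumeration of the complement of equality. For~(ii): given $F$, there is a stage $s_F$ after which all replacements involve only ``irrelevant'' primes and, by the divisibility computation above, preserve both membership and non-membership in $F$; so, by the same accounting as in Theorem~\ref{FinSubgroupsRE} but read in the complementary direction, the complement of $F_\beta$ is recursively enumerable, i.e.\ $F_\beta$ is co-r.e. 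For~(iii), I would first note $\beta\leq_T K$: for each position, whether that entry is ever frozen is a $\Sigma^0_1$ question about the recursive stage-by-stage approximation, hence decidable with oracle $K$, and its eventual value is then recursively computable --- so a $K$-oracle machine can evaluate any representation against $\beta$. Adapting the proof of Theorem~\ref{thm:exklearnfinite} (co-r.e.\ equality being in particular $K$-recursive), the learner, reading a text for $F_\beta$, evaluates each presented representation using $K$ and thereby accumulates $F$ as a set of rationals; it recovers the generator $q/m$ in the limit (the denominator $m$ as the largest denominator appearing in lowest terms among the elements seen, the numerator $q$ as the $\gcd$ of the numerators of those elements of denominator $m$), and then outputs the co-r.e.\ index for $F_\beta$ produced uniformly from $(q,m)$ by the enumeration procedure of~(ii), giving $\Ex[K]$-learnability using co-r.e.\ indices.

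The main obstacle is forcing a single $\beta$ to satisfy all three items at once. Theorem~\ref{FinSubgroupsRE} only needed \emph{membership} in each $F$ preserved, which gives an r.e.\ set of representations; here I must preserve \emph{non-membership} in \emph{every} finitely generated subgroup of $N_R$ simultaneously, which is precisely what upgrades $F_\beta$ to co-r.e., and this requires each replacement to change a bounded representation by an element lying in all the relevant subgroups at once --- which in turn has to be reconciled with the inequality-avoidance demanded by~(i). The resolution (the CRT congruence leaves a full residue class of $w$, only finitely many values are bad for~(i), and the change $w-p_\ell^{-n}$ is a highly divisible integer) is conceptually clear, so the genuine work lies in the bookkeeping: tracking which $F$ are ``active'' at a given stage, verifying that the behaviour before the threshold $s_F$ contributes only a finite, effectively-describable correction to $F_\beta$, and checking the $\beta\leq_T K$ claim on which~(iii) rests.
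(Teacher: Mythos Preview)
Your argument for~(ii) has a genuine gap. You assert that the replacement integer $w$ can be chosen with $w\,p_\ell^{\,n}\equiv 1\pmod{\prod_{i\leq s+1}p_i^{\,s+1}}$; but this congruence has no integer solution, since $p_\ell$ divides the modulus and hence $w\,p_\ell^{\,n}\equiv 0\pmod{p_\ell}$. If you repair the modulus by excluding $p_\ell$ (which is what Theorem~\ref{FinSubgroupsRE} actually does), then $w$ exists, but your next claim fails: $w-p_\ell^{-n}$ is an integer minus a proper fraction and is therefore \emph{not} an integer. The crucial step in Theorem~\ref{FinSubgroupsRE} was that for $\sigma\in F_{\beta_s}$ one has $p_\ell^{\,n}\mid\sigma(j)$, so that $\sigma(j)(w-p_\ell^{-n})$ becomes an integer multiple of $q$; this divisibility is forced precisely by membership in $F$ and is unavailable for arbitrary $\sigma$. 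Concretely, take $\sigma=I_{j+1}(1)$: before replacement $\sigma\cdot\beta_s=p_\ell^{-n}\notin\mathbb{Z}$, after replacement $\sigma\cdot\beta=w\in\mathbb{Z}$, so non-membership in $\spn{1}$ is destroyed. Thus the integer-replacement strategy cannot deliver co-r.e.\ $F_\beta$.

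The paper exploits a feature of $N_R$ absent in $G_R$: there is a fixed prime $p$ with $p^{-1}\in N_R$, so $p^{-n}\in N_R$ for all $n$. When an entry $p_\ell^{-m}$ must be discarded, the paper replaces it not by an integer but by $p^{-n}$ for $n$ much larger than any exponent yet appearing. This still lies in $N_R$, can be chosen to avoid the finitely many bad values for~(i), and---because $n$ is so large---forces any $\sigma$ with nonzero $j$-th coordinate to evaluate to a fraction whose denominator has $p$-adic valuation exceeding anything permitted by a fixed $F$, pushing $\sigma\cdot\beta$ outside $F$ and thereby preserving non-membership.

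Separately, your treatment of~(iii) delivers only $\Ex[K]$, whereas the theorem asserts plain $\Ex$ (the paper reserves the notation $\Ex[K]$ for $K$-oracle learners, as in Theorem~\ref{thm:exklearnfinite}). The paper's learner is recursive: it recovers the generator by reading off the $\gcd$ and the positions witnessing $I_{j+1}(d)$, approximating $\beta(j)$ via the limit-recursive construction, and then searches for the least threshold $s_0$ whose associated co-r.e.\ set is consistent with the text seen so far---no oracle is needed.
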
        

\begin{proof}
We follow the construction of $\beta$ in the proof of Theorem \ref{thm:genseqcoreequality} with a few modifications.
Fix some prime $p$ such that $p^{-1} \in N_R$; the Martin-L\"{o}f randomness of $R$ implies that such a $p$
exists.  As before, $R_s$ denotes the $s$-th approximation of $R$; without loss of generality, assume that 
for all $t > s$, $R(t) = 0$.  The $(i+1)$-st term of $\beta$ will be denoted by $\beta(i)$, while the $s$-th approximation
of $\beta$ will be denoted by $\beta^s$.  For any $\alpha \in \mathbb{Q}^{<\omega}$, the $(i+1)$-st term of $\alpha$
will be denoted by $\alpha(i)$.  The construction of $\beta$ proceeds in stages.  For any sequence $\gamma$, 
$i < |\gamma|$ and $r \in \mathbb{Q}$, $\gamma[i \rightarrow r]$ denotes
the sequence obtained from $\gamma$ by replacing its $(i+1)$-st term with $r$. 
\begin{enumerate}
\item Stage $0$.  Set $\beta^0 = (1,p^{-1})$.
\item Stage $s+1$. 
\begin{enumerate} 
\item  Compute $R_{s+1},R_{s+2},R_{s+3},\ldots$ in succession until the least $s' \geq s+1$ is found such that for some 
minimum $i \leq s'$, $R_{s'}(i) = 1$ and $p_{i}^{-1}$ is not a term of $\beta^s$.
(The Martin-L\"{o}f randomness of $R$ implies that such $s'$ and $i$ exist.)
Set $\beta^{s+1} \leftarrow \beta^s \circ (p_i^{-1})$.
For each $p_j$ such that $j = i$ or $\beta^s$ contains a term equal to $p_j^{-1}$, and
for $m=1$ to $m=s+1$, if $p_j^{-m}$ is not a term of $\beta^s$, set $\beta^{s+1} \leftarrow \beta^{s+1} \circ (p_j^{-m})$.
(This step ensures that for all $p_j^{-1} \in N_R$, $\beta$ eventually contains all terms of the shape $p_j^{-m}$,
 where $m \geq 1$.) 
%If no such $i'$ exists, set $\beta^{s+1} \leftarrow \beta^s$ and go to Step 2.b.  
%Otherwise, 
 %      the concatenation of $\beta^s$ and $(p_i^{-1})$, 
Then go to Step 2.b. 
\item Check for every $j \leq |\beta^{s+1}|-1$ whether the $(j+1)$-st term of $\beta^{s+1}$ equals $p_{\ell}^{-m}$ 
for some $\ell$ and $m \geq 1$ such that $R_{s+1}(\ell) = 0$.  Suppose there is a least such $j$, say $j'$.  Then the 
$(j'+1)$-st term of $\beta^{s+1}$ is replaced with $p^{-n}$ for some $n > s+1$ that is large enough so that 
$n > 2n'+s+1$ for all $p^{-n'}$ occurring
in $\beta^{s+1}$ and all inequalities over the range $\{-s-1,\ldots,0,\ldots,s+1\}$ with respect to
$\beta^{s+1}$ are preserved; in other words, for all pairs $(\sigma,\sigma') \in 
\{-s-1,\ldots,0,\ldots,s+1\}^{< |\beta^{s+1}|+1} \times \{-s-1,\ldots,0,\ldots,s+1\}^{< |\beta^{s+1}|+1}$ 
such that $\sigma \cdot \beta^{s+1}_{|\sigma|-1} 
\neq \sigma' \cdot \beta^{s+1}_{|\sigma'|-1}$, one also has the relation $\sigma \cdot \beta'_{|\sigma|-1} \neq \sigma' \cdot 
\beta'_{|\sigma'|-1}$, where $\beta' = \beta^{s+1}[j' \rightarrow p^{-n}]$.  
Set $\beta^{s+1} \leftarrow \beta^{s+1}[j' \rightarrow p^{-n}]$, then go to Step 2.c.
\item Repeat Step 2.b until no term of $\beta^{s+1}$
is equal to $p_{\ell}^{-m}$ for some $m \geq 1$ and $\ell$ with $R_{s+1}(\ell) = 0$, then
go to Stage $s+2$. % Let $\beta^{s+1}$ be the final sequence
%obtained.      
\end{enumerate}   
\end{enumerate}  
Set $\beta = \lim_{s \rightarrow \infty} \beta^s$.  Then by Step 2.a, for every $p_j^{-1} \in N_R$ and $n \geq 1$,
$\beta$ contains a term equal to $p_j^{-n}$.  Since, as was observed earlier, every $x \in N_R$ is of the shape
$qp_{i_0}^{-n_{i_0}}\ldots p_{i_k}^{-n_{i_k}}$ for some $p_{i_0}^{-1},\ldots,p_{i_k}^{-1} \in N_R$
and $q \in \Z$, $\beta$ is a generating sequence for $N_R$.  It remains to verify that $\beta$ satisfies (i), (ii) and (iii).

(i) It suffices to show that $\NE_{\beta} := \{(\sigma,\sigma')\in \Z^{<\omega} \times \Z^{<\omega}\mid
\sigma \cdot \beta_{|\sigma|-1} \neq \sigma' \cdot \beta_{|\sigma'|-1}\}$ is equal to
the r.e.\ set $\bigcup_{s < \omega} \NE^s_{\beta}$, where $\NE^s_{\beta} := \{(\tau,\tau') \in 
\{-s,\ldots,0,\ldots,s\} \times \{-s,\ldots,0,\ldots,s\}\mid |\tau|,|\tau'| \leq |\beta^s| \wedge 
\tau \cdot \beta^s_{|\tau|-1} \neq \tau' \cdot \beta^s_{|\tau'|-1}\}$. 
Suppose $(\sigma,\sigma') \in \NE_{\beta}$.  Since $R$ is limiting-recursive, there is an 
$s_0 \geq \max(\ran(\sigma) \cup \ran(\sigma'))$ large enough so that whenever $t \geq s_0$, 
$|\beta^t| \geq \max(\{|\sigma|,|\sigma'|\})$ and the value of $\beta^t(i)$ for every $i$ in the domain
of $\sigma$ or $\sigma'$ has stabilised, i.e.\ $\beta^t(i) = \beta(i)$ for all $i < \max(\{|\sigma|,|\sigma'|\})$.
It follows that $(\sigma,\sigma') \in \NE^{s_0}_{\beta}$.  

Now suppose $(\tau,\tau') \in \NE^{s_1}_{\beta}$
for some $s_1 \in \N$, so that $\tau \cdot \beta^{s_1}_{|\tau|-1} \neq \tau' \cdot
\beta^{s_1}_{|\tau|-1}$.  By Steps 2.b and 2.c in the construction of $\beta$, one has
$\tau \cdot \beta^t_{|\tau|-1} \neq \tau' \cdot \beta^t_{|\tau'|-1}$ for all $t \geq s_1$.
In particular, if $s_2 \geq s_1$ is the least number such that for all $t' \geq s_2$,
$\beta^{t'}(i) = \beta(i)$ whenever $i < \max(\{|\sigma|,|\sigma'|\}$, then
$\tau \cdot \beta^{s_2}_{|\tau|-1} \neq \tau' \cdot \beta^{s_2}_{|\tau'|-1}$, which is equivalent
to $\tau \cdot \beta_{|\tau|-1} \neq \tau' \cdot \beta_{|\tau'|-1}$.  Therefore
$(\tau,\tau') \in \NE_{\beta}$.  

(ii) Let $F$ be any finitely generated subgroup of $N_R$ that is generated by $\displaystyle\frac{r}{r'}$
for some relatively prime integers $r$ and $r'$ with $r' > 0$.  Note that every term of $\beta$
is equal to $1$ or is of the shape $p_i^{-m'}$ for some $i$ and $m' \geq 1$, and that $\beta$ is a one-one
sequence.
Hence there is a least $s_0$ such that for all $j \geq s_0$, $\beta(t) \notin F$;
furthermore, the Martin-L\"{o}f randomness of $R$ implies that $s_0$ can be chosen
so that $\beta_{s_0}$ contains a term of the shape $p^{-n'}$ for some $n' \in \N$
with $p^{n'} \nmid r'$.
Fix some $s_1 \geq s_0$ such that for all $t' \geq s_1$, $|\beta^{t'}| \geq s_0+1$
and $\beta^{t'}(i) = \beta(i)$ whenever $i < s_0$.  We claim that the complement of the set of 
representations of $F$ with respect to $\beta$, denoted by $\overline{F}_{\beta}$, is equal to the r.e.\
set
$$
\bigcup_{t \geq s_1}\{ \sigma \in \{-t,\ldots,0,\ldots,t\} \times \{-t,\ldots,0,\ldots,t\}\mid 
|\sigma| \leq |\beta^t| \wedge \sigma \cdot \beta^t_{|\sigma|-1} \notin F\}.
$$
That the latter set contains $\overline{F}_{\beta}$ follows from the fact that for all 
$\sigma \in \overline{F}_{\beta}$, there is an $s_2$ such that whenever $t \geq s_2$,
$|\beta^t| \geq |\sigma|$ and $\beta^t(i) = \beta(i)$ for all $i < |\sigma|$.

Now consider any $\sigma \in \{-t,\ldots,0,\ldots,t\} \times \{-t,\ldots,0,\ldots,t\}$ and 
$t \geq s_1$ such that $|\sigma| \leq |\beta^t|$
and $\sigma \cdot \beta^t_{|\sigma|-1} \notin F$.  It will be shown that $\sigma \in \overline{F}_{\beta}$.
Consider all $i_0,\ldots,i_k \in \{0,\ldots,|\sigma|-1\}$ 
such that for each $j \in \{0,\ldots,k\}$, there is some $t' > t$ with
$\beta^{t'}(i_j) \neq \beta^t(i_j)$.  It may be assumed without loss of generality
that $\sigma(i_j) \neq 0$ for all $j \in \{0,\ldots,k\}$ (for if $\sigma(i_j) = 0$,
then any difference between the value of $\beta^t(i)$ and $\beta^{t'}(i)$ would have no 
effect on whether $\sigma \in \overline{F}_{\beta}$).  By Steps 2.b and 2.c in the
construction of $\beta$, there are $n_0,\ldots,n_k > t$ with $n_{i+1} > 2 n_i+t$ for all 
$i \in \{0,\ldots,k-1\}$ such that $\{\beta(i_j)\mid 0 \leq j \leq k\} =
\{p^{-n_i}\mid 0 \leq i \leq k\}$.  Without loss of generality, assume that
$\beta(i_j) = p^{-n_j}$ for all $j \in \{0,\ldots,k\}$.
Then $\sum_{j=0}^k \sigma(i_j) \beta(i_j) = \sum_{j=0}^k \sigma(i_j)p^{-n_j}$. 
Since $0 < |\sigma(i_j)| \leq t$, $n_j > t$ and $n_{j+1} > 2n_j+t$, there
is an $n'' \in \N$ such that $\sum_{j=0}^k \sigma(i_j)p^{-n_j}$ equals $qp^{-n''}$ for 
some $q \in \Z$ that is coprime to $p$ and $n'' > 2n'''$ for the largest $n'''$ such that $p^{-n'''}$
is a term of $\beta^t$ (such a term exists by the choice of $s_0$).  Thus $\sigma \cdot \beta_{|\sigma|-1}$
is of the shape $\displaystyle\frac{r''}{r'''}$ for some relatively prime integers
$r''$ and $r'''$ such that $p^{n''}$ divides $r'''$.  Since $\beta^t$ contains a term
$p^{-n'}$ such that $p^{n'} \nmid r'$ (where, as stated at the beginning of the proof,
$r$ and $r'$ are relatively prime integers with $r' > 0$ such that $\displaystyle\frac{r}{r'}$ is a 
generator of $F$), it follows that $p^{n''} \nmid r'$ and therefore $\sigma \cdot \beta_{|\sigma|-1}
\notin F$.  Consequently, $\sigma \in \overline{F}_{\beta}$. 

(iii)  We first observe the following.
Suppose $T$ is any text for $F_{\beta}$, where $F = \spn{rq^{-1}}$
for some relatively prime integers $r,q$ with $q > 0$.  
\begin{enumerate}[label=(\Roman*)]
\item Since $\beta(0) = 1$, the value of $r$ can be determined in the limit from $T$ by 
taking the greatest common divisor of all elements $w$ such that $(w) \in \ran(T)$. 
\item For every prime power factor $p_i^{n_i}$ of $q$, there is some $i'$ such that 
$\beta(i') = p_i^{-n_i}$, and therefore $I_{i'+1}(r) \in \ran(T)$.
\item There is a least $j$ such that for all $j' > j$, $\beta(j')$ is of the shape
$p_{\ell}^{-n'}$ for some prime $p_{\ell}$ and $n' \geq 1$ with
$rp_{\ell}^{-n'} \notin F$; in particular, $I_{j'+1}(r) \notin \ran(T)$. 
\item By (ii), there is a minimum $s_0$ such that 
$\overline{F}_{\beta} = \bigcup_{t \geq s_0}\{ \sigma \in \{-t,\ldots,0,\ldots,t\} \times \{-t,\ldots,0,$ $\ldots,t\}\mid 
|\sigma| \leq |\beta^t| \wedge \sigma \cdot \beta^t_{|\sigma|-1} \notin F\}$.
%$s_0$ can be determined in the limit from $T$ and $F$ by always taking the least
%$s'$ such that the range of the current text input is disjoint from 
%the current approximation of $\bigcup_{t \geq s'}\{ \sigma \in \{-t,\ldots,0,\ldots,t\} \times \{-t,\ldots,0,\ldots,t\}: 
%|\sigma| \leq |\beta^t| \wedge \sigma \cdot \beta^t_{|\sigma|-1} \notin F\}$.
\end{enumerate}
Define a learner $M$ as follows.  On input $\delta$, $M$ determines the
greatest common divisor $d$ of the set of all $w$ such that $(w) \in \ran(\delta)$
(if no such $w$ exists, then $M$ just sets $d=0$).
Next, $M$ identifies all $j_0,\ldots,j_{\ell}$ such that $I_{j_k+1}(d) \in \ran(\delta)$
for all $k \in \{0,\ldots,\ell\}$, and it approximates $\beta(j_k)$ by determining
the least $t' \geq |\delta|$ such that $|\beta^{t'}| \geq j_k+1$ and setting
the approximation to be $\beta^{t'}(j_k)$.  
For each $k \in \{0,\ldots,\ell\}$, let $p_{i_k}^{-n_{i_k}}$ be the current approximation
of $\beta(j_k)$, where $n_{i_k} \geq 0$.  $M$ then takes $dp_{h_0}^{-m_0}\ldots p_{h_{l'}}^{-m_{\ell'}}$ 
to be its current guess for a generator of $F$,
where $\{p_{h_0},\ldots,p_{h_{\ell'}}\} = \{p_{i_0},\ldots,p_{i_\ell}\}$ and
for each $k \in \{0,\ldots,\ell'\}$, $m_k$ is the largest number $e'$ such that $p_{h_k}^{-e'} \in 
\{p_{i_0}^{-n_{i_0}},\ldots,p_{i_{\ell}}^{-n_{i_{\ell}}}\}$.   
Having determined a guess for $F$, $M$ finds the least $s_0 \leq |\delta|$ such that the
$|\delta|$-th approximation of
$$
G_{s_0} := \bigcup_{t \geq s_0}\{ \sigma \in \{-t,\ldots,0,\ldots,t\} \times \{-t,\ldots,0,\ldots,t\}\mid 
|\sigma| \leq |\beta^t| \wedge \sigma \cdot \beta^t_{|\sigma|-1} \notin F\},
$$
denoted by $G_{s_0,|\delta|}$, satisfies $G_{s_0,|\delta|} \cap \ran(\delta) = \emptyset$.
If no such $s_0$ exists, then $M$ outputs a co-r.e.\ index for $\emptyset$; otherwise, $M$ outputs a
co-r.e.\ index for $\Z^{<\omega} \sm G_{s_0}$.  By (I), (II) and (III), $M$ on $T$ will correctly identify a generator for 
$F$ in the limit.  Furthermore, defining $s_0$ as in (IV), if $s_0 \geq 1$, then $\ran(T)$ contains
some element in $\bigcup_{t \geq s_0-1}\{ \sigma \in \{-t,\ldots,0,\ldots,t\} \times \{-t,\ldots,0,$ $\ldots,t\}\mid 
|\sigma| \leq |\beta^t| \wedge \sigma \cdot \beta^t_{|\sigma|-1} \notin F\}$; thus, for all $s' < s_0$,
$M$ will reject $\Z^{<\omega} \sm G_{s'}$ and conjecture $\Z^{<\omega} \sm G_{s_0}$ as the correct hypothesis 
in the limit.
\end{proof}

\begin{remark}
The explanatory learner $M$ in the proof of (iii) of Theorem \ref{thm:randomsubringcoreequfinite} is also \emph{conservative}
in the sense that for any two text initial segments $T[n_1]$ and $T[n_2]$ for any $F_{\beta}$, where $n_1 < n_2$,
$M(T[n_1]) \neq M(T[n_2])$ only if $\ran(T[n_2]) \not\subseteq \overline{W}_{M(T[n_1])}$ (we assume
that $M$'s hypothesis space is some fixed numbering $\overline{W}_0,\overline{W}_1,\overline{W}_2,\ldots$ of co-r.e.\ 
subsets of $Z^{<\omega}$).    
\end{remark}

\noindent
We recall that for any prime $p$, a \emph{Pr\"{u}fer $p$-group} (denoted by $\Z(p^{\infty})$) may be defined as the quotient 
of the group of all rational numbers whose denominator is a power of $p$ by $\Z$.  Regarding $\Z(p^{\infty})$ as a subgroup
of $(\mathbb{Q}/\Z,+)$, we define a \emph{random join of Pr\"{u}fer groups} based on any given Martin-L\"{o}f random
sequence $R$ as follows.  As before, suppose $R$ is Martin-L\"{o}f random and is Turing reducible to $K$.  Then
the subgroup $P_R$ is defined to be the join of all $\Z(p_{i_0}^{\infty}),\Z(p_{i_1}^{\infty}),\Z(p_{i_2}^{\infty}),\ldots$
such that for all $j \in \N$, the $i_j$-th bit of $R$ is $1$.  In other words, $P_R$ consists of every fraction (modulo $1$) 
whose denominator is a product of finitely many powers of primes belonging to $\{p_{i_0},p_{i_1},p_{i_2},\ldots\}$.
The next result is the analogue of Theorem \ref{thm:randomsubringcoreequfinite} for $P_R$; the proof is entirely
similar to that of Theorem \ref{thm:randomsubringcoreequfinite}.

\begin{theorem}\label{thm:randomjoinprufercoreequfinite}
Let $R \leq_T K$ be Martin-Löf random w.r.t the Lebesgue measure on $2^\omega$.
Then there is a generating sequence $\beta$ for $P_R$ such that
\begin{enumerate}[label=(\roman*)]
\item equality with respect to $\beta$ is co-r.e.;
\item for any finitely generated subgroup $F$ of $P_R$, the set of representations of $F$ with respect to
$\beta$ is co-r.e.;
\item the class of all sets of representations of finitely generated subgroups of $P_R$ with respect to $\beta$ is explanatorily
learnable using co-r.e.\ indices.  
\end{enumerate}  
\end{theorem}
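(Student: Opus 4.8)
\noindent The plan is to follow the proof of Theorem~\ref{thm:randomsubringcoreequfinite} \emph{mutatis mutandis}, performing all arithmetic modulo $1$ since $P_R$ is a subgroup of $(\mathbb{Q}/\Z,+)$. First I would fix a prime $p$ with $p^{-1}\bmod 1\in P_R$ --- the Martin-L\"{o}f randomness of $R$ guarantees one exists, and it is approximated effectively as the least prime switched on in the current approximation $R_s$ of $R$. Then I would build a one-one limiting-recursive generating sequence $\beta := (b_i)_{i<\omega}$ for $P_R$ in stages, exactly as in the proofs of Theorems~\ref{thm:genseqcoreequality}, \ref{thm:reequalitymod1} and \ref{thm:randomsubringcoreequfinite}: at stage $s+1$, for every prime $p_j$ currently switched on ($R_s(j)=1$) append the least power $p_j^{-m}\bmod 1$ not yet occurring; and for every prime $p_\ell$ that has just dropped out of the approximation, replace each term of the shape $p_\ell^{-m}\bmod 1$ in the current approximation of $\beta$ by $p^{-n}\bmod 1$ for an exponent $n$ so large that (a) $n$ exceeds twice every exponent present, (b) $p^{-n}\bmod 1$ has not occurred before, and (c) every inequality modulo $1$ between two representations with coefficients in $\{-s-1,\dots,s+1\}$ with respect to the current approximation is preserved. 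Condition~(c) is always achievable because, for each of the finitely many relevant pairs of representations, the condition ``$C+E\,p^{-n}\in\Z$'', with $C$ and $E$ rationals depending only on the current data and $E$ the difference of the two coefficients at the replaced position, holds for at most finitely many $n$, which are computed and avoided; this is the only genuine computation in the construction and is identical in spirit to the corresponding step in Theorem~\ref{thm:randomsubringcoreequfinite}. Since $R$ and the guessed prime $p$ both stabilise, each coordinate of $\beta$ changes only finitely often, so $\beta$ is limiting recursive, and by the appending step every $p_j^{-m}\bmod 1\in P_R$ ($m\geq 1$) eventually occurs in $\beta$, so $\beta$ generates $P_R$.

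For part~(i), I would argue verbatim as in Theorem~\ref{thm:reequalitymod1}: the set $\NE_\beta:=\{(\sigma,\sigma')\in\Z^{<\omega}\times\Z^{<\omega}\mid\sigma\cdot\beta_{|\sigma|-1}\neq\sigma'\cdot\beta_{|\sigma'|-1}\}$ equals the r.e.\ union $\bigcup_s\NE^s_\beta$ of the inequalities witnessed in the stage-$s$ approximations; ``$\subseteq$'' holds because the finitely many relevant coordinates of $\beta$ eventually settle, and ``$\supseteq$'' holds because Condition~(c) ensures that a witnessed inequality is never destroyed. Hence equality with respect to $\beta$ is co-r.e.

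Parts~(ii) and~(iii) are in fact shorter than in the subring case because every finitely generated subgroup $F$ of $P_R$ is finite cyclic --- indeed the unique subgroup of $(\mathbb{Q}/\Z,+)$ of some order $q$, a product of powers of switched-on primes (cf.\ the footnote to Theorem~\ref{thm:finitelygeneratedmod1bc}). For~(ii): $\sigma\in F_\beta$ iff $q(\sigma\cdot\beta_{|\sigma|-1})=0$ iff $(q\sigma)\cdot\beta_{|\sigma|-1}=0$, so $F_\beta$ is the preimage of the co-r.e.\ set $\{\tau\mid\tau\cdot\beta_{|\tau|-1}=0\}$ (co-r.e.\ by~(i)) under the recursive map $\sigma\mapsto q\sigma$, hence co-r.e., with a co-r.e.\ index uniformly recursive in $q$. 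For~(iii): the learner, reading a text $T$ for $F_\beta$, computes at stage $s$ the number $q_s:=\operatorname{lcm}$ of the denominators of the rationals $\sigma\cdot\beta^s_{|\sigma|-1}$ over $\sigma\in\ran(T[s])$ (using the current approximation $\beta^s$), and outputs the co-r.e.\ index for $F_\beta$ obtained from $q_s$ as in~(ii). Because $\beta$ is limiting recursive, the text contains a representation of a generator of $F$ (of order exactly $q$), and every element of $F$ has order dividing $q$, the sequence $q_s$ converges to $q$; thus this recursive learner converges to a single correct co-r.e.\ index, and $\cF_\beta$ is explanatorily learnable using co-r.e.\ indices.

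The main obstacle, exactly as in Theorem~\ref{thm:randomsubringcoreequfinite}, is setting up the stage construction of $\beta$ so that the replacement step~(c) can always be carried out while keeping $\beta$ one-one and limiting recursive; once this is done, the verifications of (i)--(iii) go through along the lines above, with (ii) and (iii) short thanks to the finiteness of finitely generated subgroups of $P_R$.
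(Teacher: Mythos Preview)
Your construction of $\beta$ and your argument for~(i) are correct and follow the paper's approach. Your argument for~(ii) is in fact a genuine simplification over the direct analogue of the subring proof: by exploiting that a finitely generated $F\leq P_R$ is finite cyclic of some order $q$ (hence equals the full $q$-torsion of $\mathbb{Q}/\Z$), you reduce~(ii) immediately to~(i) via the recursive map $\sigma\mapsto q\sigma$, bypassing the delicate analysis of replacement terms that the paper carries out for $N_R$ in Theorem~\ref{thm:randomsubringcoreequfinite}(ii).

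There is, however, a real gap in your argument for~(iii). Your learner sets $q_s$ to be the lcm of the denominators of $\sigma\cdot\beta^s_{|\sigma|-1}$ over \emph{all} $\sigma\in\ran(T[s])$ and you assert that $q_s\to q$. This need not hold. An adversarial text can, at each stage $s$, present some $\sigma_s\in F_\beta$ whose support meets a coordinate $k_s$ on which $\beta^s$ has not yet settled (such coordinates exist in abundance, since infinitely many entries of $\beta$ are eventually replaced by large powers of $p$). Then the denominator of $\sigma_s\cdot\beta^s$ need not divide $q$, so $q_s$ overshoots $q$ infinitely often and the learner fails to converge to a single index. The fact that for each \emph{fixed} $\sigma$ the value $\sigma\cdot\beta^s$ eventually stabilises does not rescue the argument, because the set $\ran(T[s])$ over which you take the lcm keeps growing.

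The repair is to restrict attention to representations of a fixed simple shape, exactly as the paper's learner does in Theorem~\ref{thm:randomsubringcoreequfinite}(iii). For instance, let the learner consider only unit vectors $I_{k+1}(1)\in\ran(T[s])$. Since $I_{k+1}(1)\in F_\beta$ iff $b_k\in F$, and $F$ is finite while $\beta$ is one-one, there are only finitely many such $k$; hence all relevant coordinates of $\beta$ eventually settle and all relevant unit vectors eventually appear in the text, so the lcm of the corresponding denominators stabilises. That lcm equals $q$ because $\beta$ contains $p_i^{-m}$ for every prime power $p_i^{m}\mid q$. With this modification your learner does $\Ex$-learn $\cF_\beta$ via the co-r.e.\ indices supplied by your~(ii).
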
     

\noindent
By adapting the proofs of Theorems \ref{thm:reequalitymod1}, \ref{thm:finitelygeneratedmod1bc} and \ref{thm:finitegeneratednotex}, one obtains an almost ``symmetrical'' version of Theorem 
\ref{thm:randomjoinprufercoreequfinite} for $P_R$.

\begin{theorem}\label{thm:randomjoinprufercoreequfinitecor}
Let $R \leq_T K$ be Martin-Löf random w.r.t the Lebesgue measure on $2^\omega$.
Then there is a generating sequence $\beta$ for $P_R$ such that
\begin{enumerate}[label=(\roman*)]
\item equality with respect to $\beta$ is r.e.;
\item for any finitely generated subgroup $F$ of $P_R$, the set of representations of $F$ with respect to
$\beta$ is r.e.;
\item the class of all sets of representations of finitely generated subgroups of $P_R$ with respect to $\beta$ is
$\Bc$-learnable but not $\Ex$-learnable.  
\end{enumerate}  
\end{theorem}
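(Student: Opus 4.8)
The plan is to establish the three items by transferring, respectively, the constructions and arguments in the proofs of Theorems~\ref{thm:reequalitymod1}, \ref{thm:finitelygeneratedmod1bc} and \ref{thm:finitegeneratednotex} to $P_R$, using throughout that $P_R$ is a subgroup of $\mathbb{Q}/\mathbb{Z}$ and that $\mathbb{Q}/\mathbb{Z}$ is the internal direct sum $\bigoplus_{p}\Z(p^{\infty})$ over all primes $p$, so that equality in $P_R$ is \emph{componentwise} (two elements coincide iff their $p$-components coincide for every prime $p$) and, for each prime $p$, $P_R$ contains all of $\Z(p^{\infty})$ or none of it. For item~(i) I would build a limiting recursive generating sequence $\beta=(b_i)_{i<\omega}$ for $P_R$ in stages, exactly as in the proof of Theorem~\ref{thm:reequalitymod1} but now listing the residues $p_i^{-m}\pmod{1}$: at stage $s$ one consults the $s$-th approximation $R^s$ to $R$, appends fresh terms $p_i^{-m}\pmod{1}$ for indices $i$ with $R^s(i)=1$ and small $m$ not yet present, and whenever $R^s(i)$ turns to $0$ one permanently overwrites every term of the shape $p_i^{-m}$ by $0$. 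Since $R\le_T K$ is limiting recursive and Martin-L\"{o}f random, each index $i$ with $R(i)=1$ eventually stabilises, all the $p_i^{-m}$ ($m\ge 1$) end up present and are never deleted, and the spurious terms are eventually zeroed out; hence $\spn{b_i\mid i<\omega}=P_R$.

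To make equality r.e.\ one enumerates, at stage $s$, all pairs $(\sigma_0,\sigma_1)$ over a bounded range that represent the same element with respect to the current finite sequence $\beta_s$, exactly as in Theorem~\ref{thm:reequalitymod1}. This only enumerates correct information: the sole operation the construction performs on an already-placed term is zeroing out all terms attached to some prime $p_i$, and if $\sigma_0\cdot\beta_s\equiv\sigma_1\cdot\beta_s\pmod{1}$ then by componentwise equality the $p_i$-components of the two sides agree, so after zeroing both become $0$ and all other components are untouched, whence the equality persists; appending new terms only introduces new coordinates and also cannot destroy an enumerated equality. Consequently the union over all stages of these finite sets of pairs is exactly the equality relation of $P_R$ with respect to $\beta$, which, together with the routine recursiveness of addition, gives item~(i).

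Item~(ii) and the $\Bc$-learnability assertion in item~(iii) follow the proof of Theorem~\ref{thm:finitelygeneratedmod1bc}. Every finitely generated subgroup $F$ of $P_R$ is finite, being a finitely generated subgroup of $\mathbb{Q}/\mathbb{Z}$; fixing representations $\sigma_0,\dots,\sigma_k$ of its elements, $F_{\beta}=\bigcup_{0\le i\le k}\{\tau\mid(\sigma_i,\tau)\in E_{\beta}\}$ is r.e.\ since $E_{\beta}$ is r.e.\ by item~(i). A $\Bc$-learner $M$ on input $T[s]$ outputs an r.e.\ index for the equality-closure $\{\tau\mid(\exists\sigma\in\ran(T[s]))[(\sigma,\tau)\in E_{\beta}]\}$; on a text for $F_{\beta}$ this conjecture is always a subset of $F_{\beta}$, and it equals $F_{\beta}$ as soon as $\ran(T[s])$ contains a representation of each of the finitely many elements of $F$, so $M$ behaviourally correctly learns the class.

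The failure of $\Ex$-learnability is obtained by the locking-sequence argument of Theorem~\ref{thm:finitegeneratednotex}, now run with the trivial subgroup $\{0\}$ of $P_R$ in the role of $\Z$. Given a hypothetical $\Ex$-learner $N$ for the class, Proposition~\ref{prop:lockingsequence} supplies a locking sequence $\gamma$ for $N$ on $\{0\}_{\beta}$, which is r.e.\ by item~(ii). For each prime $p_\ell$ with $R(\ell)=1$ one has $\{0\}_{\beta}\subsetneq\spn{p_\ell^{-1}}_{\beta}$, so $N$ must change its conjecture on $\gamma\circ\delta$ for some $\delta$ drawn from $\spn{p_\ell^{-1}}_{\beta}$, and then $(p_\ell\sigma)\cdot\beta_{|\sigma|-1}\in\{0\}_{\beta}$ for every $\sigma\in\ran(\delta)$ (a semi-decidable condition); dovetailing over $\ell$ and $\delta$ yields a strictly increasing recursive enumeration $\ell_0<\ell_1<\dots$ of such $\ell$. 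The locking property forces, for each $\ell_j$, some $\sigma\in\ran(\delta)$ with $\sigma\cdot\beta_{|\sigma|-1}\not\equiv 0$ yet $(p_{\ell_j}\sigma)\cdot\beta_{|\sigma|-1}\equiv 0\pmod{1}$, i.e.\ $\sigma\cdot\beta_{|\sigma|-1}$ is a non-zero element of order $p_{\ell_j}$ in $\mathbb{Q}/\mathbb{Z}$, forcing $\Z(p_{\ell_j}^{\infty})\subseteq P_R$ and hence $R(\ell_j)=1$. A recursive enumeration $\ell_0<\ell_1<\dots$ with $R(\ell_j)=1$ for every $j$ contradicts the Martin-L\"{o}f randomness of $R$, by the martingale argument in the proof of Proposition~\ref{prop:recursivegeneratingsequence}. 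The step I expect to require the most care is the bookkeeping in item~(i), namely verifying that the staged overwriting of terms never retroactively falsifies an enumerated equality; this is precisely where the direct-sum decomposition of $\mathbb{Q}/\mathbb{Z}$ does the real work, and it is what makes the Prüfer case cleaner than the $G_R$ case.
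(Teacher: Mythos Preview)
Your proposal is correct and follows essentially the same approach as the paper, which simply states that the result is obtained ``by adapting the proofs of Theorems~\ref{thm:reequalitymod1}, \ref{thm:finitelygeneratedmod1bc} and \ref{thm:finitegeneratednotex}'' without further detail. Your sketch fleshes out exactly these three adaptations, and your observation that the direct-sum decomposition of $\mathbb{Q}/\mathbb{Z}$ makes the equality-preservation step in item~(i) cleaner than in the $G_R$ case is the right diagnosis of where the bookkeeping simplifies.
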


\section{Conclusion and Possible Future Research}

This paper introduced a method of constructing random subgroups of rationals, 
whereby Martin-L\"{o}f random binary sequences are directly encoded into the
generators of the group.  It was shown that if the Martin-L\"{o}f random
sequence associated to a randomly generated subgroup $G$ is limit-recursive,
then one can build a generating sequence $\beta$ for $G$ such that
the word problem for $G$ is co-r.e.\ with respect to $\beta$, as well as another
generating sequence $\beta'$ such that the word problem for $G/\Z$
with respect to $\beta'$ is r.e.  We also showed that every non-trivial finitely generated
subgroup of $G$ has an r.e.\ representation with respect to a suitably chosen
generating sequence for $G$; moreover, the class of all such r.e.\ representations
is behaviourally correctly learnable but never explanatorily learnable.

A question deserving further attention is the extent to which the choice of the generating sequence for a randomly generated subgroup $G$ of rationals influences the learnability of its finitely generated subgroups; in particular, is there a generating sequence $\beta$ for $G$ such that every
non-trivial finitely generated subgroup of $G$ has an r.e.\ representation with respect to $\beta$ and the class of all such representations %of finitely generated subgroups of $G$ 
with respect to $\beta$ is not even behaviourally correctly learnable?
We also did not extend the definition of algorithmic randomness to
\emph{all} Abelian groups; we suspect that such a general definition might be %an impossible task 
out of reach of current methods due to the fact that the isomorphism types of even rank $2$
groups (subgroups of $(\mathbb{Q}^2,+)$) are still unknown.

\end{document}